\def\arg {\mathop{\rm arg}\nolimits}
\def\Re {\mathop{\rm Re}\nolimits}
\def\Ai {{\rm Ai}}
\newtheorem{pro}{{Proposition}}
\newtheorem{remark}{{Remark}}
\newtheorem{lem}{{Lemma}}
\newtheorem{thm}{{Theorem}}
\numberwithin{equation}{section}
\begin{document}

\title{{Asymptotics of   Fredholm determinant solutions of the noncommutative Painlev\'e II equation
   }}

\author[1]{Jia-Hao Du}
\author[2]{Shuai-Xia Xu}
\author[1]{Yu-Qiu Zhao}
\affil[1]{Department of Mathematics, Sun Yat-sen University, Guangzhou, 510275, China}
\affil[2]{Institut Franco-Chinois de l'Energie Nucl\'{e}aire, Sun Yat-sen University, Guangzhou, 510275, China. E-mail: xushx3@mail.sysu.edu.cn}

\date{}

\maketitle

\noindent \hrule width 6.27in \vskip .3cm

\noindent {\bf{Abstract }} 
In this paper, we study the asymptotic behavior of a family of pole-free solutions to the noncommutative Painlev\'e II equation.  These particular solutions can be expressed in terms of  the Fredholm determinant of the matrix version of the classical Airy operator, which are analogous to the Hastings-McLeod solution and the Ablowitz-Segur solution of the classical Painlev\'e II equation. 
  Using the Riemann-Hilbert approach, 
we derive the asymptotics of the Fredholm determinant and the associated particular solutions  $\beta(\Vec{s})$ to the noncommutative Painlev\'e II equation in the regime $\Vec{s}=\left(s+\frac{\tau}{\sqrt{-s}},s-\frac{\tau}{\sqrt{-s}}\right)$ with $\tau\ge 0$  and $s\to-\infty$. The solutions depend on a two by two Hermitian matrix with eigenvalues in the interval $(-1,1)$.
The asymptotics are expressed in terms  of  one parameter family of special solutions of the classical Painlev\'e V equation. Furthermore, we derive the asymptotics, including the connection formulas,  for this one parameter family of solutions of the Painlev\'e V equation both as $ix\to -\infty$ and $x\to 0$.

\vskip .5cm

\noindent {\it{Keywords and phrases:}} 
Asymptotic analysis; Fredholm determinants; noncommutative Painlev\'e II equation; Painlev\'{e} V equation; Riemann-Hilbert problems; Deift-Zhou method. 

\vskip .3cm

\noindent \hrule width 6.27in\vskip 1.3cm

\tableofcontents

\section{Introduction and statement of results}
 
In random matrix theory, a fundamental problem is to determine the limiting distributions of the eigenvalues of large random matrices. 
Many studies have revealed the universality \cite{A,CK,CKV,Deift,DMVZ1,DMVZ2,forrester2010log} of these limiting distributions, among which the Tracy-Widom distribution functions play important roles. Consider Dyson's $\beta$-ensemble  with $\beta >0$ \cite{forrester2010log}, which
can be defined as a Coulomb gas of $N$ charged particles in the space of one dimension. The locations of these $N$ particles $\lambda _1<\lambda_2<...<\lambda_N$ have the following probability density function
\begin{equation}\label{eq:prob}
    p(\lambda_1,...,\lambda_N)d\lambda_1\cdots d\lambda_N=\frac{1}{Z_N}\prod_{1\le j<k\le N}|\lambda_j-\lambda_k|^{\beta}\mathrm{e}^{-\beta\sum_{j=1}^N V(\lambda_{j})}d\lambda_1\cdots d\lambda_N,
\end{equation}
where $Z_N$ is a normalization constant, and  the external field $V(\lambda)$ is assumed to be Gaussian, namely, $V(\lambda)=\frac{\lambda^2}{2}$. For $\beta=1,~2,~4$, the $\beta$-ensembles are known as Gaussian orthogonal, Gaussian unitary and Gaussian symplectic ensembles, respectively. 
For $ \beta >0$, the  Tracy-Widom distribution functions $F_{\beta}(t)$  describe the edge fluctuations of these eigenvalues in the large $N$ limit  \cite{forrester2010log} 
\begin{equation}\label{eq:TWd}
    F_{\beta}(t)=\lim\limits_{N\to\infty}\mathrm{Prob}\left((\lambda_{N}-\sqrt{2N})2^{\frac{1}{2}}N^{\frac{1}{6}}\le t\right).
\end{equation}

For $\beta=1,~2,~4$,  the distribution functions $F_{\beta}(t)$ become the classical Tracy-Widom distributions \cite{TW}. They can be represented  explicitly both as the Fredholm determinant of the Airy kernel  and integral of the Hastings-McLeod solution of the  Painlev\'e II equation.
Taking $\beta=2$ for example,  we denote by $\mathbb{K}_{\mathrm{Ai},t}$ the trace operator acting on  $L^2((t,\infty),\mathbb{R})$ with the Airy kernel
\begin{equation}\label{eq:kAiry}
    K_{Airy}(x,y)=\frac{\Ai(x)\Ai'(y)-\Ai'(x)\Ai(y)}{x-y},
\end{equation}
where $\Ai(z)$ is the Airy function \cite{O},
\begin{equation}
    \Ai(z)=\frac{1}{2\pi}\int_{-\infty}^{+\infty}\mathrm{e}^{-\frac{i}{3}s^3-izs}ds.
\end{equation}
The Tracy-Widom distribution $F_2(t)$ can be expressed by the Fredholm determinant of the Airy kernel
\begin{equation}\label{eq:def F2}
    F_2(t)=\det (Id-\mathbb{K}_{\mathrm{Ai},t}).
\end{equation}
It is remarkable that the Fredholm determinant admits the following integral representation; cf.
\cite{TW},
\begin{equation}\label{eq:int}
   \det (Id-\mathbb{K}_{\mathrm{Ai},t})=\mathrm{exp}\left(-\int_t^{\infty}(s-t)u^2(s)ds\right),
\end{equation}
where $u(t)$ is the Hastings-McLeod solution \cite{hastings} of the second Painlev\'e equation
\begin{equation}\label{eq:PII}
    u''=tu+2u^3,
\end{equation}
determined by the boundary condition
\begin{equation}\label{eq:HMSofPII}
\begin{split}
    u(t)\sim\begin{cases}
        \Ai(t),&t\to+\infty,\\
        \sqrt{-\frac{t}{2}},&t\to-\infty.\end{cases}
    \end{split}
\end{equation}
These representations allow one to derive the following asymptotic expansion as $t\to-\infty$ \cite{ TW}
\begin{equation}\label{eq:gam=1}
    \frac{d}{dt}\ln F_2(t)=\frac{t^2}{4}-\frac{1}{8t}+\mathcal{O}((-t)^{-4}),~~~t\to-\infty.
\end{equation}
Furthermore, we can obtain the asymptotic expansion of  
$F_2(t)$ as $t\to-\infty$ and the explicit expression of the constant term \cite{DIK, E}.

Now consider a thinned process for the Gaussian unitary ensemble, which is obtained by removing each eigenvalue independently with the probability 
$1-\gamma$, $\gamma\in[0,1]$; see \cite{bo09, bo04, bo06}. Then, the distribution function for the largest remaining eigenvalue $\lambda_{max}$  in this thinned process is given by 

\begin{equation}\label{eq:gTWd}
    F_{2}(t;\gamma)=\lim\limits_{N\to\infty}\mathrm{Prob}\left((\lambda_{max}-\sqrt{2N})2^{\frac{1}{2}}N^{\frac{1}{6}}\le t\right).
\end{equation}
Similar to $F_2(t)$, the above distribution admits both   Fredholm determinant representation and integral expression \cite{BCI}:
\begin{equation}\label{eq:gKAiry}
    F_2(t;\gamma)=\det (Id-\gamma \mathbb{K}_{\mathrm{Ai},t})=\mathrm{exp}\left(-\int_t^{\infty}(s-t)u^2(s;\gamma)ds\right),
\end{equation}
whereas the Hastings-McLeod solution of the Painlev\'e II equation is replaced by the Ablowitz-Segur solution with the following asymptotic behavior \cite{Ab,FIKY,O}:
\begin{equation}\label{eq:ABSofPII}
     u(t;\gamma)\sim\begin{cases}
        \sqrt{\gamma}\Ai(t),&t\to+\infty,\\
        (-t)^{-\frac{1}{4}}\sqrt{2\rho}\cos \left(\frac{2}{3}(-t)^{\frac{3}{2}}-\rho\ln (8(-t)^{\frac{3}{2}})+\frac{\pi}{4}-\arg\Gamma(-i\rho)\right),&t\to -\infty,
     \end{cases}
\end{equation}
where $\rho=-\frac{1}{2\pi}\ln (1-\gamma)$ with $\gamma\in [0,1)$.
 ~Substituting  \eqref{eq:ABSofPII} into the expression of $F_2(t;\gamma)$ implies that as $t\to -\infty$ \cite{baik, BCI, Bothner19}
\begin{equation}\label{eq:logTWf}
        \frac{d}{dt} \ln F_2(t;\gamma)=2\rho\sqrt{-t}+\frac{3\rho^2}{2t}-\frac{\rho}{2t}\sin(2\phi(t,\rho))+\mathcal{O}((-t)^{-\frac{7}{4}}),~\gamma\in[0,1),
    \end{equation}    where \begin{equation}
        \phi(t,\rho)=\frac{2}{3}(-t)^{\frac{3}{2}}-\rho\ln (8(-t)^{\frac{3}{2}})+\frac{\pi}{4}-\arg\Gamma\left(-i\rho\right).    \end{equation}

In \cite{ber12}, Bertola and Cafasso  
studied a matrix version of the Airy-convolution operator $Ai_{\Vec{s}}$ acting on $L^2(\mathbb{R}_+,\mathbb{C}^r)$:
\begin{equation}\label{eq:Mat KAiry}
    \begin{split}
         (Ai_{\Vec{s}}\Vec{\boldsymbol{f}})(x):=& \int_{R_+}\textbf{Ai}(x+y;\Vec{s})\Vec{\boldsymbol{f}}(y)dy,\\
    \textbf{Ai}(x;\Vec{s}):=&
    \int_{\gamma_+}\mathrm{e}^{\theta(\mu)}C\mathrm{e}^{\theta(\mu)}\mathrm{e}^{ix\mu}\frac{d\mu}{2\pi}=\left(c_{jk}\Ai(x+s_j+s_k)\right)_{j,k=1}^r,\\
    \theta(\mu):=&\frac{i\mu^3}{6}I_r+\mathrm{diag}(s_1,s_2,...,s_r)i\mu ,
    \end{split}
\end{equation}
where $\Vec{\boldsymbol{f}}=(f_1,...,f_r)^T$, $\Vec{s}=(s_1,...,s_r)$, the matrix $I_r$ is a $r \times r$ identity matrix, $C=\left(c_{jk}\right)_{j,k=1}^r$ is a $r\times r$ Hermitian matrix and the contour $\gamma_+$ is a contour contained in the upper half plane and extending to infinity along the directions $\arg z=\frac{\pi}{2}\pm\frac{\pi}{3}$. For $r=2$, we can choose a unitary matrix $U$ such that 
\begin{equation}\label{eq:def C}
    C=U\begin{pmatrix}
        \lambda_1 & 0\\
        0 & \lambda_2
    \end{pmatrix}U^{-1},
\end{equation}
where $\lambda_1$ and $\lambda_2$ are the eigenvalues of the Hermitian matrix $C$ with $\lambda_1, \lambda_2 \in[-1,1]$. 
The unitary matrix can be written as 
    \begin{equation}\label{eq:def U}
        U=\begin{pmatrix}
        \omega & \sqrt{1-|\omega|^2}\\
        -\sqrt{1-|\omega|^2} & \overline{\omega}
         \end{pmatrix},       
    \end{equation}
    where each column corresponds to the eigenvector of $C$. The parameter $\omega\in\mathbb{C}$, $|\omega|\le 1$ and $\overline{\omega}$ denotes the complex conjugation of $\omega$. 

In \cite{ber12}, Bertola and Cafasso
showed that the square of the matrix Airy-convolution operator \eqref{eq:Mat KAiry} is totally positive and thus defines a determinantal point process, if $C$ is a Hermitian matrix with eigenvalues in $[-1,1]$. The corresponding Fredholm determinant then gives the gap probability for this process. Furthermore, they showed that the Fredholm determinant \eqref{eq:int of FD} is related to a family of particular solutions for the noncommutative Painlev\'e II equation 
\begin{equation}\label{eq:int of FD}
    \det(Id-Ai_{\Vec{s}}^2)=\mathrm{exp}\left(-4\int_s^{\infty}(t-s)Tr(\beta^2(t+\Vec{\eta}))dt\right),
\end{equation}
where 
$\Vec{\eta}=(\eta_1,...,\eta_r):=(s_1-s,...,s_r-s),~s=\frac{1}{r}\sum_{j=1}^{r}s_j$. Here, the  $r\times r$ matrix-valued functions $\beta(\Vec{s})=\beta(\Vec{s};C)$ represents a family of solutions to the following noncommutative Painlev\'e II (PII for short) equation \cite{ber12,ber18}:
\begin{equation}\label{eq:NC PII}    D^{2}\beta(\Vec{s})=4\left(\textbf{s}\beta(\Vec{s})+\beta(\Vec{s})\textbf{s}\right)+8\beta(\Vec{s})^{3},~ \textbf{s}:=\mathrm{diag}(s_1,...,s_r), ~D:=\sum_{i=1}^{r}\frac{\partial}{\partial s_i},
\end{equation}
which first appeared in \cite{rr}. If $s=\frac{1}{r}\sum_{j=1}^{r}s_j\to +\infty $ 
and $|s_j-s|\le m,~j=1,2,...,r$ are kept fixed, then $\beta(\Vec{s})$ satisfy the asymptotic behavior
\begin{equation}\label{eq:asy NC PII at +infty}
    (\beta)_{kl}(\Vec{s})\sim -c_{kl}\Ai(s_k+s_l),~~s\to+\infty.
\end{equation}
Additionally, if $C$ is Hermitian then $\beta(\Vec{s})$ is pole-free for all $\Vec{s}\in\mathbb{R}^r$ if and only if the eigenvalues of $C$ are in $[-1,1]$; see \cite[Theorem 5.3]{ber12}. When $r=1$, the determiant in \eqref{eq:int of FD} reduces to the celebrated Tracy-Widom distribution \eqref{eq:int}. For  general $r\geq 2$, the determinant in \eqref{eq:int of FD} is the noncommutative analogue of the Tracy-Widom distribution and the corresponding special solutions $\beta(\Vec{s})$ to the noncommutative PII equation are the noncommutative  analogue of the classical Hastings-McLeod solution and Ablowitz-Segur solution.  
We denote
\begin{equation}\label{eq:def delta(s)}
    D{H}_{II}(\Vec{s})=-2i\beta(\Vec{s})^2.
\end{equation}
If $r=1$, ${H}_{II}(\Vec{s})$ is the Hamiltonian for the classical PII equation \eqref{eq:PII}. For  general $r\geq 2$, we can regard ${H}_{II}(\Vec{s})$ as the Hamiltonian for the noncommutative PII equation $\beta(\Vec{s})$. Since $C$ is Hermitian, we then have the following symmetry relations for $\beta(\Vec{s})$ and ${H}_{II}(\Vec{s})$; cf. \cite{ber12},
\begin{equation}
    \beta(\Vec{s})=\beta(\Vec{s})^*,~{H}_{II}(\Vec{s})=-{H}_{II}(\Vec{s})^*,
\end{equation}
where $\beta(\Vec{s})^*$ denotes the Hermitian conjugate of $\beta(\Vec{s})$. 

It is remarkable that  the Tracy-Widom distribution functions  $F_{\beta}(t)$ defined in \eqref{eq:TWd} with the even values of parameter $\beta$ admit integral expression in terms of the  eigenvalues of a particular solution of the noncommutative PII equation; see \cite[Eq. (3.29)]{Ru16}, \cite{Ru15} and  \cite[ Eq. (1.21)]{IP}.  We mention that the eigenvalues of the noncommutative PII equation satisfy the corresponding second Calogero-Painlev\'e system as shown in \cite{ber18}. Recently, Its and Prokhorov \cite{IP}
studied the  tail  asymptotics of  the  $\beta=6$ Tracy-Widom distribution and the associated  solution of the noncommutative PII equation
by using the Riemann-Hilbert problem representation for the noncommutative PII equation discovered in \cite{ber12,ber18}. It should be pointed out that the Stokes multipliers of the Riemann-Hilbert problem considered in \cite{IP} differ from those in our case.

In the present work, we study the asymptotics of the Fredholm determinant \eqref{eq:int of FD} and the related special solutions of the noncommutative PII equation \eqref{eq:NC PII} for the first nontrivial case $r=2$. In the regime $\Vec{s}=\left(s+\frac{\tau}{\sqrt{-s}},s-\frac{\tau}{\sqrt{-s}}\right)$ with $\tau\ge0$ and $s\to-\infty$, we show that  the asymptotics of the Fredholm determinant \eqref{eq:int of FD} and the associated solutions to the noncommutative PII equation can be expressed in terms  of a family of special solutions of the classical Painlev\'e V   (PV for short)  equation. Furthermore, we derive the asymptotics of  this one parameter family of solutions of the PV equation both as $ix\to -\infty$ and $x\to 0$. For general $r>2$, it is expected that the asymptotics of the Fredholm determinant \eqref{eq:int of FD} would be described by the noncommutative PV equation; see Remark \ref{re:NC PV} below for the discussion. We will leave this problem for future investigation. It is noted that the  asymptotics of the noncommutative PII equation is also considered recently in \cite{LYZH} under a different asymptotic regime than the one addressed in our study.

\subsection{Asymptotics of  one parameter family of  solutions of the  PV equation}
To state our main results, we need to introduce the PV equation. Let $u(x), v(x)$ and $ y(x)$ satisfy the following nonlinear ODEs
\begin{equation}\label{eq:PV for u}
    x\frac{du}{dx}=-xu-2v(u-1)^2-\frac{1}{2}(u-1)[(\theta_0-\theta_1+\theta_{\infty})u-(3\theta_0+\theta_1+\theta_{\infty})],
\end{equation}
\begin{equation}\label{eq:PV for v}
      x\frac{dv}{dx}=uv[v+\frac{1}{2}(\theta_0-\theta_1+\theta_{\infty})]-\frac{1}{u}(v+\theta_0)[v+\frac{1}{2}(\theta_0+\theta_1+\theta_{\infty})],
\end{equation}
\begin{equation}\label{eq:PV for y}
    x\frac{d \ln y}{dx}=-2v-\theta_0+u[v+\frac{1}{2}(\theta_0-\theta_1+\theta_{\infty})]+\frac{1}{u}[v+\frac{1}{2}(\theta_0+\theta_1+\theta_{\infty})].
\end{equation}
These equations are the compatibility condition, namely $\Phi_{\lambda x}=\Phi_{x \lambda}$, of the following Lax pair for the $2 \times 2$ matrix-valued function $\Phi(\lambda,x)$ \cite{FIKY,FMZ,XZ}:
\begin{equation}\label{eq:lax pair L}
    \frac{\partial \Phi}{\partial \lambda}=L\Phi ,~~~ L(\lambda,x)=-\frac{x}{2}\sigma_3+\frac{L_1(x)}{\lambda}+\frac{L_2(x)}{\lambda-1}, 
\end{equation}
and
\begin{equation}\label{eq:lax pair U}
      \frac{\partial \Phi}{\partial x}=U\Phi
      ,~~~U(\lambda,x)=-\frac{\lambda}{2}\sigma_3+B_0(x),
\end{equation}
with
\begin{equation}\label{eq:def L_1}
    L_1(x)=\begin{pmatrix}
        v+\frac{\theta_0}{2} & -y(v+\theta_0)\\
        \frac{v}{y} & -(v+\frac{\theta_0}{2})
    \end{pmatrix},
\end{equation}
\begin{equation}\label{eq:def L_2}
    L_2(x)=\begin{pmatrix}
        -v-\frac{1}{2}(\theta_0+\theta_{\infty}) & yu(v+\frac{1}{2}(\theta_0-\theta_1+\theta_{\infty}))\\
        -\frac{1}{yu}(v+\frac{1}{2}(\theta_0+\theta_1+\theta_{\infty})) & v+\frac{1}{2}(\theta_0+\theta_{\infty})
    \end{pmatrix},
\end{equation}
\begin{equation}
    B_0(x)=\begin{pmatrix}
        0 & \frac{y}{x}[v+\theta_0-u(v+\frac{1}{2}(\theta_0-\theta_1+\theta_{\infty}))]\\
        -\frac{1}{xy}[v-\frac{1}{u}(v+\frac{1}{2}(\theta_0+\theta_1+\theta_{\infty}))] & 0
    \end{pmatrix}.
\end{equation}

Using equation \eqref{eq:PV for v} to eliminate $v(x)$ in \eqref{eq:PV for u}, it follows that $u(x)$ satisfies the PV equation

\begin{multline}\label{eq:PV}
    \frac{d^{2} u}{d x^2}=\left(\frac{1}{2u}+\frac{1}{u-1}\right)\left(\frac{du}{dx}\right)^2-\frac{1}{x}\frac{du}{dx}+\frac{(u-1)^2}{8x^2}\left((\theta_0-\theta_1+\theta_{\infty})^2u-\frac{(\theta_0-\theta_1-\theta_{\infty})^2}{u}\right)\\
    -\frac{(1-\theta_0-\theta_1)u}{x}-\frac{u(u+1)}{2(u-1)}.
    \end{multline}
Moreover, we have the  Hamiltonian for the PV equation \cite{JM}
\begin{equation}\label{eq:def for HV}
    H_V(x)=\frac{1}{x}\left(v-\frac{1}{u}(v+\frac{\theta_0+\theta_1+\theta_{\infty}}{2})\right)\left(v+\theta_0-u(v+\frac{\theta_0-\theta_1+\theta_{\infty}}{2})\right)-v-\frac{\theta_0+\theta_{\infty}}{2}.
\end{equation}
For our application, we have 
\begin{equation}\label{eq:para condition}
    \theta_0=-\theta_1=-2\alpha\in i\mathbb{R},~ \theta_{\infty}=0.
\end{equation}
For later use, we define the following function $\hat{u}(x)$, which is related to the PV equation
\begin{equation}\label{eq:def hatu}
    \hat{u}(x)=a_0b_1y\left(\frac{v-2\alpha}{v}-u\right), 
    \end{equation}
where $a_0(x),~b_1(x)$ are defined in \eqref{eq:analyticfac}.

We prove the following existence and asymptotics for one parameter family of solutions of the PV equation. 
\begin{thm}\label{thm:2}
    Let $\alpha\in i\mathbb{R}$, $\omega\in\mathbb{C}$ and $|\omega|\le 1$. There exists a $\omega$-family of pole-free solutions on the imaginary axis to the Painlev\'e V system \eqref{eq:PV for u}-\eqref{eq:PV for y}, which satisfies the following asymptotic behavior
\begin{align}
 &u(x)=\mathrm{e}^{-x+4\mu\ln (-ix)}\frac{\Gamma(1+\alpha-\mu)}{\Gamma(1+\mu-\alpha)}\frac{\Gamma(1-\alpha-\mu)}{\Gamma(1+\alpha+\mu)}+\mathcal{O}(x^{-1}),~ix\to-\infty,\label{eq:asyofu}\\
&v(x)=\alpha+\mu+\frac{2(\mu^2-\alpha^2)}{x}(2\sin^2{d(x)}-1)+\mathcal{O}(x^{-2}),~ix\to-\infty,\label{eq:asyofv}\\
&y(x)=\frac{\sqrt{1-|\omega|^2}}{\overline{\omega}}\mathrm{e}^{\pi i\alpha}\mathrm{e}^{-2\mu\ln(-ix)}\frac{\Gamma(1+\alpha+\mu)}{\Gamma(1+\alpha-\mu)}+\mathcal{O}(x^{-1}),~\omega\neq 0,~ix\to-\infty,\label{eq:asyofy}
\end{align}
\begin{equation}\label{eq:asyofhu}
    \begin{aligned}
        \hat{u}(x)=\mathrm{e}^{-2\alpha\ln(-ix)+2i\arg\Gamma(2\alpha)+i\arg\Gamma(\mu-\alpha)-i\arg\Gamma(\mu+\alpha)}\Big(&\cos d(x)-\frac{\mu}{\alpha}i\sin d(x)\Big)\\&+\mathcal{O}(x^{-1}),~ix\to-\infty,
    \end{aligned}
\end{equation}
\begin{align}
  &u(x)=1+\mathcal{O}(x),~x\to0,\label{eq:asyofu0} \\ 
     &v(x)= 2(1-|\omega|^2)\alpha+\mathcal{O}(x),~x\to0,\label{eq:asyofv0}\\
     &y(x)=\frac{\sqrt{1-|\omega|^2}}{\overline{\omega}}+\mathcal{O}(x),~\omega\neq 0,~ x\to0,\label{eq:asy y0}\\
     &\hat{u}(x)=1+\mathcal{O}(x),~x\to0,\label{eq:asy hatu 0}
\end{align}
    where 
\begin{equation}\label{eq:defofd}
        d(x)=-\frac{ix}{2}+2\mu i\ln(-ix)+\arg{\Gamma(\mu-\alpha)}-\arg{\Gamma(-\alpha-\mu)},
    \end{equation}
    and the connection formula between $\mu$ and $\omega$ is given by
    \begin{equation}\label{eq:connection formula}
        \mu=\frac{1}{2\pi i}\ln\left((1-|\omega|^2)\mathrm{e}^{2\pi i\alpha}+|\omega|^2\mathrm{e}^{-2\pi i\alpha}\right).
    \end{equation}
Moreover, the associated Hamiltonian $H_V(x)$ is pole-free for $x\in i\mathbb{R}$ and has the following asymptotic behavior
\begin{equation}\label{eq:asyofHV}
    H_V(x)=-\mu-\frac{2}{x}(\alpha^2-\mu^2)+\mathcal{O}(x^{-2}),~ix\to-\infty,
\end{equation}
and 
\begin{equation}
     H_V(x)=(2|\omega|^2-1)\alpha+\mathcal{O}(x),~x\to0. 
\end{equation}
Particularly, the functions $v(x)$ and $H_V(x)$ are purely imaginary on $i\mathbb{R}$.
\end{thm}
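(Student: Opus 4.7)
The plan is to encode the Painlev\'e V system \eqref{eq:PV for u}--\eqref{eq:PV for y} in a Riemann--Hilbert problem via the isomonodromy deformation associated with the Lax pair \eqref{eq:lax pair L}--\eqref{eq:lax pair U}. Under the parameter choice \eqref{eq:para condition}, the $\lambda$-equation has regular singularities at $\lambda=0,1$ with exponents $\pm\alpha$ (one reads off from \eqref{eq:def L_1}--\eqref{eq:def L_2} that $\det L_j=-\alpha^2$) and an irregular singularity of Poincar\'e rank one at $\lambda=\infty$ with trivial formal monodromy. Fix monodromy data parameterized by $(\alpha,\omega)$, chosen so that the resulting RHP is the one that arises in the noncommutative PII analysis from the earlier part of the paper. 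The coefficients $u,v,y,\hat u$ and the Hamiltonian $H_V$ will be read off from the first two terms of the large-$\lambda$ expansion of the RHP solution, together with the definitions \eqref{eq:def L_1}--\eqref{eq:def L_2}, \eqref{eq:def hatu} and \eqref{eq:def for HV}. The existence and pole-freeness on $i\mathbb{R}$ then follow from a standard vanishing-lemma / Malgrange divisor argument, using the fact that $\alpha\in i\mathbb{R}$ and $|\omega|\le 1$ force the relevant jump matrices to be compatible with a Hermitian symmetry.

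For the asymptotics as $ix\to-\infty$, I would perform a Deift--Zhou steepest descent analysis. The leading exponent in $U(\lambda,x)$ is $-\frac{\lambda}{2}\sigma_3$, so the conjugated jumps are controlled by $\Re(x\lambda)$; one opens lenses along the appropriate Stokes contours so that jumps away from $\lambda=0$ and $\lambda=1$ become exponentially close to the identity. The global (outer) parametrix on $\mathbb{C}\setminus[0,1]$ is solved explicitly by a diagonal power-type model dictated by the exponents $\pm\alpha$, while local parametrices at $\lambda=0$ and $\lambda=1$ are constructed from confluent hypergeometric (Whittaker/Kummer) functions, which is the canonical model whenever power-type jumps of purely imaginary index collide with an exponential from infinity. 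Matching the CHF parametrix to the outer parametrix produces Gamma function ratios $\Gamma(1\pm\alpha\pm\mu)/\Gamma(1\mp\alpha\mp\mu)$ and the logarithmic phase $2\mu\ln(-ix)$; extracting the first two coefficients of the asymptotic expansion at $\lambda=\infty$ and substituting into \eqref{eq:def L_1}--\eqref{eq:def L_2}, \eqref{eq:def hatu} and \eqref{eq:def for HV} yields \eqref{eq:asyofu}--\eqref{eq:asyofhu} and \eqref{eq:asyofHV}. The parameter $\mu$ enters here as the formal monodromy exponent of the CHF model and is therefore intrinsically determined by the Stokes/connection data of the RHP.

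For the $x\to 0$ asymptotics, I would observe that in both $L$ and $U$ all exponential factors degenerate to the identity, so that the RHP reduces to a constant-coefficient problem on $[0,1]$ with pure power-type jumps of index $\pm\alpha$ and a single $\omega$-dependent connection matrix. This limiting problem admits a closed-form solution in terms of Gamma functions, from which the finite limits $u(0)=\hat u(0)=1$, $v(0)=2(1-|\omega|^2)\alpha$, $y(0)=\sqrt{1-|\omega|^2}/\overline{\omega}$ and $H_V(0)=(2|\omega|^2-1)\alpha$ are read off. The connection formula \eqref{eq:connection formula} is then derived by equating the two computations of the same Stokes/trace data: the Stokes matrix produced by the large-$ix$ CHF parametrix depends on $(\alpha,\mu)$ through its formal-monodromy exponent, while the corresponding loop in the constant-coefficient small-$x$ problem depends on $(\alpha,\omega)$; their trace identity gives $e^{2\pi i\mu}=(1-|\omega|^2)e^{2\pi i\alpha}+|\omega|^2 e^{-2\pi i\alpha}$. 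The Schwarz symmetry $\Phi(\bar\lambda,\bar x)\mapsto\sigma\Phi(\lambda,x)\sigma$ inherited from $\alpha\in i\mathbb{R}$ and $x\in i\mathbb{R}$ gives the claimed purely imaginary nature of $v(x)$ and $H_V(x)$.

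The main obstacle is the careful bookkeeping of the constant phases $\arg\Gamma(\mu\pm\alpha)$, $\arg\Gamma(2\alpha)$ appearing in \eqref{eq:asyofv}, \eqref{eq:asyofhu} and in the definition \eqref{eq:defofd} of $d(x)$: these come from the matrix entries of the CHF parametrix and the branch choices made when opening lenses, and getting them right is exactly what makes the connection formula \eqref{eq:connection formula} emerge in closed form. A secondary but nontrivial point is establishing solvability of the RHP uniformly on the full imaginary axis, ensuring pole-freeness of $u,v,y,\hat u,H_V$ there; this is handled by the standard vanishing lemma applied separately in a neighborhood of $x=0$, a neighborhood of $ix=-\infty$, and on compact subsets in between, which are glued together by continuity of the monodromy data.
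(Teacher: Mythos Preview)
Your proposal is correct and follows essentially the same route as the paper: a vanishing lemma for solvability of the model RH problem on $i\mathbb{R}$ (giving pole-freeness and the Schwarz symmetry that forces $v,H_V\in i\mathbb{R}$), Deift--Zhou steepest descent with confluent hypergeometric local parametrices at $\lambda=0,1$ for $ix\to-\infty$, and an explicit solution at $x=0$ for the small-$x$ behaviour.

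Two small clarifications relative to the paper's actual mechanism. First, the exponent $\mu$ and hence the connection formula \eqref{eq:connection formula} do not come from matching Stokes data between the large- and small-$x$ regimes: $\mu$ is simply the exponent of the global parametrix, fixed by the diagonal factor in the UL-decomposition of the jump $M=U\mathrm{e}^{2\pi i\alpha\sigma_3}U^{-1}$, i.e.\ $\mu=\tfrac{1}{2\pi i}\ln m_{22}$ with $m_{22}=(1-|\omega|^2)\mathrm{e}^{2\pi i\alpha}+|\omega|^2\mathrm{e}^{-2\pi i\alpha}$. Second, the $x=0$ problem is solved by the elementary expression $\Phi(\lambda,0)=U\bigl(\lambda/(\lambda-1)\bigr)^{-\alpha\sigma_3}U^{-1}$ with no Gamma functions involved; and the vanishing lemma is proved once for all $x\in i\mathbb{R}$, not separately on three regimes.
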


\begin{remark}
   We have the symmetry relations for this family of solutions to the PV equation
   $$u(-x)=u(x)^{-1}, ~v(-x)=-v(x), ~y(-x)=\mathrm{e}^xy(x)u(x);$$ 
 see  Remark \ref{re:symmetry -x}. Thus, 
   the asymptotic expansions of the solutions as $ix\to+\infty$ can be obtained using these symmetry relations and Theorem \ref{thm:2}.
\end{remark}
\begin{remark}
    If $\alpha=0$, we have $\hat{u}(x)=|\omega|^2\mathrm{e}^{\frac{x}{2}}+(1-|\omega|^2){e}^{-\frac{x}{2}}$; see Remark \ref{re:C diagonal}. This result is consistent with the asymptotic expansion  \eqref{eq:asyofhu} when taking the limit $\alpha\to0$.
\end{remark}
\subsection{Asymptotics of the Fredholm determinant and the noncommutative PII equation}\label{sec: results}

We will derive the asymptotics of the Fredholm determinant  \eqref{eq:int of FD}  and the associated particular solutions to the noncommutative PII equation  \eqref{eq:NC PII} 
as the variables tend to negative infinity at certain related speed. The asymptotics are expressed in terms  of  the one parameter family of special solutions of the classical PV equation with the properties given in Theorem \ref{thm:2}.  The results are sated in the following theorems.
\begin{thm}\label{thm:4}
 Let $Ai_{\Vec{s}}$ denote the matrix version of the Airy-convolution operator acting on $L^2(\mathbb{R}_+,\mathbb{C}^2)$ as defined in \eqref{eq:Mat KAiry}, where $\Vec{s}=\left(s+\frac{\tau}{\sqrt{-s}},s-\frac{\tau}{\sqrt{-s}}\right)$ with $\tau\ge 0$ and $s<0$. The eigenvalues of the Hermitian matrix $C$ in \eqref{eq:def C}, namely  $\lambda_1$ and $ \lambda_2$,  are in $(-1,1)$, and the eigenvectors of $C$ involve a parameter $\omega$, $\omega\in\mathbb{C}$ and $|\omega|\le 1$ as given in \eqref{eq:def U}. Then, as $s\to -\infty$, we have
\begin{multline}\label{eq:asy of FD}
   \partial_s\ln\det(Id-Ai^2_{\Vec{s}})={4\sqrt{2}}(\rho_1+\rho_2)\sqrt{-s}+\frac{3}{2s}(\rho_1^2+\rho_2^2)-\frac{1}{2s}\Bigg(K(\tau)\left(\rho_1\cos(d_1(s,\tau))+\rho_2\cos(d_2(s,\tau))\right)\\
\\+2(K(\tau)-1)\sqrt{\rho_1\rho_2}\cos\left(\frac{d_1(s,\tau)+d_2(s,\tau)}{2}\right)
-8\sqrt{2}\tau iH_V(4\sqrt{2}\tau i)\Bigg)+\mathcal{O}\left((-s)^{-\frac{7}{4}}\right),
\end{multline}
where 
\begin{equation}\label{eq:def d_1}
         d_j(s,\tau)=\frac{8}{3}\sqrt{2}(-s)^{\frac{3}{2}}-{2\rho_j}\ln(2^{\frac{9}{2}}(-s)^{\frac{3}{2}})-2\arg\Gamma\left(-i{\rho_j}\right)+2(-1)^j\arg\hat{u}(4\sqrt{2}\tau i),
\end{equation}
\begin{equation}\label{eq:def K}
    K(\tau)= 1+\frac{\sqrt{2}\tau i}{\alpha^2}\left(H_V(4\sqrt{2}\tau i)-\alpha+v(4\sqrt{2}\tau i)\right),~\alpha\neq0,~K(\tau)=1,~\alpha=0,
\end{equation}
and
\begin{equation}\label{eq:defv1,v2}
    \alpha=\frac{1}{4\pi i}\ln\left(\frac{1-\lambda_1^2}{1-\lambda_2^2}\right),~\rho_j=-\frac{1}{2\pi}\ln(1-\lambda_j^2),~~~j=1,2.
\end{equation}
Here  $v(x)$ is the solution of the PV system \eqref{eq:PV for u}-\eqref{eq:PV for y}  and $H_V(x)$ is  the associated  Hamiltonian with the properties specified in Theorem \ref{thm:2} and $\hat{u}(x)$  is defined by the relation \eqref{eq:def hatu}.
\end{thm}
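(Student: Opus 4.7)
The strategy is to apply the Deift--Zhou nonlinear steepest descent method to the Riemann--Hilbert problem (RHP) associated with the matrix Airy operator $Ai_{\vec s}$, equivalently with the noncommutative PII function $\beta(\vec s)$, as constructed in \cite{ber12,ber18}, in the regime where $s_1 = s + \tau/\sqrt{-s}$ and $s_2 = s - \tau/\sqrt{-s}$ coalesce as $s\to -\infty$. By the standard log-derivative formula for integrable-operator Fredholm determinants, $\partial_s\ln\det(Id - Ai^2_{\vec s})$ is expressed through the residue coefficients of the RHP solution $Y(\lambda;\vec s)$, which are in turn controlled by $\beta(\vec s)$ and the noncommutative Hamiltonian $H_{II}(\vec s)$ via \eqref{eq:def delta(s)}. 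The task therefore reduces to a precise asymptotic analysis of this RHP; the integral identity \eqref{eq:int of FD}, combined with $DH_{II}=-2i\beta^2$ and an integration by parts in $t$, gives the compatible bookkeeping between $\mathrm{Tr}\,\beta^2$, $\mathrm{Tr}\,H_{II}$, and the logarithmic derivative of the determinant used to assemble the final formula.

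\textbf{Steepest descent with a PV local parametrix.} Rescaling the spectral variable as $\lambda = \sqrt{2(-s)}\,\zeta$ and introducing a cubic $g$-function, I would open lenses so that all jumps off the real axis decay away from the two turning points $\zeta = \pm 1$. Near $\zeta = +1$ a standard Airy parametrix suffices. Near $\zeta = -1$, however, the two shifts $s_{1,2}$ collapse into a single local coordinate of order one, specifically the fixed value $x = 4\sqrt 2\,\tau i$, so the two ``Airy edges'' merge; the appropriate local model is precisely the RHP solved by the PV Lax pair \eqref{eq:lax pair L}--\eqref{eq:lax pair U} with parameters as in \eqref{eq:para condition}. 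Comparing Stokes multipliers of this model with the monodromy of the matrix Airy operator, which is determined by $\lambda_1,\lambda_2,\omega$ through $C$ in \eqref{eq:def C}, forces the parametric identifications \eqref{eq:defv1,v2} and the connection formula \eqref{eq:connection formula}, and singles out exactly the pole-free one-parameter family of PV solutions whose existence and asymptotics were established in Theorem \ref{thm:2}. The outer global parametrix is diagonal with Szeg\H{o}-type factors controlled by $\rho_1,\rho_2$; on its own it reproduces the leading $4\sqrt 2(\rho_1+\rho_2)\sqrt{-s}$ and $\tfrac{3}{2s}(\rho_1^2+\rho_2^2)$ terms of \eqref{eq:asy of FD}, in direct analogy with the scalar Ablowitz--Segur computation \cite{BCI,Bothner19}.

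\textbf{Matching and extraction of the asymptotics.} Matching the PV local parametrix to the outer parametrix on the boundary of a small disk around the coalescing point produces a small-norm RHP for $R$, from whose first subleading coefficient one extracts $\mathrm{Tr}\,\beta^2(\vec s)$ as an explicit finite sum built from the entries of the PV parametrix, evaluated at $x = 4\sqrt 2\,\tau i$. Substitution into the identity of the first paragraph and regrouping then yields the right-hand side of \eqref{eq:asy of FD}: the non-oscillatory contribution $-8\sqrt 2\,\tau i\,H_V(4\sqrt 2\tau i)$ arises from the diagonal (Hamiltonian) part of the parametrix through \eqref{eq:def for HV}; the two single-frequency oscillations $\cos d_j(s,\tau)$ arise from its off-diagonal entries carrying the phases $\pm 2\arg\hat u(4\sqrt 2\tau i)$ through the identity \eqref{eq:def hatu}; and the cross term $2(K(\tau)-1)\sqrt{\rho_1\rho_2}\cos((d_1+d_2)/2)$ comes from the interaction between the two eigenvectors of $C$ encoded in the unitary $U$ of \eqref{eq:def U}.

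\textbf{Main obstacle.} The principal technical difficulty will be the bookkeeping of this unitary conjugation so that the effective coupling coefficient $K(\tau)$, its explicit identity \eqref{eq:def K} in terms of $H_V$ and $v$, and the degenerate limit $\alpha\to 0$ (where $K(\tau)\equiv 1$ must emerge by continuity) all appear with the correct signs and normalizations; this relies crucially on both the large-$|x|$ expansions \eqref{eq:asyofu}--\eqref{eq:asyofhu} of the PV solution, needed for the matching on the disk boundary, and on the regular behaviour \eqref{eq:asyofu0}--\eqref{eq:asy hatu 0} at $x=0$, needed to handle the limit $\tau\to 0$ without losing uniformity of the error term $\mathcal{O}((-s)^{-7/4})$.
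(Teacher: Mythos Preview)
Your overall strategy --- Deift--Zhou steepest descent for the $4\times4$ RHP associated with $\Psi$ --- is the right one, but you have misidentified where the Painlev\'e V structure enters, and this causes the parametrix construction to be wrong in a way that would not close. In the paper's analysis the PV model problem is not a \emph{local} parametrix near one saddle point; it is the \emph{global} parametrix on the whole cut $(-\sqrt2,\sqrt2)$. After the lens-opening the jump on the cut is the constant block-diagonal matrix $S_D=\mathrm{diag}(I-C^2,(I-C^2)^{-1})$, while the behaviour at infinity carries the exponential $\mathrm{e}^{-\tau i z\,\sigma_3\otimes\sigma_3}$; the interaction between the non-scalar jump (through the unitary $U$) and this $\tau$-dependent exponential is precisely what forces each $2\times2$ block of the global parametrix to be $\Phi\bigl(\tfrac{z+\sqrt2}{2\sqrt2},\,4\sqrt2\,\tau i\bigr)$, the PV RHP with variable $x=4\sqrt2\,\tau i$ held fixed. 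A purely diagonal Szeg\H{o}-type outer solution, as you propose, would solve the RHP only when $C^2$ is diagonal or $\tau=0$; in general it does not even satisfy the required asymptotics at infinity.

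Correspondingly, the \emph{local} parametrices at \emph{both} saddle points $z=\pm\sqrt2$ are built from parabolic-cylinder functions, not from Airy functions at one point and the PV model at the other: the picture of ``two Airy edges merging'' does not occur here, since $s_1$ and $s_2$ are simultaneous shifts in a single $4\times4$ RHP, not two separate edges. The oscillatory phases $d_j(s,\tau)$ come from these PC parametrices, while the PV data $H_V,v,\hat u$ enter through the analytic prefactors $\Phi_0^{(0)},\Phi_0^{(1)},\Phi_1^{(0)},\Phi_1^{(1)}$ of the global parametrix that conjugate the PC contributions; the identity \eqref{eq:def K} for $K(\tau)$ then follows from the trace calculations \eqref{eq:trace0}--\eqref{eq:trace1} rather than from any matching. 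Finally, the large- and small-$x$ asymptotics of the PV solution (Theorem~\ref{thm:2}) are \emph{not} used in the proof of Theorem~\ref{thm:4}: the matching on $\partial U(\pm\sqrt2)$ is exact in the PV solution and only asymptotic in the PC expansion, so the error $\mathcal{O}((-s)^{-7/4})$ is uniform in $\tau$ on compacts without any appeal to \eqref{eq:asyofu}--\eqref{eq:asy hatu 0}.
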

    \begin{remark}\label{re:C2 diag}
        If $C^2$ is a diagonal matrix, then from Remarks \ref{re:C diagonal} and \ref{re:deg solution of S} below, we have
        \begin{equation}
         v(4\sqrt{2}\tau i)=0, ~ \hat{u}(4\sqrt{2}\tau i)=\mathrm{e}^{2\sqrt{2}\tau i},~ H_V(4\sqrt{2}\tau i)=\alpha, ~K(\tau)=1. 
        \end{equation}
          Thus, the asymptotic expansion \eqref{eq:asy of FD}
        becomes the following simpler form
        \begin{multline}\label{eq:asy FD in C is diag}
   \partial_s\ln\det(Id-Ai^2_{\Vec{s}})={4\sqrt{2}}(\rho_1+\rho_2)\sqrt{-s}+\frac{3}{2s}(\rho_1^2+\rho_2^2)-\frac{1}{2s}\Big(\rho_1\cos(d_1(s,\tau))\\
+\rho_2\cos(d_2(s,\tau)))-8\sqrt{2}\alpha\tau i\Big)+\mathcal{O}\left((-s)^{-\frac{7}{4}}\right).
\end{multline}
where
        \begin{equation}
                    d_j(s,\tau)=\frac{8}{3}\sqrt{2}(-s)^{\frac{3}{2}}-{2\rho_j}\ln(2^{\frac{9}{2}}(-s)^{\frac{3}{2}})-2\arg\Gamma\left(-i{\rho_j}\right)+(-1)^j4\sqrt{2}\tau i,~j=1,2.
        \end{equation}
          On the other hand, from \eqref{eq:Mat KAiry} and the fact that $C^2$ is a diagonal matrix, we have
             \begin{equation}\label{eq:deg result}
        \ln\det(Id-Ai^2_{\Vec{s}})= \ln F_2(s_1;\lambda_1^2)+\ln F_2(s_2;\lambda_2^2),
        \end{equation}
        where $\Vec{s}=(s_1,s_2)$ and $F_2(t;\gamma)$ is defined in \eqref{eq:gKAiry}. Therefore,  we  see that the result \eqref{eq:asy FD in C is diag} is consistent with \eqref{eq:deg result} and \eqref{eq:logTWf}. 
\end{remark}

\begin{remark}
    If $\Vec{s}=(s,s)$, i.e., $\tau=0$, the logarithmic of the Fredholm determinant of the matrix Airy kernel will also degenerate into the sum of two Airy kernel determinants, as in  \eqref{eq:deg result}. In this case, we have $\hat{u}(0)=1$ and $K(0)=1$ from Theorem \ref{thm:2}. Substituting this into \eqref{eq:asy of FD}, we obtain \eqref{eq:asy of FD} with $\tau=0$ therein. This is also consistent with \eqref{eq:deg result} and \eqref{eq:logTWf}.
\end{remark}
\begin{remark}
    If $\Vec{s}\in\mathbb{R}^r$  for $r>2$, we expect that
    the asymptotic expansions of $\ln\det(Id-Ai^2_{\Vec{s}})$ as $s_j\to-\infty$ at certain related speed  would be expressed in terms of particular solution of the noncommutative PV equation; see Remark \ref{re:NC PV} for more discussion. 
\end{remark}
\begin{remark}
   
    Note that the classical Airy kernel determinants $ F_2(t;\gamma)$ defined in \eqref{eq:gKAiry} exhibit different types of asymptotic behaviors for $\gamma=1$, as compared with $\gamma\in(0,1)$; cf. \eqref{eq:gam=1} and \eqref{eq:logTWf}.
   It is expected that  the determinant $\det(Id-Ai^2_{\Vec{s}})$ would have  asymptotic behavior different from that described in Theorem \ref{thm:4} if one of the eigenvalues of $C$ in  \eqref{eq:Mat KAiry} is equal to $\pm 1$ and the other one is  within the interval $(-1,1)$. 
\end{remark}
\begin{thm}\label{thm5}
    Under the assumptions of Theorem \ref{thm:4}, let $\beta(\Vec{s})$ be the solution of  the noncommutative Painlev\'e II equation \eqref{eq:NC PII} satisfying the asymptotic behavior \eqref{eq:asy NC PII at +infty} as $s\to+\infty$, then $\beta(\Vec{s})$ has the following asymptotic behavior as $s\to-\infty$,
       \begin{align}\label{eq:asy NC PII}
      (\beta(\Vec{s}))_{11}&=2^{\frac{1}{4}}\left(\sqrt{{\rho_2}}\frac{v(4\sqrt{2}\tau i)}{2\alpha}\sin\psi_2(s,\tau)-\sqrt{{\rho_1}}\frac{2\alpha-v(4\sqrt{2}\tau i)}{2\alpha}\sin\varphi_1(s,\tau)\right)(-s)^{-\frac{1}{4}}+\mathcal{O}((-s)^{-1}),\\
       (\beta(\Vec{s}))_{22}&=2^{\frac{1}{4}}\left(\sqrt{{\rho_2}}\frac{2\alpha-v(4\sqrt{2}\tau i)}{2\alpha}\sin\varphi_2(s,\tau)-\sqrt{{\rho_1}}\frac{v(4\sqrt{2}\tau i)}{2\alpha}\sin\psi_1(s,\tau)\right)(-s)^{-\frac{1}{4}}+\mathcal{O}((-s)^{-1}),\nonumber\\
(\beta(\Vec{s}))_{12}&=\overline{(\beta(\Vec{s}))_{21}}=2^{-\frac{3}{4}}\mathrm{e}^{2\sqrt{2}\tau i}i(-s)^{-\frac{1}{4}}\bigg(\overline{a_0(4\sqrt{2}\tau i)b_1(4\sqrt{2}\tau i)}(k_1(s)+\overline{k_2}(s))\\
&+a_0(4\sqrt{2}\tau i)b_1(4\sqrt{2}\tau i)y^2(4\sqrt{2}\tau i)u(4\sqrt{2}\tau i)\frac{v(4\sqrt{2}\tau i)-2\alpha}{v(4\sqrt{2}\tau i)}(\overline{k_1}(s)+k_2(s))\bigg)+\mathcal{O}((-s)^{-1}).
 \end{align}
    Moreover, ${H}_{II}(\Vec{s})$ defined in \eqref{eq:def delta(s)} has the following asymptotic behavior as $s\to-\infty$, 
  
       \begin{align}
        ({H}_{II}(\Vec{s}))_{jj}&=2\sqrt{2}\sqrt{-s}\left((-1)^{j+1}H_V(4\sqrt{2}\tau i)-\frac{\rho_1+\rho_2}{2i}\right)+\mathcal{O}((-s)^{-1}), ~~j=1,2,\\
   ({H}_{II}(\Vec{s}))_{12}&=2\sqrt{2}\sqrt{-s}\mathrm{e}^{2\sqrt{2}\tau i}\frac{y(4\sqrt{2}\tau i)}{4\sqrt{2}\tau i}(v(4\sqrt{2}\tau i)-2\alpha)(1-u(4\sqrt{2}\tau i))+\mathcal{O}((-s)^{-1}),  \\     
           ({H}_{II}(\Vec{s}))_{21}&=2\sqrt{2}\sqrt{-s}\mathrm{e}^{-2\sqrt{2}\tau i}\frac{1}{y(4\sqrt{2}\tau i)4\sqrt{2}\tau i}v(4\sqrt{2}\tau i)\left(1-\frac{1}{u(4\sqrt{2}\tau i)}\right)+\mathcal{O}((-s)^{-1}), \label{eq:asy Ham for NC PII}
    \end{align}
  where 
  \begin{equation}
     \begin{split}
        \varphi_j(s,\tau)&=\frac{4}{3}\sqrt{2}(-s)^{\frac{3}{2}}-\frac{\pi}{4}-\arg\Gamma(-i{\rho_j})-{\rho_j}\ln(2^{\frac{9}{2}}(-s)^{\frac{3}{2}})\\
        &+(-1)^j\arg \bigg(a_0(4\sqrt{2}\tau i)b_1(4\sqrt{2}\tau i)y(4\sqrt{2}\tau i)\frac{v(4\sqrt{2}\tau i)-2\alpha}{v(4\sqrt{2}\tau i)}\bigg),
        \end{split}
     \end{equation}
     \begin{equation}
  \begin{split}
  \psi_j(s,\tau)&=\frac{4}{3}\sqrt{2}(-s)^{\frac{3}{2}}-\frac{\pi}{4}-\arg\Gamma\left(-i{\rho_j}\right)-{\rho_j}\ln(2^{\frac{9}{2}}(-s)^{\frac{3}{2}})\\
  &+(-1)^j\arg \left(a_0(4\sqrt{2}\tau i)b_1(4\sqrt{2}\tau i)y(4\sqrt{2}\tau i)u(4\sqrt{2}\tau i)\right),
  \end{split}
   \end{equation}
\begin{equation}
 k_j(s)=\sqrt{\rho_j}\mathrm{exp}\left({\frac{4}{3}\sqrt{2}i(-s)^{\frac{3}{2}}-i\arg\Gamma\left(-i{\rho_j}\right)-\frac{\pi}{4}i-i{\rho_j}\ln(2^{\frac{9}{2}}(-s)^{\frac{3}{2}})}\right),~~j=1,2.
\end{equation}
Here $u(x),~v(x)$ and $y(x)$  are solutions of the PV system \eqref{eq:PV for u}-\eqref{eq:PV for y} with the properties specified in Theorem \ref{thm:2} and $a_0(x),~b_1(x)$ are defined by the relation \eqref{eq:analyticfac}.
\end{thm}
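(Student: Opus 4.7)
The strategy is to apply the Deift--Zhou nonlinear steepest descent method to the Riemann--Hilbert problem for $\beta(\vec s)$ introduced in \cite{ber12,ber18} (and reformulated earlier in the paper) in the coalescing regime $\vec s=(s+\tau/\sqrt{-s},s-\tau/\sqrt{-s})$, $s\to-\infty$, and then to read the asymptotics of $\beta(\vec s)$ and $H_{II}(\vec s)$ directly from the $1/\lambda$-coefficient of the solution $\Psi(\lambda;\vec s)$ at infinity. Since the asymptotics of $\det(Id-Ai^2_{\vec s})$ in Theorem \ref{thm:4} are established by exactly this Deift--Zhou scheme, the full cascade of transformations and parametrices is already in hand; the novelty in Theorem \ref{thm5} is the extraction step, i.e.\ reading off the matrix entries of $\beta$ and $H_{II}$ rather than a scalar logarithmic derivative.

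First I would normalize $\Psi$ by the cubic $g$-function attuned to $\theta(\mu)=i\mu^3/6+i\mu s$, open lenses between the turning points $\pm\sqrt{-2s_j}$, and install a PV local parametrix at each pair of coalescing turning points using the Lax pair \eqref{eq:lax pair L}--\eqref{eq:lax pair U} evaluated at $x=4\sqrt{2}\tau i$. The separation $s_1-s_2=2\tau/\sqrt{-s}$ matches the natural scale of this PV parametrix (the same one used in the proof of Theorem \ref{thm:4}), so a standard small-norm argument yields $R=I+\mathcal{O}((-s)^{-1/4})$ away from the parametrix disks. The outer model and the Airy building blocks outside the PV disks are identical to those appearing in the proof of \eqref{eq:logTWf} for the scalar Tracy--Widom gap probability, now with Stokes data indexed by the two eigenvalues of $C$, producing the parameters $\rho_1,\rho_2$ and the constants $\arg\Gamma(-i\rho_j)$.

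Next I would use the standard recovery formulas for the noncommutative Painlev\'e II hierarchy \cite{ber12,ber18}: writing
\[
\Psi(\lambda;\vec s)=\Bigl(I+\frac{\Psi_1(\vec s)}{\lambda}+\mathcal{O}(\lambda^{-2})\Bigr)\mathrm{e}^{-\theta(\lambda)\sigma_3},\qquad \lambda\to\infty,
\]
the matrix $\beta(\vec s)$ is the appropriate off-diagonal block of $\Psi_1(\vec s)$ and $H_{II}(\vec s)$ is read from the diagonal and off-diagonal pieces of $\Psi_1(\vec s)$ via \eqref{eq:def delta(s)}. Undoing the transformations at $\lambda=\infty$, the coefficient $\Psi_1(\vec s)$ splits into an outer contribution carrying the leading $\sqrt{-s}$ behavior of $H_{II}$, a PV--parametrix contribution carrying the leading $(-s)^{-1/4}$ behavior of $\beta$ together with all of the slow dependence on $u,v,y,H_V$ evaluated at $x=4\sqrt 2\tau i$, and an $\mathcal{O}((-s)^{-1})$ error. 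Reorganising the PV matrix entries through Euler's formula, and collecting the constant phase shifts --- $\arg\Gamma(-i\rho_j)$ from the Stokes structure, the dressing arguments $\arg(a_0b_1yu)$ and $\arg(a_0b_1y(v-2\alpha)/v)$ from the PV parametrix, and the factors $\mathrm{e}^{\pm 2\sqrt 2\tau i}$ from the $g$-function at the turning points --- assembles the phases $\varphi_j,\psi_j$ and amplitudes $k_j(s)$ in exactly the stated form, and yields the off-diagonal entry $(\beta(\vec s))_{12}$ as a conjugate-paired combination of contributions from the two coalescing disks, automatically compatible with the Hermitian symmetry $\beta=\beta^*$.

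The main obstacle will be bookkeeping: aligning the Stokes and dressing conventions of the PV parametrix with the outer model so that each constant phase comes out exactly as stated, and so that the two coalescing disks contribute the right conjugate-paired combination to the off-diagonal entries of $\beta$ and $H_{II}$. A convenient consistency check is that the trace of \eqref{eq:asy Ham for NC PII} together with $DH_{II}(\vec s)=-2i\beta(\vec s)^2$ and \eqref{eq:int of FD} must give $\partial_s\ln\det(Id-Ai^2_{\vec s})=-2i\,\mathrm{Tr}\,H_{II}(\vec s)+\mathcal{O}((-s)^{-7/4})$, and this must reproduce Theorem \ref{thm:4} to leading order; demanding this compatibility pins down the global signs and overall constants that the parametrix bookkeeping must return.
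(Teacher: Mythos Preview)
Your overall strategy---Deift--Zhou on the $4\times4$ RH problem for $\Psi$, then read $\beta$ and $H_{II}$ from the $1/\lambda$ coefficient $\Psi_1$---is exactly what the paper does. But the parametrix structure you describe is inverted relative to what actually works, and this is not merely a bookkeeping issue.

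In the paper the PV model is the \emph{global} parametrix, not a local one. After rescaling $\lambda=\sqrt{-s}\,z$ and normalising by the single $g$-function $g(z)=z^3/3-2z$ (built from the average $s$, not from $s_1,s_2$ separately), the residual behaviour at infinity is $\mathrm{e}^{-\tau i z\,\sigma_3\otimes\sigma_3}$, which does not tend to the identity. This forces a nontrivial outer model $S^{(\infty)}(z)$ with jump $S_D=(I-C^2)\otimes I\oplus(I-C^2)^{-1}\otimes I$ on $(-\sqrt2,\sqrt2)$ and the $\tau$-dependent boundary condition at $\infty$; its block-diagonal solution is precisely the PV RH problem $\Phi(\cdot,x)$ at $x=4\sqrt2\tau i$ (see \eqref{eq:solution of N1}). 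There are only two saddle points $z=\pm\sqrt2$, not coalescing pairs indexed by $j$, and the local parametrices there are built from \emph{parabolic cylinder} functions $\Psi^{(PC)}$, not Airy. (Airy would correspond to an eigenvalue of $C$ equal to $\pm1$, which is excluded here.)

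This matters for the extraction. The leading $\sqrt{-s}$ term of $H_{II}(\vec s)$ comes entirely from $(S_1^{(\infty)})_{22}$, the $1/z$ coefficient of the PV global parametrix; this is how $H_V$ and the entries of $\Phi_{-1}(4\sqrt2\tau i)$ appear in \eqref{eq:asy Ham for NC PII}. The leading $(-s)^{-1/4}$ term of $\beta(\vec s)$, by contrast, does \emph{not} come from the outer model (whose $(21)$-block vanishes); it comes from the first correction $R^{(1)}(z)$ to the error matrix, namely the residue $A_1=C^{(1)}_{-1}$ at $z=\sqrt2$ of the matching factor $S^{(\sqrt2)}(S^{(\infty)})^{-1}$. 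This residue combines (i) the PV connection matrices $\Phi_0^{(0)}(x),\Phi_0^{(1)}(x)$---hence $a_0,b_1,y,u,v$---from the local expansion of the global parametrix at the endpoints, with (ii) the parabolic-cylinder data $m_j,n_j$ carrying the phases $\tfrac{4}{3}\sqrt2(-s)^{3/2}$, $\rho_j\ln(2^{9/2}(-s)^{3/2})$ and $\arg\Gamma(-i\rho_j)$. The small-norm estimate is $R=I+\mathcal{O}(t^{-1/2})=I+\mathcal{O}((-s)^{-3/4})$, not $\mathcal{O}((-s)^{-1/4})$; with the rescaling prefactor $\sqrt{-s}$ this produces exactly the $(-s)^{-1/4}$ amplitude of $\beta$. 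Your proposed decomposition (PV local, Airy outer, coalescing turning points) would not reproduce this split and in particular would not generate the correct source for the oscillatory $\beta$ term.
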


\begin{remark}
    We mention that the functions $v$, $y$ and $u$
in the denominators in the asymptotic formulas in  Theorem \ref{thm5} may not be nonzero on
$i\mathbb{R}$. Nevertheless, each term involving  denominator is analytic for $x\in i\mathbb{R}$. For instance, we have  $$a_0(x)b_1(x)y(x)^2u(x)\frac{v(x)-2\alpha}{v(x)}=(\Phi_0^{(0)}(x))_{11}(\Phi_0^{(1)}(x))_{12},$$ where $\Phi_0^{(0)}(x),\Phi_0^{(1)}(x)$ are defined in \eqref{eq:analyticfac}. By Proposition \ref{pro:pro1}, these function are free of poles for $x\in i\mathbb{R}$.
Similarly, the remaining terms can be shown to be free of poles for $x\in i\mathbb{R}$ by using \eqref{eq:analyticfac} and Proposition \ref{pro:pro1}. \end{remark}

The rest of this paper is planned as follows. In Section \ref{RHforFD and PV}, we introduce the Riemann-Hilbert (RH for short) problem $\Psi(\lambda,\Vec{s})$ for the noncommutative PII equation, which is related to the logarithmic derivative of the Fredholm determinant of the matrix Airy operator \eqref{eq:Mat KAiry}, as shown in \cite{ber12}. We also formulate the RH problem $\Phi(\lambda,x)$ for the PV equation, which will be used in constructing the global parametrix in the asymptotic analysis of the RH problem for  $\Psi(\lambda,\Vec{s})$. We then establish the solvability of $\Phi(\lambda,x)$ for the parameter $\alpha\in i\mathbb{R}$ and $x\in i\mathbb{R}$ by proving a vanishing lemma. The solvability implies the pole-free of the PV transcendents for $x\in i\mathbb{R}$. In Sections \ref{sec:asy of PV for large x} and \ref{sec:asy of PV for small x}, we derive the asymptotic behavior of the PV transcendents as $ix\to -\infty$ and $x\to0$, respectively, by using the Deift-Zhou nonlinear steepest descent method \cite{DIZ, DMVZ1, DMVZ2, DZ} for the RH problem for $\Phi(\lambda,x)$. 
In Section \ref{sec: large t}, we derive the asymptotic approximations for the logarithmic derivative of the Fredholm determinant and a particular solution of the noncommutative PII equation as $s\to -\infty$ also by performing the Deift-Zhou method for the RH problem for $\Psi(\lambda,\Vec{s})$. The proofs of Theorems \ref{thm:4} and \ref{thm5} will be presented in Section \ref{sec:proof 4 and 5}.

\section{Model Riemann-Hilbert problem}\label{RHforFD and PV}
In this section, we first relate the logarithmic derivative of the Fredholm determinant \eqref{eq:int of FD} and the noncommutative PII equation \eqref{eq:NC PII} to the solution of a RH problem for $\Psi(\lambda,\Vec{s})$, as shown in \cite{ber12}. Then, we present the special PV equations and their related model RH problem encountered in our study.

\subsection{Model RH problem for $\Psi$}
In the following, given any $n\times n$ matrix $A$ 
 and $2\times 2$ matrix $G=(g_{ij})_{i,j=1}^2$, we indicate with the tensor product $A\otimes G$ the matrix
\begin{equation}\label{eq:tensor}
    A\otimes G:=\begin{pmatrix}
        g_{11}A & g_{12}A\\
        g_{21}A & g_{22}A
    \end{pmatrix}.
\end{equation}
The symbols $\hat{\sigma}_i,i=1,2,3,$ will denote the matrices $I\otimes\sigma_i$, where $\sigma_i,i=1,2,3$ are the Pauli matrices
$$
\sigma_1=\begin{pmatrix}0 & 1\\
1 & 0    
\end{pmatrix},~
\sigma_2=\begin{pmatrix}
    0 & i\\
    -i & 0
\end{pmatrix},~
\sigma_3=\begin{pmatrix}
    1 & 0\\
    0 & -1
\end{pmatrix}.
$$
\subsubsection*{RH problem for $\Psi$}
\begin{description}
    \item(1) $\Psi(\lambda)=\Psi(\lambda,\Vec{s})$ is a $4\times 4$ matrix-valued function, which is analytic in $\mathbb{C}\backslash\Sigma_k,~ k=1,2,3,4$; see Figure \ref{fig:fig1} for the contours.
    \begin{figure}
    \centering
    \includegraphics[width = 0.6\textwidth]{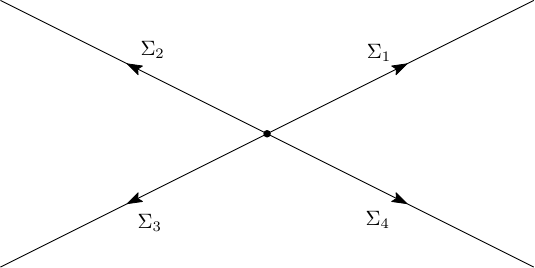}
    \caption{The contours of the RH problem for $\Psi$.}
   \label{fig:fig1}
\end{figure}
\item(2) $\Psi(\lambda)$ satisfies the jump condition
\begin{equation}
\Psi_+(\lambda)= \Psi_-(\lambda)S_k,~ \lambda\in \Sigma_k,~ k=1,2,3,4,
\end{equation}
where $\Psi_+$ and $\Psi_-$ denote the limits of the function $\Psi$ on the ray $\Sigma_k$ from the left and right hand side, respectively. Here the jump matrices are of the form 
\begin{equation}
S_1=\begin{pmatrix}I & 0\\
C & I\end{pmatrix},~
S_2=\begin{pmatrix}I & 0\\
-C & I\end{pmatrix},~
S_3=\begin{pmatrix}I & C\\
0 & I\end{pmatrix},~
S_4=\begin{pmatrix}I & -C\\
0 & I\end{pmatrix},
\end{equation}
where $I$ is a $2 \times 2$ identity matrix, and $C$ is the $2 \times 2$ Hermitian matrix defined in \eqref{eq:def C}.
\item(3) As $\lambda\to\infty$, we have
\begin{equation}
\Psi(\lambda) = \left(I_4+\mathcal{O}(\lambda^{-1})\right)\mathrm{e}^{-\Theta(\lambda)},  
\end{equation}
where $I_4$ is a $4 \times 4$ identity matrix and
\begin{equation}
\Theta(\lambda):=\theta(\lambda)\otimes\sigma_3=\begin{pmatrix}
    \theta(\lambda) & 0\\
    0 & -\theta(\lambda)
\end{pmatrix},~
\theta(\lambda)=\begin{pmatrix}\frac{i}{6}\lambda^3+is_1\lambda & 0\\
0 & \frac{i}{6}\lambda^3+is_2\lambda
\end{pmatrix}.
\end{equation}
\end{description}

Then, the logarithmic derivative of the Fredholm determinant \eqref{eq:int of FD} and the associated solutions to the noncommutative PII equation \eqref{eq:NC PII}  can be expressed in terms of the solution of the above RH problem  \cite{ber12}:  
\begin{equation}\label{eq:RH for NC PII}
    \beta(\Vec{s})=-i \lim \limits_{\lambda \to \infty} \lambda(\Psi(\lambda)\mathrm{e}^{\Theta(\lambda)})_{21}, ~~{H}_{II}(\Vec{s})=\lim_{\lambda\to\infty}\lambda(\Psi(\lambda)\mathrm{e}^{\Theta(\lambda)}-I_4)_{22},
\end{equation}
\begin{equation}\label{eq:RH for FD}
    D\ln\det(Id-Ai_{\Vec{s}}^2)=-2iTr\left({H}_{II}(\Vec{s})\right),~D=\frac{\partial}{\partial_{s_1}}+\frac{\partial}{\partial_{s_2}},
\end{equation}
where $\Psi=\begin{pmatrix}
    \Psi_{11} & \Psi_{12}\\
    \Psi_{21} & \Psi_{22}
\end{pmatrix}$, $\Psi_{ij}$ denote the $2 \times 2$ blocks
of $\Psi,~ i,j=1,2$, $\beta(\Vec{s})$ and
${H}_{II}(\Vec{s})$ are defined in \eqref{eq:NC PII} and \eqref{eq:def delta(s)}, respectively.

\subsection{Model RH problem for the PV equation}
For our application, we introduce the following RH problem with specific parameter for the PV equation $\Phi(\lambda,x;\alpha)$ \cite{FIKY,XZ}, which solves the ODE system \eqref{eq:lax pair L} and \eqref{eq:lax pair U}.
\subsubsection*{RH problem for $\Phi$}
\begin{description}
\item(1)
$\Phi(\lambda)=\Phi(\lambda,x;\alpha)$ is a $2\times 2$ matrix-valued function, which is analytic for $\lambda\in\mathbb{C}\backslash[0,1]$.
\item(2)
$\Phi(\lambda)$ satisfies the jump condition
\begin{equation}\label{eq:jump PV}
\Phi_+(\lambda)=\Phi_-(\lambda)U\mathrm{e}^{2\pi i\alpha\sigma_3}U^{-1}, ~\lambda\in(0,1), 
\end{equation}
where $\alpha\in i\mathbb{R}$ and $U$ is a unitary matrix defined in \eqref{eq:def U}. 
\item(3)
The asymptotic behavior of $\Phi(\lambda)$ at infinity is
\begin{equation}\label{eq:asyofPVinfty}
\Phi(\lambda)=\left(I+\frac{\Phi_{-1}(x)}{\lambda}+\mathcal{O}(\lambda^{-2})\right)\mathrm{e}^{-\frac{1}{2}\lambda x\sigma_3}, \lambda\to\infty,
\end{equation}
where
\begin{equation}\label{eq:RH Ham for PV}
    \Phi_{-1}(x)=\begin{pmatrix}
        -H_V & -\frac{y}{x}(v-2\alpha)(1-u)\\
        -\frac{1}{yx}v(1-\frac{1}{u}) & H_V
    \end{pmatrix},
\end{equation}
and $H_V$ is the Hamiltonian defined in \eqref{eq:def for HV}.
\item(4)
The behavior of $\Phi(\lambda)$ at $\lambda=0$ is
\begin{equation}\label{eq:local0}
\Phi(\lambda)={\Phi}^{(0)}(\lambda,x)\lambda^{-\alpha\sigma_3}\mathrm{e}^{\pi i\alpha\sigma_3}U^{-1}, 
\end{equation}
as $\lambda\to 0$, where $U$ is a unitary matrix independent of $\lambda$ defined in \eqref{eq:def U}, and we choose the branch of $\lambda^{\alpha}$ such that $\arg\lambda\in(0,2\pi)$. Here the function ${\Phi}^{(0)}(\lambda,x)$ is analytic at $\lambda=0$ and satisfies the following expansion
\begin{equation}\label{eq:expansion Phi 0}
    {\Phi}^{(0)}(\lambda,x)={\Phi}^{(0)}_0(x)\left(I+{\Phi}^{(0)}_1(x) \lambda+\mathcal{O}(\lambda^2)\right).
\end{equation}
\item (5)
The behavior of $\Phi(\lambda)$ at $\lambda=1$ is
\begin{equation}\label{eq:local1}
\Phi(\lambda)={\Phi}^{(1)}(\lambda,x)(\lambda-1)^{\alpha\sigma_3}U^{-1}
\end{equation}
as $\lambda\to 1$, where $(\lambda-1)^{\alpha}$ takes the principal branch. The function ${\Phi}^{(1)}(\lambda,x)$ is analytic at $\lambda=1$ and satisfies the following expansion
\begin{equation}\label{eq:expansion Phi 1}
    {\Phi}^{(1)}(\lambda,x)={\Phi}^{(1)}_0(x)\left(I+{\Phi}^{(1)}_1(x) (\lambda-1)+\mathcal{O}((\lambda-1)^2)\right).
\end{equation}
\end{description}
The solutions to the PV equations can be obtained from 
${\Phi}^{(0)}_0(x)$ and ${\Phi}^{(1)}_0(x)$; cf. \cite{FIKY}, 
\begin{equation}\label{eq:analyticfac}
{\Phi}^{(0)}_0(x)=\begin{pmatrix}
    a_0y\frac{v-2\alpha}{v} & b_0y\\
    a_0 & b_0
\end{pmatrix},~
{\Phi}^{(1)}_0(x)=\begin{pmatrix}
    a_1 & b_1yu\\
    \frac{a_1}{yu}\frac{v}{v-2\alpha} & b_1
\end{pmatrix},
\end{equation}
with
\begin{equation}\label{eq:det}
a_0b_0y=\frac{v}{-2\alpha},~~a_1b_1=\frac{v-2\alpha}{-2\alpha},~~\alpha\neq 0.
\end{equation}
Besides, the PV transcendents can also be recovered from $\Phi_{-1}(x)$ in \eqref{eq:RH Ham for PV} via 
\begin{equation}\label{eq:RH for v1}
    x^2(\Phi_{-1})_{12}(\Phi_{-1})_{21}=(1-u)(v-2\alpha)(v-\frac{v}{u}),
\end{equation}
\begin{equation}\label{eq:RH for v2}
    v(x)=(\Phi_{-1})_{11}+x(\Phi_{-1})_{12}(\Phi_{-1})_{21}+\alpha.
\end{equation}
\begin{remark}\label{re:C diagonal}
 If the matrix $U$ in \eqref{eq:jump PV} is a diagonal matrix, the solution for $\Phi(\lambda)$ can be obtained explicitly:
 \begin{equation*}
\Phi(\lambda)=\left(\frac{\lambda}{\lambda-1}\right)^{\alpha\sigma_3}\mathrm{e}^{-\frac{1}{2}\lambda x\sigma_3}.
 \end{equation*} Thus, we have ${\Phi}^{(0)}_0(x)=U$ and ${\Phi}^{(1)}_0(x)=\mathrm{e}^{-\frac{1}{2}x\sigma_3}U$. From the definitions of $H_V(x)$ \eqref{eq:def for HV} and $\hat{u}(x)$ \eqref{eq:def hatu}, we obtain 
 \begin{equation*}
     H_V(x)=\alpha,~v(x)=0,~\hat{u}(x)=\mathrm{e}^{\frac{x}{2}}.
 \end{equation*}
 If the parameter $\alpha=0$, we then have 
 $$\Phi(\lambda)=\mathrm{e}^{-\frac{1}{2}\lambda x\sigma_3},$$
 which implies
 $$H_V(x)=0, ~v(x)=0, ~\hat{u}(x)=|\omega|^2\mathrm{e}^{\frac{x}{2}}+(1-|\omega|^2){e}^{-\frac{x}{2}}.$$
\end{remark}
\begin{remark}\label{re:symmetry -x}
    Since $\mathrm{e}^{-\frac{1}{2}x\sigma_3}\Phi(1-\lambda,-x;-\alpha)$ satisfies the same RH problem as $\Phi(\lambda,x;\alpha)$, it then follows from the unique solvability of the RH problem for $\Phi$, given in Proposition \ref{pro:pro1}, that
\begin{equation}
    \Phi(\lambda,x;\alpha)=\mathrm{e}^{-\frac{1}{2}x\sigma_3}\Phi(1-\lambda,-x;-\alpha).
\end{equation}
Then we have
\begin{equation}\label{eq:symmetry x and -x}
    \mathrm{e}^{-\frac{1}{2}x\sigma_3}\Phi^{(1)}_0(-x;-\alpha)=\Phi^{(0)}_0(x;\alpha),~    \mathrm{e}^{-\frac{1}{2}x\sigma_3}\Phi^{(0)}_0(-x;-\alpha)=\Phi^{(1)}_0(x;\alpha).
\end{equation}
Combining this with \eqref{eq:analyticfac}, we obtain the following symmetry relation for this family of solutions to the PV equation 
\begin{equation}
    u(-x)=u(x)^{-1},~v(-x)=-v(x),~y(-x)=\mathrm{e}^xy(x)u(x),~x\in i\mathbb{R}.
\end{equation}
\end{remark}
For later use, we calculate the trace of the matrices ${\Phi}^{(1)}_1(x)$ and ${\Phi}^{(0)}_1(x)$. By substituting \eqref{eq:local0} into \eqref{eq:lax pair L}, we find
\begin{align}
        &-\alpha{\Phi}^{(0)}_0(x)\sigma_3=L_1(x){\Phi}^{(0)}_0(x),\label{eq:Phi_10}\\
    &\left(-\frac{x}{2}\sigma_3-L_2(x)\right){\Phi}^{(0)}_0(x)+L_1(x){\Phi}^{(0)}_0(x){\Phi}^{(0)}_1(x)={\Phi}^{(0)}_0(x){\Phi}^{(0)}_1(x)(I-\alpha\sigma_3).\label{eq:Phi_101}
\end{align}
Then, we obtain 
\begin{equation}\label{eq:trace0}
Tr\left({\Phi}^{(0)}_1(x)\right)=Tr\left({\Phi}^{(0)}_0(x)^{-1}\left(-\frac{x}{2}\sigma_3-L_2(x)\right){\Phi}^{(0)}_0(x)\right),
\end{equation}
and
\begin{equation}\label{eq:trace1}
    Tr\left({\Phi}^{(1)}_1(x)\right)=Tr\left({\Phi}^{(1)}_0(x)^{-1}\left(-\frac{x}{2}\sigma_3+L_1(x)\right){\Phi}^{(1)}_0(x)\right).
\end{equation}

\subsection{Vanishing lemma and existence of solution to the model RH problem for $\Phi$}
We will prove the existence of solution to the model RH problem for $\Phi(\lambda,x;\alpha)$ if the parameter $\alpha\in i\mathbb{R}$ and $x\in i\mathbb{R}$. We start with the vanishing lemma which shows that the homogeneous RH problem has only zero solution.
\begin{lem}
    For  $\alpha\in i\mathbb{R}$ and $x\in i\mathbb{R}$, we suppose that $\hat{\Phi}(\lambda)$ satisfies the same jump condition \eqref{eq:jump PV} and the same behaviors \eqref{eq:local0} and \eqref{eq:local1} as $\Phi(\lambda)$, respectively, as $\lambda$ tends to the origin and $1$. Further assume the behavior of $\hat{\Phi}(\lambda)$ to be 
    \begin{equation}
        \hat{\Phi}(\lambda)=\mathcal{O}\left(\frac{1}{\lambda}\right)\mathrm{e}^{-\frac{1}{2}\lambda x\sigma_3},~~\lambda\to\infty.
    \end{equation}
    Then, we have $\hat{\Phi}(\lambda)=0$ for $\lambda\in\mathbb{C}.$
\end{lem}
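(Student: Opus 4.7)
\medskip

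\noindent \textbf{Proof proposal.} The plan is to adapt the standard ``auxiliary Hermitian'' vanishing-lemma argument to this RH problem, exploiting two crucial consequences of the hypothesis $\alpha,x\in i\mathbb{R}$: first, that the jump matrix
$$J:=Ue^{2\pi i\alpha\sigma_3}U^{-1}$$
is Hermitian (indeed positive definite, since $e^{\pm 2\pi i\alpha}$ is real positive and $U$ unitary); second, that the oscillatory factor $e^{-\tfrac12 x\lambda\sigma_3}$ at infinity satisfies $(e^{-\tfrac12 x\bar\lambda\sigma_3})^{\ast}=e^{\tfrac12 x\lambda\sigma_3}$, so the two exponentials will cancel when taking a Schwarz-reflected product.

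Concretely, first I would introduce
$$F(\lambda):=\bigl(\hat\Phi(\bar\lambda)\bigr)^{\ast},$$
which is well-defined and analytic on $\mathbb{C}\setminus[0,1]$. Taking boundary values on $(0,1)$ and using $\hat\Phi_+=\hat\Phi_- J$ together with $J^{\ast}=J$, one obtains $F_-(\lambda)=J\,F_+(\lambda)$. Then I would set
$$G(\lambda):=\hat\Phi(\lambda)F(\lambda)=\hat\Phi(\lambda)\bigl(\hat\Phi(\bar\lambda)\bigr)^{\ast},$$
and verify that $G_+=G_-$ on $(0,1)$: indeed $G_+=\hat\Phi_-J\cdot J^{-1}F_-=\hat\Phi_-F_-=G_-$. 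So $G$ is analytic across $(0,1)$. Near the endpoints, the local behaviors \eqref{eq:local0} and \eqref{eq:local1} involve factors $\lambda^{\mp\alpha\sigma_3}$ and $(\lambda-1)^{\pm\alpha\sigma_3}$, which, since $\alpha\in i\mathbb{R}$, are purely oscillatory and thus bounded in modulus; hence $G$ is bounded at $0$ and $1$, and the singularities there are removable. At infinity the exponential factors in $\hat\Phi(\lambda)$ and $F(\lambda)$ cancel thanks to $\bar{x}=-x$, leaving $G(\lambda)=\mathcal{O}(\lambda^{-2})$. Consequently $G$ is entire and vanishes at infinity, so Liouville's theorem forces $G\equiv 0$.

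To extract the vanishing of $\hat\Phi$ itself, I would restrict to real $\lambda\in\mathbb{R}\setminus[0,1]$, where $\bar\lambda=\lambda$ and therefore
$$0=G(\lambda)=\hat\Phi(\lambda)\bigl(\hat\Phi(\lambda)\bigr)^{\ast}.$$
Taking the trace gives $\sum_{ij}|(\hat\Phi(\lambda))_{ij}|^{2}=0$, hence $\hat\Phi(\lambda)=0$ on $\mathbb{R}\setminus[0,1]$. Since $\mathbb{C}\setminus[0,1]$ is connected, analytic continuation gives $\hat\Phi\equiv 0$ on the whole domain of analyticity, which is the desired conclusion.

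The main technical check — and the step where the hypotheses $\alpha\in i\mathbb{R}$ and $x\in i\mathbb{R}$ are essential — is the large-$\lambda$ estimate for $G$: one must verify that after multiplying $\hat\Phi(\lambda)=\mathcal{O}(\lambda^{-1})e^{-\tfrac12 x\lambda\sigma_3}$ by its Schwarz reflection, the diagonal exponentials really do cancel and do not re-emerge as growing factors in any sector. Once this is in place (which is routine from $\bar{x}=-x$), the rest of the argument reduces to bookkeeping of the Hermitian symmetry of $J$ and the bounded-modulus nature of $\lambda^{-\alpha\sigma_3}$, $(\lambda-1)^{\alpha\sigma_3}$ for imaginary~$\alpha$.
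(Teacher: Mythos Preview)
Your argument is correct and uses the same core device as the paper --- forming the Hermitian auxiliary matrix $G(\lambda)=\hat\Phi(\lambda)\bigl(\hat\Phi(\bar\lambda)\bigr)^{\ast}$ and exploiting that $J=Ue^{2\pi i\alpha\sigma_3}U^{-1}$ is Hermitian positive definite for $\alpha\in i\mathbb{R}$ --- but the execution differs in a useful way. The paper first normalizes to $\tilde\Phi=\hat\Phi e^{\tfrac12\lambda x\sigma_3}$, forms $Q(\lambda)=\tilde\Phi(\lambda)(\tilde\Phi(\bar\lambda))^{\ast}$, applies Cauchy's theorem in the upper half plane to get $\int_{\mathbb{R}}Q_+(x)\,dx=0$, and then uses positive definiteness of $(\tilde J^{-1})^{\ast}$ to force each row of $\tilde\Phi_+$ to vanish on $\mathbb{R}$. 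Your route instead uses Hermitianness of $J$ to conclude $G_+=G_-$ on $(0,1)$, so that $G$ is entire (the endpoints being removable since $|\lambda^{\pm\alpha}|$ is bounded for imaginary $\alpha$) and $\mathcal{O}(\lambda^{-2})$ at infinity, whence $G\equiv 0$ by Liouville; then $\hat\Phi\hat\Phi^{\ast}=0$ on $\mathbb{R}\setminus[0,1]$ gives the vanishing. Your version is slightly more elementary in that it avoids the contour-integration step and uses only the Hermitian symmetry $J=J^{\ast}$ (rather than strict positivity) to kill the jump of $G$; positivity enters only at the very end through the trace of $\hat\Phi\hat\Phi^{\ast}$. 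Both arguments are standard variants of the vanishing-lemma technique and are equally rigorous here.
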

\begin{proof}
    To normalize the behavior as $\lambda\to\infty$, we introduce the transformation
    \begin{equation}
        \Tilde{\Phi}(\lambda)=\hat{\Phi}(\lambda)\mathrm{e}^{\frac{1}{2}\lambda x\sigma_3}.
    \end{equation}
    Then we have the normalized behavior at infinity
    \begin{equation}
        \Tilde{\Phi}(\lambda)=\mathcal{O}\left(\frac{1}{\lambda}\right),
    \end{equation}
    and the modified jump condition 
    \begin{equation}\label{eq:jump TPhi}
        \Tilde{\Phi}_+(\lambda)= \Tilde{\Phi}_-(\lambda)
        \Tilde{J}(\lambda),~\lambda \in (0,1),
    \end{equation}
    where
    \begin{equation}
        \Tilde{J}(\lambda)=\mathrm{e}^{-\frac{1}{2}\lambda x\sigma_3}U\mathrm{e}^{2\pi i\alpha\sigma_3}U^{-1}\mathrm{e}^{\frac{1}{2}\lambda x\sigma_3}.
    \end{equation}
    Moreover, $\Tilde{\Phi}(\lambda)$ is analytic in $\mathbb{C}\backslash[0,1]$ and fulfills the asymptotic behaviors
    \begin{equation}
        \Tilde{\Phi}(\lambda)=\mathcal{O}(1)\lambda^{-\alpha\sigma_3},~\lambda\to 0,
    \end{equation}
    and 
    \begin{equation}
         \Tilde{\Phi}(\lambda)=\mathcal{O}(1)(\lambda-1)^{\alpha\sigma_3},~\lambda\to 1.
    \end{equation}
    Next, we consider 
    \begin{equation}
        Q(\lambda)= \Tilde{\Phi}(\lambda) (\Tilde{\Phi}(\overline{\lambda}))^*,~\lambda\in\mathbb{C}\backslash \mathbb{R},
    \end{equation}
    where $(\Phi(\lambda))^*$ denotes the Hermitian conjugate of $ \Phi(\lambda)$. We see that $Q(\lambda)$ is analytic in $\mathbb{C}\backslash \mathbb{R}$, and we have
    \begin{equation}\label{eq:int Q}
        \int _{\mathbb{R}} Q_+(x)dx=\int _\mathbb{R}\Tilde{\Phi}_+(x)\hat{J}(x)(\Tilde{\Phi}_+(x))^* dx=0,
    \end{equation}
    where
    \begin{equation}
        \hat{J}(x)=\begin{cases}
            (\Tilde{J}^{-1}(x))^{*},~&x\in(0,1),\\
            I,~&x\in\mathbb{R}\backslash[0,1].
        \end{cases}
    \end{equation}
    For $\alpha\in i\mathbb{R}$ and $x\in i\mathbb{R}$, the jump matrix $\hat{J}(\lambda)$ is a Hermitian positive definite matrix. Thus, if we define $\boldsymbol{a}(x)=(\Tilde{\Phi}_{11,+}(x),\Tilde{\Phi}_{12,+}(x))$, we have $\boldsymbol{a}(x)\hat{J}(x)\boldsymbol{a}(x)^{*}\ge 0,~x\in\mathbb{R}$.
    This, along with \eqref{eq:int Q}, implies that the first row of $\Tilde{\Phi}(\lambda)$ vanishes in the upper-half of the $\lambda$-plane. Combining this with the jump condition \eqref{eq:jump TPhi}, we conclude that the first row of $\Tilde{\Phi}(\lambda)$ vanishes in the whole plane. Similarly, the second row of $\Tilde{\Phi}(\lambda)$ also vanishes in the whole plane. This completes the proof of the lemma.
\end{proof}

By a standard analysis \cite{DMVZ1, FIKY, FMZ, IIKS}, the vanishing lemma implies the existence of unique solution to the RH problem for $\Phi(\lambda,x)$ for $\alpha\in i\mathbb{R}$ and $x\in i\mathbb{R}$.
\begin{pro}\label{pro:pro1}
    For  $\alpha\in i\mathbb{R}$ and $x\in i\mathbb{R}$, there exists unique solution to the RH problem for  $\Phi(\lambda,x)$. Then $\Phi(\lambda,x)$ is free of poles for $x\in i\mathbb{R}$. Particularly, the solutions to the PV equations $u(x)$, $v(x)$, $y(x)$ defined in \eqref{eq:PV for u}-\eqref{eq:PV for y} and the associated Hamiltonian $H_V(x)$ \eqref{eq:def for HV} are free of poles for $x\in i\mathbb{R}$.
\end{pro}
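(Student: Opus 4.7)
The plan is to deduce Proposition \ref{pro:pro1} from the vanishing lemma just proved by the standard Fredholm--Its--Izergin--Korepin--Slavnov (IIKS) argument. First I would normalize the RH problem for $\Phi(\lambda,x)$ by the right factor $\mathrm{e}^{-\frac{1}{2}\lambda x\sigma_3}$, obtaining a function $\widetilde{\Phi}(\lambda,x)$ which is analytic on $\mathbb{C}\setminus[0,1]$, tends to $I$ at infinity, and has integrable endpoint singularities of order $\lambda^{-\alpha\sigma_3}$ and $(\lambda-1)^{\alpha\sigma_3}$ (recall $\alpha\in i\mathbb{R}$, so these singularities are purely oscillatory). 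The jump matrix $\widetilde{J}(\lambda,x)=\mathrm{e}^{-\frac{1}{2}\lambda x\sigma_3}U\mathrm{e}^{2\pi i\alpha\sigma_3}U^{-1}\mathrm{e}^{\frac{1}{2}\lambda x\sigma_3}$ on $(0,1)$ is smooth and bounded, with an $I+$ decomposition of the form $\widetilde{J}-I = w_+ + w_-$ in the Cauchy--IIKS sense.

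Second, I would recast the normalized RH problem as the singular integral equation
\begin{equation*}
(I-\mathcal{C}_w)\mu = I \quad\text{on } L^2((0,1))\otimes M_2(\mathbb{C}),
\end{equation*}
where $\mathcal{C}_w f = \mathcal{C}_-(f w_+)+\mathcal{C}_+(f w_-)$ with $\mathcal{C}_\pm$ the boundary values of the Cauchy operator on the interval $(0,1)$. The operator $\mathcal{C}_w$ is a compact perturbation of a Fredholm operator of index zero on this space (this is where the purely imaginary nature of $\alpha$ is crucial, since the endpoint factors $\lambda^{-\alpha\sigma_3}$ are bounded). By the Fredholm alternative, solvability of $(I-\mathcal{C}_w)\mu=I$ is equivalent to triviality of $\ker(I-\mathcal{C}_w)$. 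A nonzero element of the kernel would, via the Cauchy transform, produce a nonzero solution of exactly the homogeneous RH problem addressed in the preceding vanishing lemma. Hence the vanishing lemma forces $\ker(I-\mathcal{C}_w)=\{0\}$, so $\mu$ exists and is unique, and $\widetilde{\Phi}(\lambda,x) = I + \frac{1}{2\pi i}\int_0^1 \mu(\zeta)(w_+(\zeta)+w_-(\zeta))\frac{d\zeta}{\zeta-\lambda}$ gives the unique solution of the RH problem for $\Phi(\lambda,x)$.

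Third, I would upgrade this pointwise existence to analytic dependence on $x$. Since $w_\pm(\lambda,x)$ depend analytically (in fact entirely) on $x\in\mathbb{C}$ and the vanishing lemma continues to hold on an open neighborhood of $i\mathbb{R}$ in $x$-space (actually on all of $i\mathbb{R}$), the resolvent $(I-\mathcal{C}_w)^{-1}$ is analytic in $x$ along $i\mathbb{R}$. Consequently the matrix coefficients $\Phi_{-1}(x)$, $\Phi_0^{(0)}(x)$ and $\Phi_0^{(1)}(x)$ extracted from $\Phi(\lambda,x)$ at $\lambda=\infty$, $0$, $1$ are analytic (hence pole-free) in $x\in i\mathbb{R}$. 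The Hamiltonian $H_V(x)=-(\Phi_{-1}(x))_{11}$ is therefore pole-free on $i\mathbb{R}$, and via the recovery formulas \eqref{eq:RH Ham for PV}, \eqref{eq:analyticfac}, \eqref{eq:RH for v1}--\eqref{eq:RH for v2} so are $v(x)$, $y(x)$, $u(x)$ (one must check that the denominators appearing in those formulas, such as $v$ and $y$, do not produce spurious poles: this is handled by observing that the combinations appearing in individual entries of $\Phi_0^{(0)}$ and $\Phi_0^{(1)}$ are themselves analytic, as noted in the remark following Theorem \ref{thm5}).

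The main technical hurdle will be the Fredholm setup at the endpoints $0$ and $1$, where the singularities $\lambda^{\pm\alpha}$ with $\alpha\in i\mathbb{R}$ are oscillatory rather than integrable in the usual sense; the clean way to handle this is to conjugate locally by $\lambda^{\alpha\sigma_3}$ (resp.\ $(\lambda-1)^{-\alpha\sigma_3}$) to move the singular behavior into the jump matrix on small arcs, then apply the IIKS machinery to the resulting $L^2$-bounded problem. Once this functional-analytic setup is in place, the argument is entirely standard and the conclusion of Proposition \ref{pro:pro1} follows.
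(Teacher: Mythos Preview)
Your proposal is correct and is precisely the ``standard analysis'' the paper invokes: the paper does not write out a proof of Proposition~\ref{pro:pro1} but simply cites \cite{DMVZ1, FIKY, FMZ, IIKS} for the fact that the vanishing lemma implies unique solvability via the Fredholm/IIKS singular-integral formulation. Your sketch fills in exactly that argument, so there is nothing to add.
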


At the end of this section, we prove the following special properties of the PV trancendents.
\begin{pro}\label{pro:2}
    Let $\alpha\in i\mathbb{R}$, then the  PV transcendents $u(x)$, $v(x)$ and the Hamiltonian $H_V(x)$ defined in \eqref{eq:PV for u}, \eqref{eq:PV for v} and \eqref{eq:def for HV} under the conditions \eqref{eq:para condition} satisfy the following properties:
    \begin{equation}
        |u(x)|=1,~~\Re(v(x))=0,~~\Re(H_V(x))=0,~~x\in i\mathbb{R}.
    \end{equation}
\end{pro}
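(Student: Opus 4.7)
The plan is to exploit a discrete symmetry of the RH problem for $\Phi$ available precisely when $\alpha\in i\mathbb{R}$ and $x\in i\mathbb{R}$, then read off the stated reality properties from the characterizations \eqref{eq:RH Ham for PV}, \eqref{eq:RH for v1}, \eqref{eq:RH for v2} and \eqref{eq:analyticfac}.

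\emph{Step 1: Symmetry of the RH problem.} Define
\begin{equation*}
Y(\lambda):=\bigl[\Phi(\bar\lambda)^{-1}\bigr]^{*}=\bigl(\Phi(\bar\lambda)^{*}\bigr)^{-1}.
\end{equation*}
I claim that $Y$ satisfies the same RH problem as $\Phi$. The jump condition is immediate: since the jump matrix $Ue^{2\pi i\alpha\sigma_3}U^{-1}$ is Hermitian for $\alpha\in i\mathbb{R}$ (because $U$ is unitary and $e^{2\pi i\alpha\sigma_3}$ is real diagonal), one checks $Y_{+}=Y_{-}Ue^{2\pi i\alpha\sigma_3}U^{-1}$ on $(0,1)$. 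At infinity, using $\bar x=-x$ for $x\in i\mathbb{R}$, the factors $e^{\pm\lambda x\sigma_3/2}$ cancel and one obtains $Y(\lambda)=(I-\Phi_{-1}^{*}/\lambda+\mathcal{O}(\lambda^{-2}))e^{-\lambda x\sigma_3/2}$. At $\lambda=0$, a careful tracking of the branch (using the convention $\arg\lambda\in(0,2\pi)$ and the identity $[\bar\lambda^{\alpha}]^{*}=\lambda^{-\alpha}e^{2\pi i\alpha}$ valid for $\alpha\in i\mathbb{R}$) combines with $[e^{-\pi i\alpha\sigma_3}]^{*}=e^{-\pi i\alpha\sigma_3}$ to produce exactly the form \eqref{eq:local0}, with analytic factor $Y^{(0)}(\lambda)=\{[\Phi^{(0)}(\bar\lambda)]^{-1}\}^{*}$. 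The behavior at $\lambda=1$ is analogous via the principal branch. By the unique solvability stated in Proposition~\ref{pro:pro1},
\begin{equation*}
\Phi(\lambda)\,\Phi(\bar\lambda)^{*}=I.
\end{equation*}

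\emph{Step 2: Consequences for $\Phi_{-1}$, $H_V$, $v$.} Expanding the above identity at infinity (using $\bar x=-x$) yields $\Phi_{-1}+\Phi_{-1}^{*}=0$, i.e., $\Phi_{-1}$ is anti-Hermitian. Hence $(\Phi_{-1})_{11},(\Phi_{-1})_{22}\in i\mathbb{R}$ and $(\Phi_{-1})_{12}(\Phi_{-1})_{21}=-|(\Phi_{-1})_{12}|^{2}\in\mathbb{R}$. The relation \eqref{eq:RH Ham for PV} then gives $\Re H_V(x)=0$. Substituting into \eqref{eq:RH for v2}, each summand of $v=(\Phi_{-1})_{11}+x(\Phi_{-1})_{12}(\Phi_{-1})_{21}+\alpha$ is purely imaginary (using $x,\alpha\in i\mathbb{R}$), so $\Re v(x)=0$.

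\emph{Step 3: Consequences for $u$.} Matching the leading terms of the identity $\Phi^{(0)}(\lambda)=\{[\Phi^{(0)}(\bar\lambda)]^{-1}\}^{*}$ at $\lambda=0$ and the analogous one at $\lambda=1$ (which come from comparing $Y$ and $\Phi$ locally) yields $\Phi_{0}^{(0)},\Phi_{0}^{(1)}\in U(2)$; combined with the determinant computation $\det\Phi_{0}^{(0)}=\det\Phi_{0}^{(1)}=1$ obtained from \eqref{eq:analyticfac}-\eqref{eq:det}, both lie in $SU(2)$. Writing out the $SU(2)$-conditions on the two matrices in \eqref{eq:analyticfac} and using that $(v-2\alpha)/v$ is real (by Step 2) yields
\begin{equation*}
|y|^{2}=\frac{v}{2\alpha-v}\quad\text{from }\Phi_{0}^{(0)},\qquad |yu|^{2}=\frac{v}{2\alpha-v}\quad\text{from }\Phi_{0}^{(1)}.
\end{equation*}
Comparing these gives $|u(x)|^{2}=1$, completing the proof.

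\emph{Main obstacle.} The delicate point is Step~1: one must choose branches consistently so that the map $\lambda\mapsto\bar\lambda$ together with Hermitian conjugation preserves the prescribed local structures at $\lambda=0$ and $\lambda=1$. In particular, the extra factor $e^{2\pi i\alpha}$ produced by $[\bar\lambda^{\alpha}]^{*}$ must exactly compensate the change in $e^{-\pi i\alpha\sigma_3}$ so that the combined local form matches \eqref{eq:local0}; this hinges on $\alpha\in i\mathbb{R}$ and is what makes the symmetry close up.
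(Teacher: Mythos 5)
Your proposal is correct and rests on exactly the same key ingredient as the paper's proof: the Schwarz-type symmetry $\Phi(\lambda)^{-1}=\Phi(\bar\lambda)^{*}$ obtained from the unique solvability in Proposition~\ref{pro:pro1}, applied to the local expansions at $\lambda=0,1$ to get the unitarity of $\Phi^{(0)}_0,\Phi^{(1)}_0$ and hence the identities \eqref{eq:identity1}--\eqref{eq:identity2} yielding $|u|=1$. The only (harmless, and arguably slightly cleaner) variation is that you extract $\Re H_V=0$ and $\Re v=0$ from the anti-Hermiticity of $\Phi_{-1}$ at infinity, whereas the paper derives $\Re v=0$ from \eqref{eq:identity1} and then $\Re H_V=0$ from \eqref{eq:RH for v1}--\eqref{eq:RH for v2} after establishing $|u|=1$.
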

\begin{proof}
From the unique solvability of the RH problem for $\Phi$, we have 
    \begin{equation}
        \Phi^{-1}(\lambda)=(\Phi(\overline{\lambda}))^*.
    \end{equation}
 Then, from \eqref{eq:analyticfac}, we can obtain 
    \begin{align}
        \begin{pmatrix}\label{eq:symmetry 0}
            a_0y\frac{v-2\alpha}{v} & b_0y\\
    a_0 & b_0
        \end{pmatrix}=\begin{pmatrix}
            \overline{b_0} & -\overline{a_0}\\
            -\overline{b_0y} & \overline{a_0y\frac{v-2\alpha}{v} }
        \end{pmatrix},\\
        \begin{pmatrix}
            a_1 & b_1yu\\
    \frac{a_1}{yu}\frac{v}{v-2\alpha} & b_1
        \end{pmatrix}=\begin{pmatrix}\label{eq:symmetry 1}
            \overline{b_1} & -\overline{\frac{a_1}{yu}\frac{v}{v-2\alpha}}\\
            -\overline{b_1yu} & \overline{a_1}
        \end{pmatrix}.
    \end{align}
    So we have 
    \begin{align}
        \frac{|y(x)|^2}{v(x)}(v(x)-2\alpha)=-1,\label{eq:identity1}\\
        \frac{v(x)}{v(x)-2\alpha}=-|y(x)|^2|u(x)|^2.\label{eq:identity2}
    \end{align}
    Then the condition $\Re(v(x))=0$ can be deduced from \eqref{eq:identity1}, since $\alpha$ is a pure imaginary number. Combining \eqref{eq:identity1} and \eqref{eq:identity2}, we see that $|u(x)|^2=1$. This, together with \eqref{eq:RH Ham for PV}, \eqref{eq:RH for v1} and \eqref{eq:RH for v2}, leads to the conclusion that $\Re(H_V(x))=0$ for $x\in i\mathbb{R}$. This completes the proof.
    \end{proof}

\section{Asymptotic analysis of the RH problem for $\Phi$ as $ix\to -\infty$}\label{sec:asy of PV for large x}
In this section, we derive the asymptotic expansions of the solutions to the PV equation as $ix\to-\infty$ by performing the Deift-Zhou nonlinear steepest descent analysis \cite{DIZ, DMVZ1, DMVZ2, DZ} of the RH problem for $\Phi(\lambda,x)$.
\subsection{Normalization}
To normalize the behavior of $\Phi(\lambda,x)$ at infinity, we will introduce the following first transformation 
\begin{equation}
    Y(\lambda)=\Phi(\lambda)\mathrm{e}^{\frac{1}{2}\lambda x\sigma_3}.
\end{equation}
Then $Y(\lambda)$ satisfies the following RH problem.
\subsubsection*{RH problem for $Y$}
\begin{description}
    \item(1)
    $Y(\lambda)$ is analytic for $\lambda\in \mathbb{C}\backslash[0,1]$.
\item (2)
$Y(\lambda)$ satisfies the jump condition
\begin{equation}
    Y_+(\lambda)=Y_-(\lambda)\mathrm{e}^{-\frac{1}{2}\lambda x\sigma_3}U\mathrm{e}^{2\pi i\alpha\sigma_3}U^{-1}\mathrm{e}^{\frac{1}{2}\lambda x\sigma_3},~~\lambda\in(0,1).
\end{equation}
\item (3)
The asymptotic behavior of $Y(\lambda)$ at infinity is
\begin{equation}
    Y(\lambda)=I+\mathcal{O}(\lambda^{-1}), ~~\lambda\to \infty.
\end{equation}
\item (4)
The behavior of $Y(\lambda)$ at $\lambda=0$ is
\begin{equation}\label{eq:Ylocal0}
Y(\lambda)={\Phi}^{(0)}(\lambda,x)\lambda^{-\alpha\sigma_3}\mathrm{e}^{\pi i\alpha\sigma_3}U^{-1}\mathrm{e}^{\frac{1}{2}\lambda x\sigma_3},~~\lambda\to 0.
\end{equation}
\item (5)
The behavior of $Y(\lambda)$ at $\lambda=1$ is
\begin{equation}\label{eq:Ylocal1}
Y(\lambda)={\Phi}^{(1)}(\lambda,x)(\lambda-1)^{\alpha\sigma_3}U^{-1}\mathrm{e}^{\frac{1}{2}\lambda x\sigma_3},~~\lambda\to 1.
\end{equation}
\end{description}
\subsection{Opening the lenses and deformation}
Denote $M=U\mathrm{e}^{2\pi i \alpha\sigma_3}U^{-1}=(m_{ij})_{i,j=1}^2$, 
it follows from the fact that $\alpha\in i\mathbb{R}$ and $U$ is a unitary matrix that  $M$ is a Hermitian matrix and  admits a unique Gauss UL-decomposition:
\begin{equation}
    M=S_US_DS_L=\begin{pmatrix}
        1 & \frac{m_{12}}{m_{22}}\\
        0 & 1
    \end{pmatrix}
    \begin{pmatrix}
        \frac{1}{m_{22}} & 0\\
        0 & m_{22}
    \end{pmatrix}
    \begin{pmatrix}
        1 & 0\\
        \frac{\overline{m_{12}}}{m_{22}} & 1
    \end{pmatrix},
\end{equation}
where 
\begin{equation}\label{eq:def M}
m_{22}=(1-|\omega|^2)\mathrm{e}^{2\pi i\alpha}+|\omega|^2\mathrm{e}^{-2\pi i\alpha},~m_{12}=\omega\sqrt{1-|\omega|^2}(\mathrm{e}^{-2\pi i\alpha}-\mathrm{e}^{2\pi i\alpha}).  
\end{equation}
Based on the factorization of jumps and deformation of contours, we introduce the second transformation
\begin{equation} 
T(\lambda)=Y(\lambda)\begin{cases}
\mathrm{e}^{-\frac{1}{2}\lambda x\sigma_3}S_L^{-1}\mathrm{e}^{\frac{1}{2}\lambda x\sigma_3},& \lambda\in\Omega_1,\\
\mathrm{e}^{-\frac{1}{2}\lambda x\sigma_3}S_U\mathrm{e}^{\frac{1}{2}\lambda x\sigma_3},& \lambda\in\Omega_2,\\
I& elsewhere;
\end{cases}
\end{equation}
see Figure \ref{fig:fig2} for an illustration.
Then $T(\lambda)$ satisfies the following RH problem.
\subsubsection*{RH problem for $T$}
\begin{description}
    \item(1)
    $T(\lambda)$ is analytic for $\lambda\in \mathbb{C}\backslash\Sigma_T$, where the contours are shown in Figure \ref{fig:fig2}.
    \begin{figure}[h]
    \centering
    \includegraphics[width=0.5\textwidth]{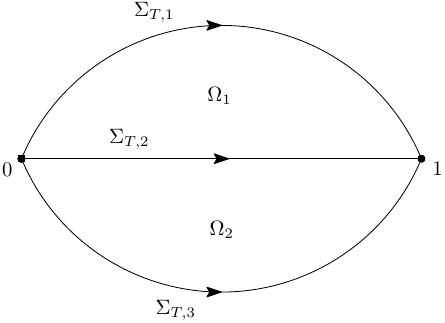}
    \caption{The contours of the RH problem for $T$}
    \label{fig:fig2}
\end{figure}
\item (2)
$T(\lambda)$ satisfies the jump condition
\begin{equation}\label{eq:jump JT}
    T_+(\lambda)=T_-(\lambda)J_T(\lambda),~ \lambda\in\Sigma_T,
\end{equation}
where
\begin{equation}\label{eq:defJT}
    J_T(\lambda)=\begin{cases}
        \mathrm{e}^{-\frac{1}{2}\lambda x\sigma_3}S_L\mathrm{e}^{\frac{1}{2}\lambda x\sigma_3}, & \lambda\in\Sigma_{T,1},\\
        S_D, & \lambda\in\Sigma_{T,2},\\
        \mathrm{e}^{-\frac{1}{2}\lambda x\sigma_3}S_U\mathrm{e}^{\frac{1}{2}\lambda x\sigma_3},& \lambda\in\Sigma_{T,3.}
    \end{cases}
\end{equation}
\item (3)
As $\lambda\to\infty$, we have
\begin{equation}
    T(\lambda)=I+\mathcal{O}\left(\lambda^{-1}\right).
\end{equation}
\item (4)
The behavior of $T(\lambda)$ at $\lambda=0$ is
\begin{equation}\label{eq:Tlocal0}
T(\lambda)={\Phi}^{(0)}(\lambda,x)\lambda^{-\alpha\sigma_3}\mathrm{e}^{\pi i\alpha\sigma_3}U^{-1}\begin{cases}
        S_L^{-1}\mathrm{e}^{\frac{1}{2}\lambda x\sigma_3}, & \lambda\in\Omega_1,\\
        S_U\mathrm{e}^{\frac{1}{2}\lambda x\sigma_3}, & \lambda\in\Omega_2,\\
        \mathrm{e}^{\frac{1}{2}\lambda x\sigma_3} & eleswhere
    \end{cases}
,~~\lambda\to 0.
\end{equation}
\item (5)
The behavior of $T(\lambda)$ at $\lambda=1$ is
\begin{equation}\label{eq:Tlocal1}
T(\lambda)={\Phi}^{(1)}(\lambda,x)(\lambda-1)^{\alpha\sigma_3}U^{-1}\begin{cases}
        S_L^{-1}\mathrm{e}^{\frac{1}{2}\lambda x\sigma_3}, & \lambda\in\Omega_1,\\
        S_U\mathrm{e}^{\frac{1}{2}\lambda x\sigma_3}, & \lambda\in\Omega_2,\\
        \mathrm{e}^{\frac{1}{2}\lambda x\sigma_3} & eleswhere
    \end{cases},~~\lambda\to 1.
\end{equation}
\end{description}

\subsection{Global parametrix}
Since all jump matrices in the RH problem for $T(\lambda)$ tend to the identity matrix as $ix\to -\infty$, except for the one on $(0,1)$, we focus solely on the jump matrix on $(0,1)$. This leads us to the following RH problem for the global parametrix $P^{(\infty)}(\lambda)$.
\subsubsection*{RH problem for $P^{(\infty)}$}
\begin{description}
    \item(1)
    $P^{(\infty)}(\lambda)$ is analytic for $\lambda\in\mathbb{C}\backslash[0,1]$.
\item (2)
$P^{(\infty)}(\lambda)$ satisfies the jump condition 
\begin{equation}
    P_+^{(\infty)}(\lambda)=P_-^{(\infty)}(\lambda)(m_{22})^{-\sigma_3}, ~\lambda\in(0,1).
\end{equation}
\item (3)
As $\lambda\to\infty$, we have 
\begin{equation}
P^{(\infty)}(\lambda)=I+\mathcal{O}(\lambda^{-1}).
\end{equation}
\end{description}

The solution $P^{(\infty)}(\lambda)$ can be constructed explicitly as below
\begin{equation}\label{eq:defPinfty}
    P^{(\infty)}(\lambda)=\left(\frac{\lambda}{\lambda-1}\right)^{\mu\sigma_3},~\mu=\frac{1}{2\pi i}\ln (m_{22}),~\arg(m_{22})\in(-\pi,\pi),
\end{equation}
where the function $\left(\frac{\lambda}{\lambda-1}\right)^{\mu}$ takes the branch cut along $[0,1]$, with its branch fixed by the asymptotic behavior
\begin{equation*}
    \left(\frac{\lambda}{\lambda-1}\right)^\mu\to 1,~\lambda\to \infty.
\end{equation*}

\subsection{Local parametrix at $\lambda=0$}
Near the node point $\lambda=0,1$, the solution $T(\lambda)$ can not be approximated by the global parametrix $P^{(\infty)}(\lambda)$. In this section, we will construct the local parametrix in the neighborhood of $\lambda=0$. Indeed, the local parametrix $P^{(0)}(\lambda)$ at $\lambda=0$ satifies the following RH problem.
\subsubsection*{RH problem for $P^{(0)}$}
\begin{description}
    \item(1)
    $P^{(0)}(\lambda)$ is analytic in $U(0)\backslash\Sigma_T,~U(0)=\lbrace \lambda: |\lambda|<\delta\rbrace$ with $\delta<\frac{1}{2}$.
\item (2)
$P^{(0)}(\lambda)$ has the same jump condition as $T(\lambda)$ for $\lambda\in\Sigma_T\cap U(0).$ 
\item (3)
As $ix\to -\infty$, we have the matching 
condition 
\begin{equation}
    P^{(0)}(\lambda)=\left(I+\mathcal{O}(x^{-1})\right)P^{(\infty)}(\lambda),
\end{equation}
and the error term is uniform for $\lambda\in \partial U(0).$
\item (4)
$P^{(0)}(\lambda)$ remains bounded in a neighborhood of $\lambda=0$.
\end{description}
We are going to solve the RH problem for $P^{(0)}(\lambda)$ by using the confluent hypergeometric parametrix $\Phi^{(CHF)}(\zeta)$ introduced in \ref{sec:Appendix}. For our application, we introduce a matrix-valued parametrix 
$$Z(\zeta)=Z(\zeta;a,b),$$
 which is defined via $\Phi^{(CHF)}(\zeta)$ by the following transformation:
\begin{equation}
    Z(\zeta)=\begin{cases}
        \Phi^{(CHF)}(\zeta)\mathrm{e}^{\frac{1}{2}\pi i(a+b)\sigma_3}, & \zeta\in\Omega_1,\\
        \Phi^{(CHF)}(\zeta)J_3J_4^{-1}\mathrm{e}^{\frac{1}{2}\pi i(b-a)\sigma_3}, & \zeta\in\Omega_2,\\
        \Phi^{(CHF)}(\zeta)J_4^{-1}\mathrm{e}^{\frac{1}{2}\pi i(b-a)\sigma_3}, & \zeta\in\Omega_3,\\
        \Phi^{(CHF)}(\zeta)\mathrm{e}^{\frac{1}{2}\pi i(b-a)\sigma_3}, & \zeta\in\Omega_4,\\
        \Phi^{(CHF)}(\zeta)\begin{pmatrix}
            0 & \mathrm{e}^{\frac{1}{2}\pi ib}\\
            -\mathrm{e}^{-\frac{1}{2}\pi ib} & 0
        \end{pmatrix}\mathrm{e}^{-\frac{1}{2}\pi ia\sigma_3}, & \zeta\in\Omega_5,\\
        \Phi^{(CHF)}(\zeta)J_6\begin{pmatrix}
            0 & \mathrm{e}^{\frac{1}{2}\pi ib}\\
            -\mathrm{e}^{-\frac{1}{2}\pi ib} & 0
        \end{pmatrix}\mathrm{e}^{-\frac{1}{2}\pi ia\sigma_3}, & \zeta\in\Omega_6,\\
        \Phi^{(CHF)}(\zeta)J_7^{-1}J_6\begin{pmatrix}
            0 & \mathrm{e}^{\frac{1}{2}\pi ib}\\
            -\mathrm{e}^{-\frac{1}{2}\pi ib} & 0
        \end{pmatrix}\mathrm{e}^{-\frac{1}{2}\pi ia\sigma_3}, & \zeta\in\Omega_7,\\
        \Phi^{(CHF)}(\zeta)\begin{pmatrix}
            0 & \mathrm{e}^{\frac{1}{2}\pi ib}\\
            -\mathrm{e}^{-\frac{1}{2}\pi ib} & 0
        \end{pmatrix}\mathrm{e}^{\frac{1}{2}\pi ia\sigma_3}, & \zeta\in\Omega_8.
    \end{cases}
\end{equation}
Then $Z(\zeta)$ satisfies the following RH problem.
\subsubsection*{RH problem for $Z$}
\begin{description}
    \item(1)
    $Z(\zeta)$ is analytic for $\zeta\in\mathbb{C}\backslash\lbrace\cup_{k=1}^{3}\Gamma_{Z,k}\rbrace$, where the contours are defined below
\begin{equation*}
    \Gamma_{Z,1}=\mathrm{e}^{\frac{\pi i}{2}}\mathbb{R}^+,~ \Gamma_{Z,2}=\mathrm{e}^{\frac{3\pi i}{4}}\mathbb{R}^+, ~\Gamma_{Z,3}=\mathrm{e}^{\frac{\pi i}{4}}\mathbb{R}^+;
\end{equation*}
see Figure \ref{fig:fig3} for an illustration.
\begin{figure}[h]
    \centering
    \includegraphics[width=0.5\textwidth]{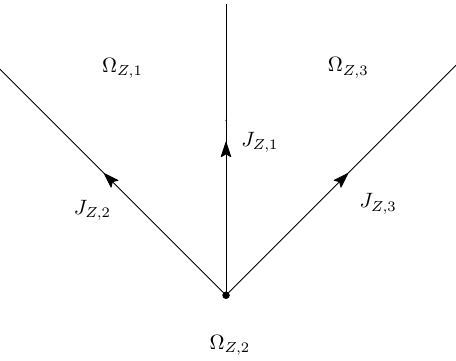}
    \caption{The jump contours and the jump matrices of the RH problem for $Z$.}
    \label{fig:fig3}
\end{figure}
\item (2)
$Z(\zeta)$ satisfies the jump condition
\begin{equation}\label{eq:jumpZ}
    Z_+(\zeta)=Z_-(\zeta)J_{Z,k}, ~\zeta\in\Sigma_{Z,k}, ~k=1,2,3,
\end{equation}
where
\begin{equation}\label{eq:def jump J}
    J_{Z,1}=\mathrm{e}^{2\pi ib\sigma_3},
     ~J_{Z,2}=\begin{pmatrix}
        1 & 0\\
        \mathrm{e}^{\pi i(2b-a)}-\mathrm{e}^{\pi ia} & 1
    \end{pmatrix},~
     J_{Z,3}=\begin{pmatrix}
        1 & \mathrm{e}^{-\pi ia}-\mathrm{e}^{\pi i(2b+a)} \\
        0 & 1
    \end{pmatrix}.
\end{equation}
\item (3)
As $\zeta\to\infty$, we have 
\begin{equation}\label{eq:Z at infity}
   Z(\zeta)=\left(I+\frac{Z_1(a,b)}{\zeta}
   +\mathcal{O}(\zeta^{-2})\right)\zeta^{-b\sigma_3}\mathrm{e}^{-\frac{1}{2}\zeta\sigma_3}\begin{cases}
       \mathrm{e}^{\frac{3}{2}\pi ib\sigma_3}, & \arg(\zeta)\in(-\frac{3\pi}{2},-\frac{\pi}{2}),\\
       
        \mathrm{e}^{-\frac{1}{2}\pi ib\sigma_3}, & \arg(\zeta)\in(-\frac{\pi}{2},\frac{\pi}{2}),
   \end{cases}
\end{equation}
where\begin{equation}\label{eq:Z1}
     Z_1(a,b)=(a^2-b^2)\begin{pmatrix}
       1 & \frac{\Gamma(a-b)}{\Gamma(a+b+1)}\\
       -\frac{\Gamma(a+b)}{\Gamma(a-b+1)} & -1
   \end{pmatrix}   
\end{equation} 
and the branch cut for $\zeta^{-b}$ is taken along the positive imaginary axis such that $\arg \zeta\in\left(-\frac{3\pi}{2},\frac{\pi}{2}\right)$.
\item (4)
As $\zeta\to 0, ~Z(\zeta)$ has the following asymptotic behavior:
\begin{equation}\label{eq:localZ0}
    Z(\zeta)=Z^{(0)}(\zeta)\zeta^{a\sigma_3}B_1B_2^{-1}\begin{cases}
        \mathrm{e}^{\frac{1}{2}\pi ib\sigma_3}, & \zeta\in\Omega_{Z,1},\\
       \mathrm{e}^{\frac{1}{2}\pi ib\sigma_3}J_{Z,2},& \zeta\in\Omega_{Z,2},\\
        \mathrm{e}^{-\frac{3}{2}\pi ib\sigma_3}, & \zeta\in\Omega_{Z,3},
    \end{cases}
\end{equation}
where $Z^{(0)}(\zeta)$ is an analytic factor near the origin,
$\Omega_{Z,k},~k=1,2,3$ are shown in Figure \ref{fig:fig3}, and the branch cut for $\zeta^a$ is chosen such that $\arg\zeta\in\left(-\frac{3\pi}{2},\frac{\pi}{2}\right)$. The constant matrix 
\begin{equation}\label{eq:defC1}
    B_1=\begin{pmatrix}
        1 & \frac{\sin(\pi(a+b))}{\sin(2\pi a)}\\
        0 & 1
    \end{pmatrix},~
    B_2=\begin{pmatrix}
        1 & 0\\
        \mathrm{e}^{-\pi i(a-b)} & 1
    \end{pmatrix}.
\end{equation}
Here the function 
\begin{small}
\begin{equation}\label{eq:defZ0}
\begin{split}
    &Z^{(0)}(\zeta)\\
    &=\mathrm{e}^{\frac{1}{2}b\pi i\sigma_3}\mathrm{e}^{-\frac{\zeta}{2}}\begin{pmatrix}
        \mathrm{e}^{-\frac{\pi i}{2}(a+b)}\frac{\Gamma(1+a-b)}{\Gamma(1+2a)}\phi(a+b,1+2a,\zeta) & -\mathrm{e}^{\frac{\pi i}{2}(a-b)}\frac{\Gamma(2a)}{\Gamma(a+b)}\phi(-a+b,1-2a,\zeta)\\
        \mathrm{e}^{-\frac{\pi i}{2}(a-b)}\frac{\Gamma(1+a+b)}{\Gamma(1+2a)}\phi(1+a+b,1+2a,\zeta) & \mathrm{e}^{\frac{\pi i}{2}(a+b)}\frac{\Gamma(2a)}{\Gamma(a-b)}\phi(1-a+b,1-2a,\zeta)
    \end{pmatrix}\mathrm{e}^{-\frac{1}{2}a\pi i\sigma_3}
    \end{split}
\end{equation}
\end{small}
for $\zeta\in\Omega_{Z,1},~ 2a\notin\mathbb{N}$, and $\phi(a,b,\zeta)$ is the confluent hypergeometric function defined as follows \begin{equation}\label{eq:defchf}
    \phi(a,b,\zeta)=1+\sum_{k=1}^{\infty}\frac{a(a+1)...(a+k-1)\zeta^k}{b(b+1)...(b+k-1)k!}.
\end{equation}
The expression of $Z(\zeta)$ in the other regions is then determined by using \eqref{eq:localZ0}, \eqref{eq:defZ0} and the jump condition \eqref{eq:jumpZ}. The case $2a\in\mathbb{N}$ can be constructed in a similar way; cf. \cite{DZ22}.

\end{description}

With a selection of the following parameters in $Z(\zeta;a,b)$,
\begin{equation}
    a=-\alpha,~ b=-\mu, 
\end{equation}
then the RH problem for $P^{(0)}(\lambda)$ can then be constructed by
\begin{equation}\label{eq:solutionSl}
    P^{(0)}(\lambda)=P^{(\infty)}(\lambda)(x\lambda)^{-\mu\sigma_3}\mathrm{e}^{\frac{3}{2}\pi i\mu\sigma_3}\mathrm{e}^{-\xi\sigma_3}Z(x\lambda;-\alpha,-\mu)\mathrm{e}^{\xi\sigma_3}\mathrm{e}^{\frac{1}{2}\lambda x\sigma_3},
\end{equation}
where 
\begin{equation}\label{eq:e2beta}
    \mathrm{e}^{2\xi}=\frac{\mathrm{e}^{2\pi i\mu}-\mathrm{e}^{-2\pi i\alpha}}{m_{12}\mathrm{e}^{-\pi i\alpha}},
\end{equation}
with $m_{12}$ given in \eqref{eq:def M}.
From \eqref{eq:defPinfty} and \eqref{eq:Z at infity}, as $ix\to-\infty$, we have the following expansion
\begin{equation}\label{eq:asySl}
    P^{(0)}(\lambda)=\left(I-\frac{P_0(\lambda)}{ix}+\mathcal{O}(x^{-2})\right)P^{(\infty)}(\lambda),~\lambda\in U(0)\cap\Omega_1,
\end{equation}
where
\begin{equation}\label{eq:defM1}
    P_0(\lambda)=\frac{1}{\lambda}\mathrm{e}^{-\frac{\pi}{2}i}(\lambda-1)^{-\mu\sigma_3}(-ix)^{-\mu\sigma_3}\mathrm{e}^{\pi i\mu\sigma_3}\mathrm{e}^{-\xi\sigma_3}Z_1(-\alpha,-\mu)\mathrm{e}^{\xi\sigma_3}\mathrm{e}^{-\pi i\mu\sigma_3}(-ix)^{\mu\sigma_3}(\lambda-1)^{\mu\sigma_3},
\end{equation}
with $-ix>0$ and the  matrix $Z_1(a,b)$ defined in \eqref{eq:Z1}.
\subsection{Local parametrix at $\lambda=1$}
It remains to construct the local parametrix near $\lambda=1$. We seek a function $P^{(1)}(\lambda)$ solving a RH problem in the neighborhood of $U(1)=\lbrace \lambda: |\lambda-1|<\delta\rbrace$ with $\delta<\frac{1}{2}$ as follows.
\subsubsection*{RH problem for $P^{(1)}$}
\begin{description}
    \item(1)
    $P^{(1)}(\lambda)$ is analytic in $U(1)\backslash\Sigma_T$ .
\item (2)
$P^{(1)}(\lambda)$ has the same jump condition as $T(\lambda)$ for $\lambda\in\Sigma_T\cap U(1).$ 
\item (3)
As $ix\to -\infty$, we have the matching 
condition 
\begin{equation}
    P^{(1)}(\lambda)=\left(I+\mathcal{O}(x^{-1})\right)P^{(\infty)}(\lambda),
\end{equation}
where the error term is uniform for $\lambda\in \partial U(1).$
\item (4)
$P^{(1)}(\lambda)$ remains bounded in a neighborhood of $\lambda=1$.
\end{description}

We choose $a=\alpha$ and $b=-\mu$ in the confluent hypergeometric parametrix $Z(\zeta;a,b)$. Consequently, we obtain the explicitly solution for $P^{(1)}(\lambda)$, as follows:
\begin{equation}\label{eq:solutionSr2}
   P^{(1)}(\lambda)=P^{(\infty)}(\lambda)(-x (\lambda-1))^{\mu\sigma_3}\mathrm{e}^{-\frac{1}{2}x\sigma_3}\mathrm{e}^{-\xi\sigma_3}\sigma_1\mathrm{e}^{-\frac{1}{2}\pi i\mu\sigma_3}Z(-x (\lambda-1);\alpha,-\mu)\sigma_1\mathrm{e}^{\xi\sigma_3}\mathrm{e}^{\frac{1}{2}\lambda x\sigma_3}.
\end{equation}
Then, we have the expansion
\begin{equation}\label{eq:asy P1}
    P^{(1)}(\lambda)=\left(I-\frac{P_1(\lambda)}{ix}+\mathcal{O}(x^{-2})\right)P^{(\infty)}(\lambda)
\end{equation}
as $ix\to -\infty$ uniformly for $\lambda\in U(1)\cap\Omega_1$. Here we have
\begin{equation}\label{eq:defM2}
    P_1(\lambda)=\frac{1}{\lambda-1}\mathrm{e}^{\frac{1}{2}\pi i}\lambda^{\mu\sigma_3}\mathrm{e}^{-\frac{1}{2}x\sigma_3}(-ix)^{\mu\sigma_3}\mathrm{e}^{-\xi\sigma_3}\sigma_1Z_1(\alpha,-\mu)\sigma_1\mathrm{e}^{\xi\sigma_3}\mathrm{e}^{\frac{1}{2}x\sigma_3}(-ix)^{-\mu\sigma_3}\lambda^{-\mu\sigma_3},
\end{equation}
with $-ix>0$ and the  matrix $Z_1(a,b)$ defined in \eqref{eq:Z1}.

\subsection{Final transformation}
In the final transformation, we define 
\begin{equation}\label{eq:def R}
    R(\lambda)=\begin{cases}
        T(\lambda)P^{(1)}(\lambda)^{-1}, & \lambda\in U(1)\backslash\Sigma_T,\\
        T(\lambda)P^{(0)}(\lambda)^{-1}, & \lambda\in U(0)\backslash\Sigma_T,\\
        T(\lambda)P^{(\infty)}(\lambda)^{-1}& eleswhere.
    \end{cases}
\end{equation}
Then $R(\lambda)$ satisfies the following RH problem.
\subsubsection*{RH problem for $R$}
\begin{description}
    \item(1)
    $R(\lambda)$ is analytic for $\lambda \in \mathbb{C}\backslash\Sigma_R$, where the contours are shown in Figure \ref{fig:fig4}.
    \begin{figure}[h]
    \centering
    \includegraphics[width=0.5\textwidth]{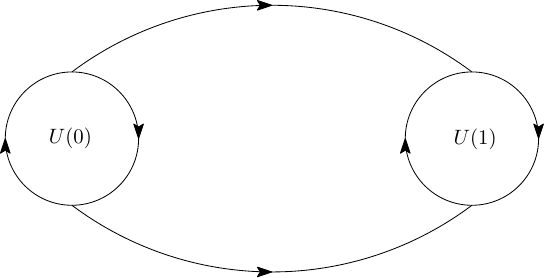}
    \caption{The contours of the RH problem for $R$.}
    \label{fig:fig4}
\end{figure}
\item (2)
$R(\lambda)$ satisfies the jump condition 
\begin{equation}
    R_+(\lambda)=R_-(\lambda)J_R(\lambda),
\end{equation}
where
\begin{equation}
    J_R(\lambda)=\begin{cases}
        P^{(1)}(\lambda)P^{(\infty)}(\lambda)^{-1}, & \lambda \in \partial U(1),\\
        P^{(0)}(\lambda)P^{(\infty)}(\lambda)^{-1}, & \lambda \in \partial U(0),\\
        P^{(\infty)}(\lambda)J_T(\lambda)P^{(\infty)}(\lambda)^{-1} & elsewhere.
    \end{cases}
\end{equation}
\item (3)
As $\lambda \to \infty$, we have
\begin{equation}
    R(\lambda)=I+\frac{R_1(x)}{\lambda}+\mathcal{O}(\lambda^{-2}).
\end{equation}
\end{description}
It follows from \eqref{eq:defJT} and \eqref{eq:defPinfty} that the jump matrix $P^{(\infty)}(\lambda)J_T(\lambda)P^{(\infty)}(\lambda)^{-1}$ tends to the identity matrix exponentially fast as $ix\to-\infty$, uniformly for $\lambda
\in\Sigma_R\backslash\lbrace\partial U(0)\cup\partial U(1)\rbrace$. For $\lambda\in\partial U(0)\cup\partial U(1)$, it follows from \eqref{eq:asySl} and \eqref{eq:asy P1} that
\begin{equation}\label{eq:asyJR2}
    J_R(\lambda)=I-\frac{J_{R,1}(\lambda)}{ix}+\mathcal{O}(x^{-2}), ~ix\to -\infty.
\end{equation}
This shows that the RH problem for $R(\lambda)$ is a small norm problem when $ix\to-\infty$, which gives us the following asymptotic approximation 
\begin{equation}\label{eq:asyR2}
    R(\lambda)=I-\frac{R^{(1)}(\lambda)}{ix}+\mathcal{O}(x^{-2})
\end{equation}
uniformly for $\lambda\in\mathbb{C}\backslash\Sigma_R$.

By \eqref{eq:asyJR2}, \eqref{eq:asyR2} and Plemelj's formula, we have
\begin{equation}
    R^{(1)}(\lambda)=\frac{1}{2\pi i}\int_{\partial U(0)}\frac{J_{R,1}(\zeta)}{\zeta-\lambda}d\zeta+\frac{1}{2\pi i}\int_{\partial U(1)}\frac{J_{R,1}(\zeta)}{\zeta-\lambda}d\zeta,
\end{equation}
where 
\begin{equation}\label{eq:defJR1}
    J_{R,1}(\lambda)=\begin{cases}
        P_0(\lambda), &\lambda\in\partial U(0),\\
        P_1(\lambda), &\lambda\in\partial U(1).
    \end{cases}
\end{equation}
For later use, we derive the estimate for $R^{(1)}(\lambda)$ below
\begin{equation}\label{eq:intR1}
    R^{(1)}(\lambda)=\frac{1}{\lambda}\left(-\frac{1}{2\pi i}\int_{\partial U(0)}J_{R,1}(\zeta)d\zeta-\frac{1}{2\pi i}\int_{\partial U(1)}J_{R,1}(\zeta)d\zeta\right)+\mathcal{O}(\lambda^{-2}).
\end{equation}
A combination of \eqref{eq:defM1}, \eqref{eq:defM2}, \eqref{eq:defJR1} and \eqref{eq:intR1} yields the following estimate for $R_1(x)$
\begin{equation}\label{eq:hatR1}
    R_1(x)=\frac{\mu^2-\alpha^2}{x}\begin{pmatrix}
        -2 & |\frac{\Gamma(\mu-\alpha)}{\Gamma(-\alpha-\mu)}|\frac{2i}{\alpha+\mu} \mathrm{e}^{-2\xi-\frac{1}{2}x}\sin{(d(x))}\\
        |\frac{\Gamma(\mu+\alpha)}{\Gamma(\alpha-\mu)}|\frac{2i}{\alpha-\mu} \mathrm{e}^{2\xi+\frac{1}{2}x}\sin{(d(x))} & 2
    \end{pmatrix}+\mathcal{O}(x^{-2}),
\end{equation}
where
\begin{equation}\label{eq:def d(x)}
    d(x)=-\frac{ix}{2}+2\mu i\ln(-ix)+\arg{\Gamma(\mu-\alpha)}-\arg{\Gamma(-\alpha-\mu)}.
\end{equation}

\subsection{Proof of Theorem \ref{thm:2}: asymptotics of the PV transcendents as $ix\to-\infty$}\label{subsec:proof large x}
Using the results of the nonlinear steepest descent analysis of the RH problem for $\Phi$, we obtain the asymptotic expansions for one parameter family of solutions to
the PV equation as $ix\to -\infty$. Tracking back the series of transformations performed in Section \ref{sec:asy of PV for large x}  
\begin{equation*}
    \Phi\mapsto Y\mapsto T\mapsto R,
\end{equation*}
we have for large $\lambda$,
\begin{equation}
    \Phi(\lambda)=R(\lambda)P^{(\infty)}(\lambda)\mathrm{e}^{-\frac{1}{2}\lambda x\sigma_3}.
\end{equation}
This leads to
\begin{equation}\label{eq:defY-1}
     \Phi_{-1}(x)=R_1(x)+\mu\sigma_3.
\end{equation}
Combining \eqref{eq:hatR1} and \eqref{eq:defY-1}, we derive that as $ix\to -\infty$,
\begin{align}
    (\Phi_{-1})_{11}=\mu+\frac{2}{x}(\alpha^2-\mu^2)+\mathcal{O}(x^{-2}), \\    
    (\Phi_{-1})_{12}(\Phi_{-1})_{21}=\frac{4(\mu^2-\alpha^2)}{x^2}\sin^2{(d(x))}+\mathcal{O}(x^{-3}).
\end{align}
Using \eqref{eq:RH for v2} and \eqref{eq:RH Ham for PV}, we then obtain the asymptotic expansions of $v(x)$ and $H_V(x)$ given in \eqref{eq:asyofv} and \eqref{eq:asyofHV} as $ix\to -\infty$. 

In order to derive the asymptotic behavior of $u(x)$, $y(x)$ and $\hat{u}(x)$, we use the local behavior of $T(\lambda)$ near $\lambda=0$ and $\lambda=1$. From \eqref{eq:Tlocal0} and \eqref{eq:solutionSl}, it follows that as $\lambda\to 0$,
\begin{equation}    T(\lambda)={\Phi}^{(0)}_0(x)\lambda^{-\alpha\sigma_3}\mathrm{e}^{\pi i\alpha\sigma_3}U^{-1}S_L^{-1}, ~\lambda\in\Omega_1,
\end{equation}
\begin{equation}
     P^{(0)}(\lambda)=\mathrm{e}^{-\xi\sigma_3}(-ix)^{-\mu\sigma_3}\mathrm{e}^{-\frac{\pi i}{2}\mu\sigma_3}Q(0) (-ix)^{-\alpha\sigma_3}\lambda^{-\alpha\sigma_3}B_1B_2^{-1}\mathrm{e}^{-\frac{\pi i}{2}\mu\sigma_3}\mathrm{e}^{\xi\sigma_3},
\end{equation}
where 
\begin{equation}
    Q(0)=\begin{pmatrix}
        \mathrm{e}^{\frac{\pi i}{2}(\alpha+\mu)}\frac{\Gamma(1+\mu-\alpha)}{\Gamma(1-2\alpha)} &  -\mathrm{e}^{\frac{\pi i}{2}(-\alpha+\mu)}\frac{\Gamma(-2\alpha)}{\Gamma(-\alpha-\mu)}\\
         \mathrm{e}^{\frac{\pi i}{2}(\alpha-\mu)}\frac{\Gamma(1-\mu-\alpha)}{\Gamma(1-2\alpha)} & \mathrm{e}^{-\frac{\pi i}{2}(\alpha+\mu)}\frac{\Gamma(-2\alpha)}{\Gamma(-\alpha+\mu)}
    \end{pmatrix},
\end{equation}
and $B_1$, $B_2$ are defined in \eqref{eq:defC1}.
By using \eqref{eq:def R} and \eqref{eq:asyR2}, we can calculate that as $ix\to-\infty$,
\begin{equation}\label{eq:asy Phi0}
    {\Phi}^{(0)}_0(x)=\mathrm{e}^{-\xi\sigma_3}(-ix)^{-\mu\sigma_3}\mathrm{e}^{-\frac{\pi i}{2}\mu\sigma_3}Q(0) (ix)^{-\alpha\sigma_3}f_1^{\sigma_3}+\mathcal{O}(x^{-1}),
\end{equation}
where $f_1=\omega\mathrm{e}^{\xi-\frac{\pi i}{2}\mu}$.

From \eqref{eq:asy Phi0} and \eqref{eq:analyticfac}, we  obtain the asymptotic expansions of $a_0(x),~ b_1(x),~y(x),~ v(x)$ and $u(x)$ as $ix\to-\infty$. Specifically, we have the following asymptotic expansions 
\begin{align}
    a_0\frac{y}{v}(v-2\alpha)=&\mathrm{e}^{-\xi+\frac{\pi i}{2}\alpha}(-ix)^{-\mu}(ix)^{-\alpha}\frac{\Gamma(1+\mu-\alpha)}{\Gamma(1-2\alpha)}f_1+\mathcal{O}(x^{-1}),\label{eq:a0y(v-2al)/y}\\
    b_0y=&-\mathrm{e}^{-\xi-\frac{\pi i}{2}\alpha}(-ix)^{-\mu}(ix)^{\alpha}\frac{\Gamma(-2\alpha)}{\Gamma(-\alpha-\mu)}f_1^{-1}+\mathcal{O}(x^{-1}),\label{eq;b0y}\\
    a_0=&\mathrm{e}^{\xi+\frac{\pi i}{2}\alpha}(-ix)^{\mu}(ix)^{-\alpha}\frac{\Gamma(1-\mu-\alpha)}{\Gamma(1-2\alpha)}f_1+\mathcal{O}(x^{-1}),\label{eq:a0}\\
    b_0=&\mathrm{e}^{\xi-\frac{\pi i}{2}\alpha}(-ix)^{\mu}(ix)^{\alpha}\frac{\Gamma(-2\alpha)}{\Gamma(\mu-\alpha)}f_1^{-1}+\mathcal{O}(x^{-1}).\label{eq:b0}
\end{align}

The derivation of ${\Phi}^{(1)}_0(x)$ is analogous to that of ${\Phi}^{(0)}_0(x)$, from \eqref{eq:Tlocal1} and \eqref{eq:solutionSr2} we obtain that as $\lambda\to 1$,
\begin{equation} T(\lambda)={\Phi}^{(1)}_0(x)(\lambda-1)^{\alpha\sigma_3}U^{-1}S_L^{-1}\mathrm{e}^{\frac{1}{2} x\sigma_3}, ~\lambda\in\Omega_1.
\end{equation}
\begin{equation}
\begin{small}
    P^{(1)}(\lambda)=(-ix)^{\mu\sigma_3}\mathrm{e}^{-\xi\sigma_3}\mathrm{e}^{-\frac{1}{2}x\sigma_3}\sigma_1\mathrm{e}^{-\frac{\pi i}{2}\mu\sigma_3}Q(1)\mathrm{e}^{-\pi i\alpha\sigma_3} (-ix)^{\alpha\sigma_3}(\lambda-1)^{\alpha\sigma_3}B_1B_2^{-1}\mathrm{e}^{\frac{3\pi i}{2}\mu\sigma_3}\sigma_1\mathrm{e}^{\xi\sigma_3}\mathrm{e}^{\frac{1}{2}x\sigma_3}.
\end{small}
\end{equation}
where
\begin{equation}
    Q(1)=\begin{pmatrix}
        \mathrm{e}^{\frac{\pi i}{2}(-\alpha+\mu)}\frac{\Gamma(1+\mu+\alpha)}{\Gamma(1+2\alpha)} &  -\mathrm{e}^{\frac{\pi i}{2}(\alpha+\mu)}\frac{\Gamma(2\alpha)}{\Gamma(\alpha-\mu)}\\
         \mathrm{e}^{\frac{\pi i}{2}(-\alpha-\mu)}\frac{\Gamma(1-\mu+\alpha)}{\Gamma(1+2\alpha)} & \mathrm{e}^{\frac{\pi i}{2}(\alpha-\mu)}\frac{\Gamma(2\alpha)}{\Gamma(\alpha+\mu)}  
    \end{pmatrix}.
\end{equation}
With a similar computation, we have as $ix\to-\infty$, 
\begin{equation}    
{\Phi}^{(1)}_0(x)=(-ix)^{\mu\sigma_3}\mathrm{e}^{-\xi\sigma_3}\mathrm{e}^{-\frac{1}{2}x\sigma_3}\sigma_1\mathrm{e}^{-\frac{\pi i}{2}\mu\sigma_3}Q(1) (-ix)^{\alpha\sigma_3}\mathrm{e}^{-\pi i\alpha\sigma_3}\begin{pmatrix}
    f_2 & 0\\
    0 & -f_2^{-1}
\end{pmatrix}+\mathcal{O}(x^{-1}),
\end{equation}
where $f_2=\mathrm{e}^{(\xi+\pi i\alpha-\frac{\pi i}{2}\mu)}\omega$. Therefore, we have the following asymptotic expansions as $ix\to-\infty$, 
\begin{align}
    a_1=&\mathrm{e}^{-\xi-\frac{1}{2}x-\frac{3\pi i}{2}\alpha}(-ix)^{\alpha+\mu}\frac{\Gamma(1-\mu+\alpha)}{\Gamma(1+2\alpha)}f_2+\mathcal{O}(x^{-1}),\label{eq:a1}\\
    b_1yu=&-\mathrm{e}^{-\xi-\frac{1}{2}x+\frac{3\pi i}{2}\alpha}(-ix)^{-\alpha+\mu}\frac{\Gamma(2\alpha)}{\Gamma(\mu+\alpha)}f_2^{-1}+\mathcal{O}(x^{-1}),\label{eq:b1yu}\\
    \frac{a_1}{yu}\frac{v}{v-2\alpha}=&\mathrm{e}^{\xi+\frac{1}{2}x-\frac{3\pi i}{2}\alpha}(-ix)^{\alpha-\mu}\frac{\Gamma(1+\mu+\alpha)}{\Gamma(1+2\alpha)}f_2+\mathcal{O}(x^{-1}),\label{eq:a1v/yu(v-2al)}\\
    b_1=&\mathrm{e}^{\xi+\frac{1}{2}x+\frac{3\pi i}{2}\alpha}(-ix)^{-\alpha-\mu}\frac{\Gamma(2\alpha)}{\Gamma(\alpha-\mu)}f_2^{-1}+\mathcal{O}(x^{-1}).\label{eq:b1}
\end{align}
Combining \eqref{eq:a0y(v-2al)/y}-\eqref{eq:b0}, \eqref{eq:a1}-\eqref{eq:b1}, \eqref{eq:e2beta}
and \eqref{eq:def M}, we obtain \eqref{eq:asyofu}, \eqref{eq:asyofy} and \eqref{eq:asyofhu}. This completes the proof of the large-$x$ asymptotic expansions of Theorem \ref{thm:2}.

\section{Asymptotic analysis of the RH problem for $\Phi$ as $x\to0$}\label{sec:asy of PV for small x}
In this section, we will calculate the asymptotic expansions of the PV transcendents as $x\to0$. From Proposition \ref{pro:pro1}, we observe that the RH problem for $\Phi(\lambda,x)$ is pole-free for $x\in i\mathbb{R}$. Thus, we can derive the leading asymptotic behavior of the PV transcendents as $x\to0$ by finding the exact solution of the RH problem for $\Phi(\lambda,x)$ at $x=0$.
\subsection{The solution of The RH problem for $\Phi$ at $x=0$}
Let $x=0$, we obtain the following RH problem for $\Phi(\lambda,0)$.
\subsubsection*{RH problem for $\Phi(\lambda,0)$}
\begin{description}
    \item(1)
    $\Phi(\lambda,0)$ is analytic for $\lambda\in\mathbb{C}\backslash[0,1]$.
\item (2)
$\Phi(\lambda,0)$ satisfies the jump condition
\begin{equation}
\Phi_+(\lambda,0)=\Phi_-(\lambda,0)U\mathrm{e}^{2\pi i\alpha\sigma_3}U^{-1}, ~\lambda\in(0,1).
\end{equation}
\item (3)
The asymptotic behavior of $\Phi(\lambda,0)$ at infinity is
\begin{equation}
\Phi(\lambda,0)=I+\frac{\Phi_{-1}(0)}{\lambda}+\mathcal{O}(\lambda^{-2}).
\end{equation}
\item (4)
The behavior of $\Phi(\lambda,0)$ at $\lambda=0$ is
\begin{equation}
\Phi(\lambda,0)={\Phi}^{(0)}(\lambda)\lambda^{-\alpha\sigma_3}\mathrm{e}^{\pi i\alpha\sigma_3}U^{-1}, ~ \lambda\to 0.
\end{equation}
\item (5)
The behavior of $\Phi(\lambda,0)$ at $\lambda=1$ is
\begin{equation}
\Phi(\lambda,0)={\Phi}^{(1)}(\lambda)(\lambda-1)^{\alpha\sigma_3}U^{-1},  ~\lambda\to 1.
\end{equation}
\end{description}
Then $\Phi(\lambda,0)$ can be solved explicitly by
\begin{equation}\label{eq:solution of Phi at x=0}
    \Phi(\lambda,0)=U\left(\frac{\lambda}{\lambda-1}\right)^{-\alpha\sigma_3}U^{-1},
\end{equation}
where the function $\left(\frac{\lambda}{\lambda-1}\right)^{\alpha}$ takes the branch cut along $[0,1]$, with its branch fixed by the asymptotic condition
\begin{equation*}
    \left(\frac{\lambda}{\lambda-1}\right)^\alpha \to 1,~\lambda\to \infty.
\end{equation*}

\subsection{Proof of Theorem \ref{thm:2}: asymptotics of the PV transcendents as $x\to 0$ }
From \eqref{eq:solution of Phi at x=0}, we find that 
\begin{equation}
    \Phi_{-1}(0)=-\alpha U\sigma_3U^{-1}.
\end{equation}
This, together with Proposition \ref{pro:pro1},  allows us to derive the Taylor expansion of the PV transcendents at $x=0$:
\begin{equation}
H_V(x)=-\alpha(2|\omega|^2-1)+\mathcal{O}(x),~~~x\to 0,
\end{equation}
\begin{equation}
v(x)=2\alpha(1-|\omega|^2))+\mathcal{O}(x),~~~x\to 0,
\end{equation}
\begin{equation}\label{eq:analytic factor at x=i0}
    {\Phi}^{(1)}_0(x)=U+\mathcal{O}(x),~{\Phi}^{(0)}_0(x)=U+\mathcal{O}(x),~~~x\to 0.
\end{equation}
From \eqref{eq:analytic factor at x=i0}, we
conclude that
\begin{equation}
    u(x)=1+\mathcal{O}(x),~~~x\to 0,
\end{equation}
\begin{equation}
    y(x)=\frac{\sqrt{1-|\omega|^2}}{\overline{\omega}}+\mathcal{O}(x),~\omega\neq 0,~~~x\to 0,
\end{equation}
\begin{equation}
    \hat{u}(x)=1+\mathcal{O}(x),~~~x\to 0.
\end{equation}
This completes the proof of the small-$x$ asymptotic expansions of Theorem \ref{thm:2}. This, together with the large-$x$ asymptotic formulas obtained in Section \ref{subsec:proof large x}, the existence results given in Proposition \ref{pro:pro1} and the properties in Proposition \ref{pro:2}, completes the proof of Theorem \ref{thm:2}.

\section{Asymptotic analysis of the RH problem for $\Psi$ as $s\to-\infty$}\label{sec: large t}

In this section, we perform the Deift-Zhou nonlinear steepest descent analysis \cite{DIZ, DMVZ1, DMVZ2, DZ} of the RH problem for $\Psi(\lambda,\Vec{s})$ as $s\to-\infty$, where the variables $\Vec{s}=(s_1,s_2)$ are given by
\begin{equation}
s_1=s+\frac{\tau}{\sqrt{-s}},~ s_2=s-\frac{\tau}{\sqrt{-s}}, ~\tau \ge 0.
\end{equation}

\subsection{Normalization and deformations of the jump curves}
To normalize the asymptotic behavior of $\Psi(\lambda)$ at infinity, we introduce the $g$-function
\begin{equation}\label{eq:def g}
g(z)=\frac{z^3}{3}-2z.
\end{equation}
It is direct to see that $g(z)$ has two saddle points, namely, the points satisfying $g'(z)=0$,
\begin{equation}
z_+=\sqrt{2},~ z_-=-\sqrt{2}.
\end{equation}

We introduce the first transformation
\begin{equation}
Y(z)=\Psi(\sqrt{-s}z)\mathrm{e}^{\frac{itg(z)}{2}\hat{\sigma}_3},
\end{equation}
with $t=(-s)^{\frac{3}{2}}$. As a result, we have 
\begin{equation}
\Theta(\sqrt{-s}\lambda)=\frac{it}{2}g(z)\hat{\sigma}_3+\tau iz\sigma_3\otimes\sigma_3.
\end{equation}
Then $Y(z)$ satisfies the following RH problem.
\subsubsection*{RH problem for $Y$}
\begin{description}
    \item(1)
    $Y(z)$ is a $4\times4$ matrix valued function analytic in $\mathbb{C}\backslash\Sigma_k,~ k=1,2,3,4$; see Figure \ref{fig:fig1} for the contours.
\item (2)
$Y(z)$ satisfies the jump condition 
\begin{equation}
Y_+(z) = Y_-(z)\hat{S_k}(z), ~z\in\Sigma_k, ~k=1,2,3,4,
\end{equation}
where
\begin{equation}
\begin{split}
\hat{S_1}(z)=\begin{pmatrix}I & 0\\
\mathrm{e}^{itg(z)}C & I\end{pmatrix},
~~\hat{S_2}(z)=\begin{pmatrix}I & 0\\
-\mathrm{e}^{itg(z)}C & I\end{pmatrix},\\
\hat{S_3}(z)=\begin{pmatrix}I & \mathrm{e}^{-itg(z)}C\\
0 & I\end{pmatrix},
~~\hat{S_4}(z)=\begin{pmatrix}I & -\mathrm{e}^{-itg(z)}C\\
0 & I\end{pmatrix},
\end{split}
\end{equation}
with $t=(-s)^{\frac{3}{2}}$.
\item (3)
As $z\to\infty$, we have
\begin{equation}\label{eq:asyPsi}
Y(z) = \left(I_4+\frac{\Psi_1(\Vec{s})}{z}+\mathcal{O}(z^{-2})\right)\mathrm{e}^{-\tau iz\sigma_3\otimes\sigma_3}.
\end{equation}
\end{description}

\subsection{Opening the lenses and deformation}
Next, we transform the above RH problem into a RH problem formulated on the anti-Stokes curves of $g(z)$ by using a similar process in \cite[Chapter 9]{FIKY}. We observe that $Y(z)$ can be expressed as a RH problem posed on the curves depicted in Figure \ref{fig:fig5}, where we use the notation $\hat{S_k}, k=1,2,3,4$ to denote the corresponding jump matrices.
It is observed that the jump matrix on $[-\sqrt{2},\sqrt{2}]$ are highly oscillating for large $t$. To eliminate these oscillations, we turn them into exponential small terms on the anti-Stokes curves of $g(z)$. Specifically, we can factorize the jump matrix on $[-\sqrt{2},\sqrt{2}]$ as follows:
\begin{equation}
   \hat{S_4}(z)\hat{S_1}(z)=S_L(z)S_DS_U(z),
\end{equation}
where
\begin{equation}\label{eq:def SU,SL,SD}
\begin{split}
&S_L(z)=\begin{pmatrix}
    I & 0\\
    \mathrm{e}^{itg(z)}C(I-C^2)^{-1} & I
\end{pmatrix},
~~~S_D=\begin{pmatrix}
    I-C^2 & 0\\
    0 & (I-C^2)^{-1}  
\end{pmatrix},\\
&S_U(z)=\begin{pmatrix}
    I & -\mathrm{e}^{-itg(z)}(I-C^2)^{-1}C\\
    0 & I
\end{pmatrix}.
\end{split}
\end{equation}
\begin{figure}
    \centering
    \includegraphics[width = 0.7\textwidth]{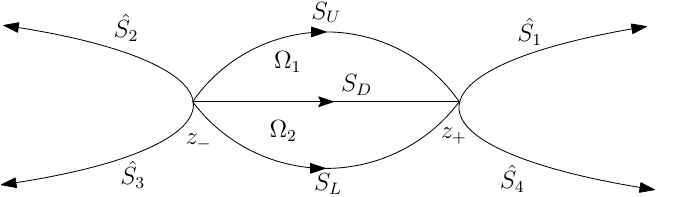}
    \caption{The deformation of the jump contours of the original RH problem.}
    \label{fig:fig5}
\end{figure}

Based on the factorization of jumps and deformation of curves, we introduce the second transformation
\begin{equation} 
S(z)=Y(z)\begin{cases}
S_U^{-1}(z),& z\in\Omega_1,\\
S_L(z),& z\in\Omega_2,\\
I& elsewhere.
\end{cases}
\end{equation}
After blowing up the lens-shaped regions $\Omega_1$ and $\Omega_2$, we finally arrive at the following  RH problem for $S(z)$, formulated on the anti-Stokes curves of $g(z)$ as shown in Figure \ref{fig:fig6}.
\subsubsection*{RH problem for $S$}
\begin{description}
    \item(1)
    $S(z)$ is a $4\times4$ matrix valued function  analytic in $\mathbb{C}\backslash\Sigma_S$, where the contours are shown in Figure \ref{fig:fig6}.
    \begin{figure}[h]
    \centering
    \includegraphics[width = 0.6\textwidth]{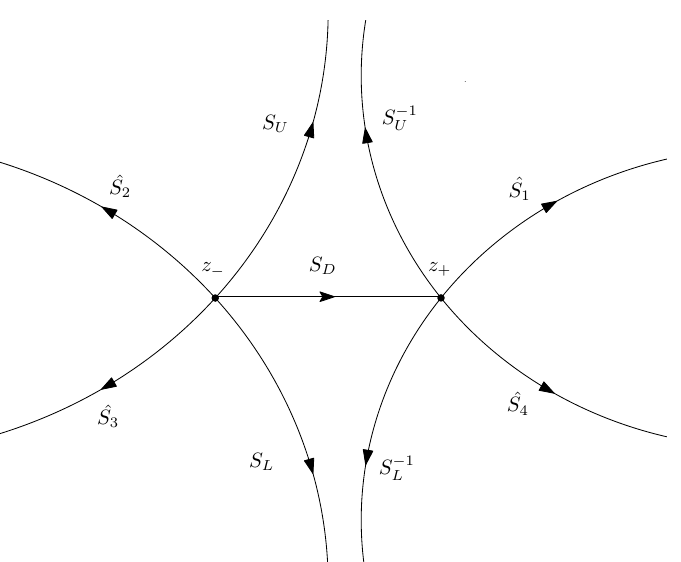}
    \caption{The contours $\Sigma_S$ of the RH problem for $S$.}
    \label{fig:fig6}
\end{figure}
\item (2)
$S(z)$ satisfies the jump condition 
\begin{equation}
S_+(z) = S_-(z)J_S(z),  
\end{equation}
where
\begin{equation}\label{eq:defJs}
J_S(z)=\begin{cases}
\hat{S_k},& z\in\Sigma_{S,k},k=1,4,5,8,\\
S_U^{-1}(z),& z\in\Sigma_{S,2},\\
S_U(z),& z\in\Sigma_{S,3},\\
S_L(z),& z\in\Sigma_{S,6},\\
S_L^{-1}(z),& z\in\Sigma_{S,7},\\
S_D,& z\in\Sigma_{S,9,}
\end{cases}
\end{equation}
with $S_L$, $S_D$ and $S_U$ defined in \eqref{eq:def SU,SL,SD}.
\item (3)
As $z\to\infty$, we have
\begin{equation}
S(z) = (I_4+\mathcal{O}(z^{-1}))\mathrm{e}^{-\tau iz\sigma_3\otimes\sigma_3}.
\end{equation}
\end{description}
\subsection{Global parametrix}
The jump matrices for $S(z)$ tend to the identity matrix exponentially fast, except for those on the real axis as $t\to +\infty$. Therefore, we arrive at the solution to the following approximation RH problem for a $4\times4$ matrix-valued function $S^{(\infty)}(z)$.
\subsubsection*{RH problem for $S^{(\infty)}$}
\begin{description}
    \item(1)
    $S^{(\infty)}(z)$ is a $4\times4$ matrix valued function  analytic for $z\in\mathbb{C}\backslash[-\sqrt{2},\sqrt{2}]$.
\item (2)
$S^{(\infty)}(z)$ satisfies the following jump condition
\begin{equation}\label{eq:jump Pinfty}
S^{(\infty)}_+(z)=S^{(\infty)}_-(z)S_D,~z\in(-\sqrt{2},\sqrt{2}),
\end{equation}
with $S_D$ given in \eqref{eq:def SU,SL,SD}.
\item (3)
As $z\to\infty$, we have
\begin{equation}\label{eq:asyPinfty}
    S^{(\infty)}(z)=\left(I_4+\frac{S^{(\infty)}_1}{z}+\mathcal{O}(z^{-2})\right)\mathrm{e}^{-\tau iz\sigma_3\otimes\sigma_3}.
\end{equation}
\end{description}
To solve this RH problem, we first define the following auxiliary function $f(z)$ that
\begin{equation}\label{eq:solution of f}
f(z)=\left(\frac{z+\sqrt{2}}{z-\sqrt{2}}\right)^{-\frac{1}{4\pi i}\ln[(1-\lambda_1^2)(1-\lambda_2^2)]},
\end{equation}
where $f(z)$ takes the branch cut along $[-\sqrt{2},\sqrt{2}]$ and has the following behavior as $z\to\infty$
\begin{equation}
f(z)=1+\mathcal{O}(z^{-1}),~z\to\infty.
\end{equation}
From the definition of $f(z)$, it follows that $f(z)$ is analytic in $\mathbb{C}\backslash[-\sqrt{2},\sqrt{2}]$, with jump condition
as follows:
\begin{equation}
f_+(z)=f_+(z)[(1-\lambda_1^2)(1-\lambda_2^2)]^{\frac{1}{2}},~z\in(-\sqrt{2},\sqrt{2}).
\end{equation}

And then, we introduce the transformation
\begin{equation}\label{eq:relation N and Pinfty}
    N(z)=S^{(\infty)}(z)\times \mathrm{diag}(f^{-1}(z),f^{-1}(z),f(z),f(z)).
\end{equation}
We see that $N(z)$ is analytic in $\mathbb{C}\backslash[-\sqrt{2},\sqrt{2}]$ and satisfies the following jump condition
\begin{equation}\label{eq:jump N}
    N_+(z)=N_-(z)\begin{pmatrix}
       U\mathrm{e}^{2\pi i\alpha\sigma_3}U^{-1} & 0\\
       0 & U\mathrm{e}^{-2\pi i\alpha\sigma_3}U^{-1}
    \end{pmatrix},~z\in(-\sqrt{2},\sqrt{2}).
\end{equation}
Due to the jump condition \eqref{eq:jump N} and the asymptotic behavior at infinity \eqref{eq:asyPinfty} for $N(z)$ are both block diagonal matrices, we can  
divide it into different  $2\times 2$ matrix blocks and construct the solution for $N(z)$ as follows
\begin{equation}\label{eq:solution N}
    N(z)=\begin{pmatrix}
        N_1(z) & 0\\
        0 & N_1(-z)
    \end{pmatrix},
\end{equation}
where 
\begin{equation}\label{eq:solution of N1}
    N_1(z)=\mathrm{e}^{\sqrt{2}\tau i\sigma_3}\Phi \left(\frac{z+\sqrt{2}}{2\sqrt{2}},4\sqrt{2}\tau i\right),\end{equation}
with $\Phi(\lambda,x)$ denoting the solution of the RH problem for the PV equation introduced in Section \ref{RHforFD and PV}.
\begin{remark}\label{re:deg solution of S}
    If $C^2$ is a diagonal matrix, we choose $U=I$ in \eqref{eq:def U}, then we have
    \begin{equation}
        S^{(\infty)}(z)=\left(\frac{z+\sqrt{2}}{z-\sqrt{2}}\right)^{\Lambda\otimes\sigma_3}\mathrm{e}^{-\tau iz\sigma_3\otimes\sigma_3}
    \end{equation}
    in this degenerate situation, with $\Lambda$ defined in \eqref{eq:def Lam}. 
\end{remark}

    For a general $r$, if we still choose $s_j=s+\frac{\tau_j}{\sqrt{-s}},~j=1,2,...,r$. Then we actually need to find solution to the following RH problem of a $2r\times 2r$ matrix-valued Global parametrix for $S^{(\infty)}(z)$:
    \begin{description}
        \item(1)
        $S^{(\infty)}(z)$ is a $2r\times2r$ matrix valued function  analytic for $z\in\mathbb{C}\backslash[-\sqrt{2},\sqrt{2}]$.
    \item(2)
    $S^{(\infty)}(z)$ satisfies the jump condition
    \begin{equation}
        S^{(\infty)}_+(z)=S^{(\infty)}_-(z)\begin{pmatrix}
            I_r-C^2 & 0\\
            0 & (I_r-C^2)^{-1}
        \end{pmatrix},~~z\in(-\sqrt{2},\sqrt{2}),
    \end{equation}
    here $C$ is a $r\times r$ Hermitian matrix.
    \item (3)
    As $z\to\infty$, we have
    \begin{equation}
         S^{(\infty)}(z)=\left(I_{2r}+\mathcal{O}(z^{-1})\right)\mathrm{e}^{-iz\boldsymbol{\tau}\otimes\sigma_3},
    \end{equation}
    where $\boldsymbol{\tau}\otimes\sigma_3=\begin{pmatrix}
        \boldsymbol{\tau} & 0\\
        0 & -\boldsymbol{\tau}
    \end{pmatrix}$, and $\boldsymbol{\tau}=\mathrm{diag}(\tau_1,\tau_2,...,\tau_r)$.
    \item (4)
    The local behaviors of $z=\pm\sqrt{2}$ are given by
    \begin{equation}
         S^{(\infty)}(z)=A(z)\left(\frac{z+\sqrt{2}}{z-\sqrt{2}}\right)^{\boldsymbol{\nu}\otimes\sigma_3}\begin{pmatrix}
             U^{-1} & 0\\
             0 & U^{-1}
         \end{pmatrix},
    \end{equation}
    where $U$ is a $r\times r$ unitary matrix and $\boldsymbol{\nu}=\mathrm{diag}(\nu_1,\nu_2,...,\nu_r)$ with $\nu_j=-\frac{1}{2\pi i}\ln(1-\lambda_j^2)$.
    \end{description}
    
\begin{remark}\label{re:NC PV}    If $r=1$, the RH problem for $S^{(\infty)}$ is reduced to that for the classical PV equation with special Stokes' multipliers. If $r=2$, as considered in this paper, then $S^{(\infty)}$ is a $4\times 4$ block diagonal matrix, with each block being the solution of the RH problem for the classical PV equation, as given in \eqref{eq:solution N} and \eqref{eq:solution of N1}.   For  general $r>2$,  we expect that the RH problem for $S^{(\infty)}$ corresponds to a special noncommutative (matrix-valued) PV equation. In fact, the matrix-valued Painlev\'e systems have been studied in recent years \cite{BGS, ber12, ber18, caf, K} etc. 
However, the related RH problem represention has not been addressed, to the best of our knowledge. In the next step, we will further study the RH problem for $S^{(\infty)}$, exploring the Lax integrability and its relation to  the noncommutative PV equation, 
    the existence and  uniqueness of the solution to the  RH problem for $S^{(\infty)}$ and the asymptotics of the  corresponding solutions of the noncommutative PV equation.  
\end{remark}
\subsection{Local parametrices}
Near the node points $z_-=-\sqrt{2}$ and $z_+=\sqrt{2}$, $S(z)$ can not be approximated by the global parametrix $S^{(\infty)}(z)$. Thus, in this section, we construct the local parametrices in the neighborhoods of $z_-$ and $z_+$.  
From \eqref{eq:def g}, we expand $g(z)$ in the neighborhood of $z=\sqrt{2}$ as follows:
\begin{equation}
    g(z)=g(\sqrt{2})+\frac{g''(\sqrt{2})}{2}(z-\sqrt{2})^{2}+\mathcal{O}((z-\sqrt{2})^3),
\end{equation}
where $g(\sqrt{2})=-\dfrac{4}{3}\sqrt{2},~ g''(\sqrt{2})=2\sqrt{2}.$

Firstly, we need to find a local parametrix 
$S^{(\sqrt{2})}(z)$ satisfying the
following RH problem.
\subsubsection*{RH problem for $S^{(\sqrt{2})}$}
\begin{description}
    \item(1)
    $S^{(\sqrt{2})}(z)$ is analytic in $U(\sqrt{2})\backslash\Sigma_S$, where $U(\sqrt{2})=\lbrace z: |z-\sqrt{2}|<\delta\rbrace$ with $\delta<\frac{1}{2}$. 
\item (2)
$S^{(\sqrt{2})}(z)$ has the same jump condition as $S(z)$ for $z\in\Sigma_S\cap U(\sqrt{2}).$ 
\item (3)
As $t\to +\infty$, we have the matching 
condition 
\begin{equation}\label{eq:matching1}
  S^{(\sqrt{2})}(z)=(I_4+\mathcal{O}(t^{-\frac{1}{2}}))S^{(\infty)}(z), ~z\in\partial U(\sqrt{2}),
\end{equation}
where the error term is uniform for $z\in \partial U(\sqrt{2}).$
\end{description}
We define the local conformal mapping 
\begin{equation}\label{expression for comformal variable}
    \zeta(z)=2\sqrt{\frac{i}{2}(g(\sqrt{2})-g(z))}.
\end{equation}
The branch of the square root \eqref{expression for comformal variable} is chosen in a way such that
\begin{equation}\label{eq:expansion for comformal}
    \zeta(z)=\zeta'(\sqrt{2})(z-\sqrt{2})+\frac{\zeta{''}(\sqrt{2})}{2}(z-\sqrt{2})^2+\mathcal{O}((z-\sqrt{2})^3),
\end{equation}
where $\zeta'(\sqrt{2})=2^{\frac{3}{4}}\mathrm{e}^{\frac{3}{4}\pi i}$ and $\zeta''(\sqrt{2})=\frac{1}{3}2^{\frac{1}{4}}\mathrm{e}^{\frac{3}{4}\pi i}$.

In order to construct the solution of $S^{(\sqrt{2})}(z)$, we introduce the $4\times 4$ matrix-valued functions for the parabolic cylinder parametrix $\Psi^{(PC)}(\zeta)$ by using the parabolic cylinder functions. Let 
\begin{equation*}
    D(\zeta)=2^{-\frac{1}{2}\hat{\sigma}_3}\begin{pmatrix}
        D_{-\nu_1-1}(i\zeta) & 0 & D_{\nu_1}(\zeta) & 0\\
        0 & D_{-\nu_2-1}(i\zeta) & 0 & D_{\nu_2}(\zeta)\\
        D'_{-\nu_1-1}(i\zeta) & 0 & D'_{\nu_1}(\zeta) & 0\\
        0 & D'_{-\nu_2-1}(i\zeta) & 0 & D'_{\nu_2}(\zeta)
    \end{pmatrix}\begin{pmatrix}
         \mathrm{e}^{\frac{\pi i}{2}(\nu_1+1)}& 0 & 0 & 0\\
        0 &  \mathrm{e}^{\frac{\pi i}{2}(\nu_2+1)} & 0 & 0\\
        0 & 0 & 1 & 0\\
        0 & 0 & 0 & 1
    \end{pmatrix},
\end{equation*}
where $D_{\nu_k}$ is the standard parabolic cylinder function with parameter $\nu_k,~k=1,2;$ see \cite{FIKY, O}. Denote
\begin{equation}
    H_0=\begin{pmatrix}
        1 & 0 & 0 & 0\\
        0 & 1 & 0 & 0\\
        h_0(\nu_1) & 0 & 1 & 0\\
        0 & h_0(\nu_2) & 0 & 1
    \end{pmatrix},
    H_1=\begin{pmatrix}
        1 & 0 & h_1(\nu_1) & 0\\
        0 & 1 & 0 & h_1(\nu_2)\\
        0 & 0 & 1 & 0\\
        0 & 0 & 0 & 1
    \end{pmatrix},
\end{equation}
\begin{equation}
     H_{k+2}=\mathrm{e}^{i\pi(\Lambda+\frac{I}{2})\otimes\sigma_3}H_k\mathrm{e}^{-i\pi(\Lambda+\frac{I}{2})\otimes\sigma_3} 
     ~,k=0,1,~
     H_4=\mathrm{e}^{2\pi i\Lambda\otimes\sigma_3},
\end{equation}
where
\begin{equation}\label{eq:def Lam}
  \Lambda=\begin{pmatrix}
        \nu_1 & 0\\
        0 & \nu_2
    \end{pmatrix}=\begin{pmatrix}
        -\frac{1}{2\pi i}\ln(1-\lambda_1^2) & 0\\
        0 &  -\frac{1}{2\pi i}\ln(1-\lambda_2^2)
    \end{pmatrix},
\end{equation}
with
\begin{equation}\label{eq:def h0,h1}
    h_0{(\nu)}=-i\frac{\sqrt{2\pi}}{\Gamma(\nu+1)}, ~h_1{(\nu)}=\frac{\sqrt{2\pi}}{\Gamma(-\nu)}\mathrm{e}^{i\pi\nu},
\end{equation}
\begin{equation}
   1+h_0{(\nu_1)}h_1{(\nu_1)}=\mathrm{e}^{2\pi i\nu_1}, ~1+h_0{(\nu_2)}h_1{(\nu_2)}=\mathrm{e}^{2\pi i\nu_2}.
\end{equation}
We define 
\begin{equation*}
    \Psi^{(PC)}(\zeta)=\begin{cases}
        D(\zeta),& \arg\zeta\in(-\frac{\pi}{4},0),\\
       D(\zeta)H_0,& \arg\zeta\in(0,\frac{\pi}{2}),\\
       D(\zeta)H_1,& \arg\zeta\in(\frac{\pi}{2},\pi),\\
       D(\zeta)H_2,& \arg\zeta\in(\pi,\frac{3\pi}{2}),\\
       D(\zeta)H_3,& \arg\zeta\in(\frac{3\pi}{2},\frac{7\pi}{4}).\\
    \end{cases}
\end{equation*}
Then $\Psi^{(PC)}(\zeta)$ satisfies 
the following RH problem.
\subsubsection*{RH problem for $\Psi^{(PC)}$}\begin{description}
    \item(1)
    $\Psi^{(PC)}(\zeta)$ is a $4\times4$ matrix-valued function analytic for $z\in\mathbb{C}\backslash\Gamma_k,~k=0,1,2,3,4.$
\item (2)
$\Psi^{(PC)}(\zeta)$ satisfies the jump condition
\begin{equation}
\Psi^{(PC)}_+(\zeta)=\Psi^{(PC)}_-(\zeta)H_k,~z\in\Gamma_k,~k=0,1,2,3,4,
\end{equation}
where $\Gamma_k,~k=0,1,...,4$ are shown in Figure \ref{fig:fig7}.
\begin{figure}[h]
    \centering
    \includegraphics[width= 0.5\textwidth]{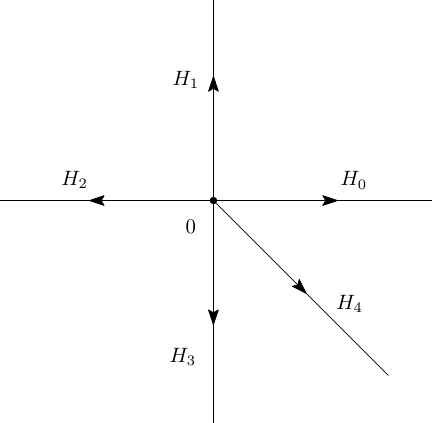}
    \caption{The jump contours and jump matrices of the RH problem for $\Psi^{(PC)}$.}
    \label{fig:fig7}
\end{figure}
\item (3)
As $\zeta\to\infty$, $\Psi^{(PC)}(\zeta)$ satisfies the following asymptotic behavior
\begin{equation}\label{eq:PC at infty}
\Psi^{(PC)}(\zeta)=\begin{pmatrix}
    0 & I\\
    I & \zeta I
\end{pmatrix}2^{\frac{1}{2}\hat{\sigma}_3}\left(I_4+\frac{G_1}{\zeta}+\frac{G_2}{\zeta^2}+\mathcal{O}(\zeta^{-3})\right)\mathrm{e}^{(\frac{\zeta^2}{4}\hat{\sigma}_3-\Lambda\otimes\sigma_3 \ln\zeta)},
\end{equation}
where
\begin{equation}\label{eq:m1 m2}
    G_1=\begin{pmatrix}
        0 & \Lambda\\
        I & 0
    \end{pmatrix},~~
    G_2=\begin{pmatrix}
        \frac{(I+\Lambda)\Lambda}{2} & 0\\
        0 & \frac{(I-\Lambda)\Lambda)}{2}
    \end{pmatrix},
\end{equation}
and $\ln\zeta$ takes the principal branch.
\end{description}
Then the solution of $S^{(\sqrt{2})}(z)$ can be constructed by using the parabolic cylinder parametrix $\Psi^{(PC)}(\zeta)$ as follows:
\begin{equation}\label{eq:local para of S}
\begin{split}
    S^{(\sqrt{2})}(z)
    =&S^{(\infty)}(z)\hat{U}(\sqrt{t}\zeta(z))^{\Lambda\otimes\sigma_3}L \mathrm{e}^{-\frac{itg(\sqrt{2})}{2}\hat{\sigma}_3}2^{-\frac{1}{2}\hat{\sigma}_3}\begin{pmatrix}
        \sqrt{t}\zeta(z) I & I\\
        I & 0
    \end{pmatrix}
    \Psi^{(PC)}(\sqrt{t}\zeta(z))\mathrm{e}^{-\frac{\zeta^2}{4}\hat{\sigma}_3}\\
    &\times\mathrm{e}^{\frac{itg(\sqrt{2})}{2}\hat{\sigma}_3}L^{-1}\hat{U}^{-1},
    \end{split}
\end{equation}
where
\begin{equation}
    \hat{U}=U\otimes I=\begin{pmatrix}
    U & 0\\
    0 & U\end{pmatrix},
\end{equation}
and
\begin{equation}\label{eq:def L1}
    L=\mathrm{diag}\left(\left(-\frac{h_1{(\nu_1)}}{\lambda_1}\right)^{-\frac{1}{2}},\left(-\frac{h_1{(\nu_2)}}{\lambda_2}\right)^{-\frac{1}{2}},\left(-\frac{h_1{(\nu_1)}}{\lambda_1}\right)^{\frac{1}{2}},\left(-\frac{h_1{(\nu_2)}}{\lambda_2}\right)^{\frac{1}{2}}\right)
   :=\begin{pmatrix}
        L_1 & 0\\
        0 & L_1^{-1}
    \end{pmatrix}.
\end{equation}
Then, from \eqref{eq:relation N and Pinfty}, \eqref{eq:solution N}, \eqref{eq:PC at infty} and \eqref{eq:local para of S}, we have 
\begin{equation}\label{eq:asySr}
    S^{(\sqrt{2})}(z)=\left(I_4+\frac{A_{\sqrt{2}}(z)G_1A_{\sqrt{2}}(z)^{-1}}{\sqrt{t}\zeta(z)}+\frac{A_{\sqrt{2}}(z)G_2A_{\sqrt{2}}(z)^{-1}}{t\zeta(z)^2
    }+\mathcal{O}(t^{-\frac{3}{2}})\right)S^{(\infty)}(z),
\end{equation}
as $t \to +\infty$ and $z \in \partial U(\sqrt{2}).$
Here we have
\begin{footnotesize}
\begin{equation}\label{eq:expression forA}
\begin{split}
    &A_{\sqrt{2}}(z)=\\
    &\begin{pmatrix}
\mathrm{e}^{\frac{2}{3}\sqrt{2}it}\mathrm{e}^{\sqrt{2}\tau i\sigma_3}{\Phi}^{(1)}(z,\tau)L_1\left(\frac{z+\sqrt{2}}{2\sqrt{2}}\right)^{\alpha\sigma_3}\left(\frac{z+\sqrt{2}}{z-\sqrt{2}}\zeta(z)\sqrt{t}\right)^{\Lambda} & 0\\
0 & \mathrm{e}^{-\frac{2}{3}\sqrt{2}it}\mathrm{e}^{\sqrt{2}\tau i\sigma_3}{\Phi}^{(0)}(-z,\tau)L_1^{-1}\left(\frac{z+\sqrt{2}}{2\sqrt{2}}\right)^{-\alpha\sigma_3}\left(\frac{z+\sqrt{2}}{z-\sqrt{2}}\zeta(z)\sqrt{t}\right)^{-\Lambda}
    \end{pmatrix},
    \end{split}
\end{equation}
\end{footnotesize}
where ${\Phi}^{(0)}(\lambda,x)$ and ${\Phi}^{(1)}(\lambda,x)$ are defined in \eqref{eq:local0} and \eqref{eq:local1}, respectively.

Near the other node point $z_-=-\sqrt{2}$, we also define that $U(-\sqrt{2})=\lbrace z: |z+\sqrt{2}|<\delta\rbrace$ with $\delta<\frac{1}{2}$. Then the parametrix $S^{(-\sqrt{2})}(z)$ can be constructed by using the  symmetry  relation
\begin{equation}\label{eq:Sl}
    S^{(-\sqrt{2})}(z)=\hat{\sigma}_1S^{(\sqrt{2})}(-z)\hat{\sigma}_1,
\end{equation}
where $\hat{\sigma}_1$ is defined after \eqref{eq:tensor}.

\subsection{Final transformation}
With all of the parametrices constructed, the final transformation can be defined as
\begin{equation}\label{eq:def R(z)}
    R(z)=\begin{cases}
        S(z)S^{(\sqrt{2})}(z)^{-1}, & z\in U(\sqrt{2})\backslash\Sigma_S,\\
        S(z)S^{(-\sqrt{2})}(z)^{-1}, & z\in U(\sqrt{-2})\backslash\Sigma_S,\\
        S(z)S^{(\infty)}(z)^{-1} & eleswhere.
    \end{cases}
\end{equation}
Then $R(z)$ satisfies the following RH problem.
\subsubsection*{RH problem for $R$}
\begin{description}
    \item(1)
    $R(z)$ is analytic for $z \in \mathbb{C}\backslash\Sigma_R$, where the contours $\Sigma_R$ are shown in Figure \ref{fig:fig8}.
    \begin{figure}[h]
    \centering
    \includegraphics[width = 0.7\textwidth ]{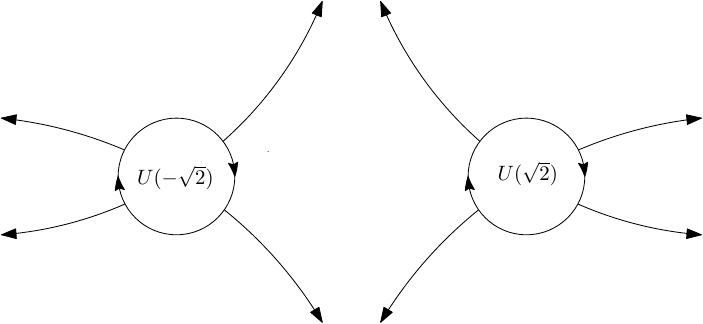}
    \caption{The contours $\Sigma_R$ of the RH problem for $R$.}
    \label{fig:fig8}
\end{figure}
\item (2)
$R(z)$ satisfies the jump condition 
\begin{equation}\label{eq:JumpR}
    R_+(z)=R_-(z)J_R(z),
\end{equation}
where
\begin{equation}\label{eq:def JR}
    J_R(z)=\begin{cases}
        S^{(\sqrt{2})}(z)S^{(\infty)}(z)^{-1}, & z \in \partial U(\sqrt{2}),\\
        S^{(-\sqrt{2})}(z)S^{(\infty)}(z)^{-1}, & z \in \partial U(-\sqrt{2}),\\
        S^{(\infty)}(z)J_S(z)S^{(\infty)}(z)^{-1} & elsewhere.
    \end{cases}
\end{equation}
\item (3)
As $z \to \infty$, we have
\begin{equation}\label{eq:asymptotic R at infty}
    R(z)=I_4+\frac{R_1(t)}{z}+\mathcal{O}(z^{-2}).
\end{equation}
\end{description}
From the matching condition \eqref{eq:matching1}, \eqref{eq:Sl} and \eqref{eq:defJs}, we have as $t
\to +\infty$,
\begin{equation}\label{eq:estimate of JR}
    J_R(z)=\begin{cases}
        I_4+\mathcal{O}\left(\frac{1}{\sqrt{t}}\right), & z \in \partial U(\pm\sqrt{2}),\\
        I_4+\mathcal{O}(\mathrm{e}^{-ct}) & eleswhere,
    \end{cases}
\end{equation}
where $c$ is a positive constant. This, together with \eqref{eq:asySr}, \eqref{eq:Sl} and \eqref{eq:def JR}, implies that $J_R(z)$ has an expansion of the form 
\begin{equation}\label{eq:asyJR}
    J_R(z)=I_4+\frac{J_{R,1}(z)}{t^{\frac{1}{2}}}+\frac{J_{R,2}(z)}{t}+\mathcal{O}\left(\frac{1}{t^{\frac{3}{2}}}\right),~~~t\to+\infty,
\end{equation}
where
\begin{equation}\label{eq:formaula of JR}
    J_{R,j}(z)=\begin{cases}
        \frac{A_{\sqrt{2}}(z)G_jA_{\sqrt{2}}(z)^{-1}}{\zeta(z)^j},&z\in\partial U(\sqrt{2})\\
        \frac{\hat{\sigma}_1A_{\sqrt{2}}(-z)G_jA_{\sqrt{2}}(-z)^{-1}\hat{\sigma}_1}{\zeta(-z)^j},&z\in\partial U(-\sqrt{2})
    \end{cases},j=1,2,
\end{equation}
with $A_{\sqrt{2}}(z)$ given in \eqref{eq:expression forA}. We proceed to calculate the Laurent expansion on $J_{R,1}(z)$ and $J_{R,2}(z)$ at $z=\sqrt{2}$ for later use.
\begin{equation}\label{eq:laurent ex JR1}
    J_{R,1}(z)=\frac{C_{-1}^{(1)}}{z-\sqrt{2}}+C_0^{(1)}+\mathcal{O}(z-\sqrt{2}),~~~z\to \sqrt{2},
\end{equation}
\begin{equation}\label{eq:laurent ex JR2}
    J_{R,2}(z)=\frac{C_{-2}^{(2)}}{(z-\sqrt{2})^2}+\frac{C_{-1}^{(2)}}{z-\sqrt{2}}+C_0^{(2)}+\mathcal{O}(z-\sqrt{2}),~~~z\to \sqrt{2},
\end{equation}
where
\begin{equation}\label{eq:C-1,12}
\left(C_{-1}^{(1)}\right)_{12}=\mathrm{e}^{\frac{4}{3}\sqrt{2}it}\mathrm{e}^{\sqrt{2}\tau i\sigma_3}{\Phi}^{(1)}_0(x)t^{\Lambda}L_1^2\Lambda(\zeta{'}(\sqrt{2}))^{2\Lambda-1}(2\sqrt{2})^{2\Lambda}{\Phi}^{(0)}_0(x)^{-1}\mathrm{e}^{-\sqrt{2}\tau i\sigma_3},     \end{equation}
\begin{equation}\label{eq:C-1,21}
     \left(C_{-1}^{(1)}\right)_{21}=\mathrm{e}^{-\frac{4}{3}\sqrt{2}it}\mathrm{e}^{\sqrt{2}\tau i\sigma_3}{\Phi}^{(1)}_0(x)t^{-\Lambda}L_1^{-2}(\zeta{'}(\sqrt{2}))^{-2\Lambda-1}(2\sqrt{2})^{-2\Lambda}{\Phi}^{(1)}_0(x)^{-1}\mathrm{e}^{-\sqrt{2}\tau i\sigma_3}, 
\end{equation}
and
\begin{equation}
    \begin{split}
&\left(C_0^{(1)}\right)_{12}=\mathrm{e}^{\frac{4}{3}\sqrt{2}it}\mathrm{e}^{\sqrt{2}\tau i\sigma_3} \Bigg(\frac{\alpha}{\sqrt{2}}{\Phi}^{(1)}_0(x)\sigma_3t^{\Lambda}L_1^2\Lambda(\zeta'(\sqrt{2}))^{2\Lambda-1} (2\sqrt{2})^{2\Lambda}{\Phi}^{(0)}_0(x)^{-1}+{\Phi}^{(1)}_0(x)t^{\Lambda}L_1^2\Lambda\\& (\zeta'(\sqrt{2}))^{2\Lambda-1}(2\sqrt{2})^{2\Lambda}\left(\frac{\Lambda}{\sqrt{2}}+\frac{2\Lambda-I}{2\zeta'(\sqrt{2})}\zeta''(\sqrt{2})\right){\Phi}^{(0)}_0(x)^{-1}+\frac{1}{2\sqrt{2}}{\Phi}^{(1)}_0(x){\Phi}^{(1)}_1(x)t^{\Lambda}L_1^2\Lambda(\zeta'(\sqrt{2}))^{2\Lambda-1}\\&(2\sqrt{2})^{2\Lambda}{\Phi}^{(0)}_0(x)^{-1}+\frac{1}{2\sqrt{2}}{\Phi}^{(1)}_0(x)t^{\Lambda}L_1^2\Lambda(\zeta'(\sqrt{2}))^{2\Lambda-1}(2\sqrt{2})^{2\Lambda}{\Phi}^{(0)}_1(x){\Phi}^{(0)}_0(x)^{-1}\Bigg)\mathrm{e}^{-\sqrt{2}\tau i\sigma_3},
\end{split}
\end{equation}
\begin{equation}
    \begin{split}
                & \left(C_0^{(1)}\right)_{21}=-\mathrm{e}^{-\frac{4}{3}\sqrt{2}it}\mathrm{e}^{\sqrt{2}\tau i\sigma_3}\Bigg(\frac{\alpha}{\sqrt{2}}{\Phi}^{(0)}_0(x)\sigma_3t^{-\Lambda}L_1^{-2}(\zeta'(\sqrt{2}))^{-2\Lambda-1} (2\sqrt{2})^{-2\Lambda}{\Phi}^{(1)}_0(x)^{-1}{\Phi}^{(0)}_0(x)t^{-\Lambda}L_1^{-2}\\& (\zeta'(\sqrt{2}))^{-2\Lambda-1}(2\sqrt{2})^{-2\Lambda}\left(\frac{\Lambda}{\sqrt{2}}+\frac{2\Lambda-I}{2\zeta'(\sqrt{2})}\zeta''(\sqrt{2})\right){\Phi}^{(1)}_0(x)^{-1}+\frac{1}{2\sqrt{2}}{\Phi}^{(0)}_0(x){\Phi}^{(0)}_1(x)t^{-\Lambda}L_1^{-2}(\zeta'(\sqrt{2}))^{-2\Lambda-1}\\&(2\sqrt{2})^{-2\Lambda}{\Phi}^{(1)}_0(x)^{-1}+\frac{1}{2\sqrt{2}}{\Phi}^{(0)}_0(x)t^{-\Lambda}L_1^{-2}(\zeta'(\sqrt{2}))^{-2\Lambda-1}(2\sqrt{2})^{-2\Lambda}{\Phi}^{(1)}_1(x){\Phi}^{(1)}_0(x)^{-1}\Bigg)\mathrm{e}^{-\sqrt{2}\tau i\sigma_3},
    \end{split}
\end{equation}
with $L_1$, $\Lambda$ given in \eqref{eq:def Lam} and \eqref{eq:def L1}, respectively, and 
\begin{equation}\label{eq:resJR2_11}
    \begin{split}
  \left(C_{-1}^{(2)}\right)_{11}=&\mathrm{e}^{\sqrt{2}\tau i\sigma_3}\bigg(
  -\frac{\zeta{''}(\sqrt{2})}{2(\zeta{'}(\sqrt{2}))^3}{\Phi}^{(1)}_0(x)(I+\Lambda)\Lambda{\Phi}^{(1)}_0(x)^{-1}
  +\frac{1}{4\sqrt{2}(\zeta'(\sqrt{2}))^2}\Big({\Phi}^{(1)}_0(x){\Phi}^{(1)}_1(x)\\&\times(I+\Lambda)\Lambda{\Phi}^{(1)}_0(x)^{-1}
  -{\Phi}^{(1)}_0(x)(I+\Lambda)\Lambda{\Phi}^{(1)}_1(x){\Phi}^{(1)}_0(x)^{-1}\Big)\bigg)\mathrm{e}^{-\sqrt{2}\tau i\sigma_3},
\end{split}
\end{equation}
\begin{equation}\label{eq:resJR2_22}
    \begin{split}
  \left(C_{-1}^{(2)}\right)_{22}=&\mathrm{e}^{\sqrt{2}\tau i\sigma_3}\bigg(
  -\frac{\zeta{''}(\sqrt{2})}{2(\zeta{'}(\sqrt{2}))^3}{\Phi}^{(0)}_0(x)(I-\Lambda)\Lambda{\Phi}^{(0)}_0(x)^{-1}-\frac{1}{4\sqrt{2}(\zeta'(\sqrt{2}))^2}\Big({\Phi}^{(0)}_0(x){\Phi}^{(0)}_1(x)\\&\times(I-\Lambda)\Lambda{\Phi}^{(0)}_0(x)^{-1}
  -{\Phi}^{(0)}_0(x)(I-\Lambda)\Lambda{\Phi}^{(0)}_1(x){\Phi}^{(0)}_0(x)^{-1}\Big)\bigg)\mathrm{e}^{-\sqrt{2}\tau i\sigma_3},
\end{split}
\end{equation}
with $\Phi_0^{(0)}(x)$, $\Phi_1^{(0)}(x)$ and $\Phi_0^{(1)}(x)$,
$\Phi_1^{(1)}(x)$ defined in \eqref{eq:expansion Phi 0} and \eqref{eq:expansion Phi 1}, respectively. 
Furthermore, to simplify the subsequent calculations, we define
\begin{equation}\label{eq:def m1,m2}
\begin{split}
    \mathrm{e}^{\frac{4}{3}\sqrt{2}it}t^{\Lambda}L_1^2\Lambda(\zeta{'}(\sqrt{2}))^{2\Lambda}&(2\sqrt{2})^{2\Lambda}:=\begin{pmatrix}
        m_1 & 0\\
        0 & m_2
    \end{pmatrix}\\=&\mathrm{e}^{\frac{4}{3}\sqrt{2}it}\begin{pmatrix}
        -\frac{\lambda_1}{h_1{(\nu_1)}}\left(2\sqrt{2}\mathrm{e}^{\frac{3}{4}\pi i}\sqrt{t}2^{\frac{3}{4}}\right)^{2\nu_1}\nu_1 & 0\\
        0 &  -\frac{\lambda_2}{h_1{(\nu_2)}}\left(2\sqrt{2}\mathrm{e}^{\frac{3}{4}\pi i}\sqrt{t}2^{\frac{3}{4}}\right)^{2\nu_2}\nu_2 
    \end{pmatrix}.
    \end{split}
\end{equation}
Note that $\nu_j=-\frac{1}{2\pi i}\ln(1-\lambda_j^2)\in i\mathbb{R},~j=1,2$, it then follows from the reflection formula of Gamma function that
\begin{equation}
    |\Gamma(\nu_j)|^2=\Gamma(\nu_i)\Gamma(-\nu_j)=-\frac{\pi}{\nu_j\sin(\nu_j\pi)}=\frac{2\pi}{i\nu_j\lambda_j^2}\mathrm{e}^{-i\pi\nu_j},~~j=1,2.
\end{equation}
By further substituting \eqref{eq:def h0,h1} into \eqref{eq:def m1,m2}, we obtain that
\begin{equation}\label{eq:formula m1,m2}
    m_j=\frac{\sqrt{2\pi}}{\lambda_j|\Gamma(\nu_j)|}\mathrm{e}^{-\frac{\pi}{2}i\nu_j+\frac{4}{3}\sqrt{2}it-i\arg\Gamma(\nu_j)+\frac{\pi}{2}i+\nu_j\ln(2^{\frac{9}{2}}t)},~~j=1,2.
\end{equation}
Similarly, we have
\begin{equation}
\begin{split}
    \mathrm{e}^{-\frac{4}{3}\sqrt{2}it}t^{-\Lambda}L_1^{-2}(\zeta{'}(\sqrt{2}))^{-2\Lambda}&(2\sqrt{2})^{-2\Lambda}:=\begin{pmatrix}
        n_1 & 0\\
        0 & n_2
    \end{pmatrix},
    \end{split}
    \end{equation}
where 
\begin{equation}\label{eq:def n1,n2}
    n_j=\frac{\sqrt{2\pi}}{\lambda_j|\Gamma(\nu_j)|}\mathrm{e}^{-\frac{\pi}{2}i\nu_j-\frac{4}{3}\sqrt{2}it+i\arg\Gamma(\nu_j)+\pi i-\nu_j\ln(2^{\frac{9}{2}}t)},~~j=1,2.
\end{equation}
It can be seen that $m_j$ and $n_j$ satisfy the following relation
\begin{equation}\label{eq:relation m and n}
    im_j=\overline{n_j},~~~j=1,2.
\end{equation}
By a standard theory, cf. \cite{FIKY}, the estimate \eqref{eq:estimate of JR} ensures unique solvability of the RH problem for $R(z)$ as $t\to+\infty$. Furthermore, as $t\to+\infty$, we have
\begin{equation}\label{eq:asyR}
    R(z)=I_4+\frac{R^{(1)}(z)}{t^{\frac{1}{2}}}+\frac{R^{(2)}(z)}{t}+\mathcal{O}\left(\frac{1}{t^{\frac{3}{2}}}\right),
\end{equation}
where the error term is uniform for $z \in \mathbb{C}\backslash\Sigma_R$.

We conclude this subsection with the calculations of the functions $R^{(1)}(z)$ and $R^{(2)}(z)$ in \eqref{eq:asyR} for later use. Inserting \eqref{eq:asyJR}, \eqref{eq:asyR} into the jump condition \eqref{eq:JumpR} for $R(z)$ yields
\begin{equation}\label{eq:jumpR1}
    R^{(1)}_+(z)=R^{(1)}_-(z)+J_{R,1}(z),
\end{equation}
\begin{equation}\label{eq:jumpR2}
    R^{(2)}_+(z)=R^{(2)}_-(z)+R^{(1)}_-(z)J_{R,1}(z)+J_{R,2}(z),
\end{equation}
for $z\in\partial U(\sqrt{2})\cup\partial U(-\sqrt{2})$. This, together with the facts that $R^{(1)}(z)=\mathcal{O}(z^{-1})$ and $R^{(2)}(z)=\mathcal{O}(z^{-1})$ as $z\to\infty$, we have the integral representation of $R^{(1)}(z)$ and $R^{(2)}(z)$ by using Plemelj's formula
\begin{equation}\label{eq:intexpR1}
    R^{(1)}(z)=\frac{1}{2\pi i}\int_{\partial U(\sqrt{2})\cup\partial U(-\sqrt{2})}\frac{J_{R,1}(\zeta)}{\zeta-z}d\zeta,
\end{equation}
\begin{equation}\label{eq:intexpR2}
    R^{(2)}(z)=\frac{1}{2\pi i}\int_{\partial U(\sqrt{2})\cup\partial U(-\sqrt{2})}\frac{J_{R,2}(\zeta)+R^{(1)}_-(\zeta)J_{R,1}(\zeta)}{\zeta-z}d\zeta.
\end{equation}

From the definition of $J_R(z)$ given in \eqref{eq:asyJR}, we obtain the following expressions for $R^{(1)}(z)$ and $R^{(2)}(z)$ by using residue theorem 
\begin{equation}\label{eq:expR1}
R^{(1)}(z)=\frac{A_1}{z-\sqrt{2}}-\frac{\hat{\sigma}_1A_1\hat{\sigma}_1}{z+\sqrt{2}}, ~z\in\mathbb{C}\backslash(U(\sqrt{2})\cup U(-\sqrt{2})),
\end{equation}
where
\begin{equation}\label{eq:expA1}
A_1=C_{-1}^{(1)}.
\end{equation}
To calculate $R^{(2)}(z)$, it is readily seen from \eqref{eq:def JR} and \eqref{eq:intexpR2} that
\begin{equation}\label{eq:expR2}
    R^{(2)}(z)=\frac{A_2}{z-\sqrt{2}}-\frac{\hat{\sigma}_1A_2\hat{\sigma}_1}{z+\sqrt{2}}+\frac{A_3}{(z-\sqrt{2})^2}+\frac{\hat{\sigma}_1A_3\hat{\sigma}_1}{(z+\sqrt{2})^2},~ z\in\mathbb{C}\backslash\left(U(\sqrt{2}) \cup U(-\sqrt{2})\right),
\end{equation}
where
\begin{equation}\label{eq:expA2}
    A_2=R_-^{(1)}(\sqrt{2})A_1+C_{-1}^{(2)}.
\end{equation}
In view of \eqref{eq:intexpR1}, 
\eqref{eq:expR1} and \eqref{eq:laurent ex JR1}, it follows that
\begin{equation}\label{eq:expR1-}
    R_-^{(1)}(\sqrt{2})=-\frac{\hat{\sigma}_1A_1\hat{\sigma}_1}{2\sqrt{2}}-C_0^{(1)}.
\end{equation}
From \eqref{eq:expA2}, \eqref{eq:expA1}
and \eqref{eq:expR1-}, we conclude that $A_2$ is a block diagonal matrix of the form
\begin{equation}\label{eq:result of A2}
\begin{small}
   A_2=\begin{pmatrix}
       -\frac{1}{2\sqrt{2}}\left(C_{-1}^{(1)}\right)_{21}^2-\left(C_0^{(1)}\right)_{12}\left(C_{-1}^{(1)}\right)_{21}+\left(C_{-1}^{(2)}\right)_{11} & 0\\
       0 &  -\frac{1}{2\sqrt{2}}\left(C_{-1}^{(1)}\right)_{12}^2-\left(C_0^{(1)}\right)_{21}\left(C_{-1}^{(1)}\right)_{12}+\left(C_{-1}^{(2)}\right)_{22} 
   \end{pmatrix}.
   \end{small}
\end{equation}
From the expression of the logarithmic derivative of the Fredholm determinant given in \eqref{eq:RH for FD}, to obtain its explicit formula, we only need to compute the trace of the components $(A_2)_{22}$ and $(A_2)_{11}$. From \eqref{eq:C-1,12}-\eqref{eq:resJR2_22} and \eqref{eq:result of A2}, we  derive the following result after some direct calculations
\begin{equation}\label{eq:traceA2}
\begin{split}
    &Tr(A_2-\hat{\sigma}_1A_2\hat{\sigma}_1))_{22}=\\
    &-\frac{1}{2\sqrt{2}}Tr\left((C_{-1}^{(1)})_{12}^2-(C_{-1}^{(1)})_{21}^2\right)+\frac{1}{\sqrt{2}\zeta'(\sqrt{2})^2}Tr\left(3\Lambda^2+{\Phi}^{(1)}_1(x)\Lambda+{\Phi}^{(0)}_1(x)\Lambda+2\alpha\sigma_3\Lambda\right),
    \end{split}
\end{equation}
where ${\Phi}^{(0)}_1(x)$ and
${\Phi}^{(1)}_1(x)$ are defined in \eqref{eq:expansion Phi 0} and \eqref{eq:expansion Phi 1}, and 
$\Lambda$, $\left(C_{-1}^{(1)}\right)_{12}$ and $\left(C_{-1}^{(1)}\right)_{21}$ are defined in \eqref{eq:def Lam}, \eqref{eq:C-1,12} and \eqref{eq:C-1,21}, respectively.

\section{Proof of Theorems \ref{thm:4} and \ref{thm5}}\label{sec:proof 4 and 5}
\subsection{Proof of Theorem \ref{thm5}}
Tracking back the series of transformations performed in Section \ref{sec: large t}
\begin{equation*}
    \Psi \mapsto Y \mapsto S \mapsto R,
\end{equation*}
we have, for large $z$,
\begin{equation}\label{eq:defPsi}
    \Psi(z)=R(z)S^{(\infty)}(z)\mathrm{e}^{-\frac{itg(z)}{2}\sigma_3\otimes\sigma_3},
\end{equation}
where $g(z)$, $S^{(\infty)}(z)$ and $R(z)$ are defined in \eqref{eq:def g}, \eqref{eq:relation N and Pinfty} and \eqref{eq:def R(z)}.
Then, combining \eqref{eq:defPsi}, \eqref{eq:asyPsi} and \eqref{eq:asyPinfty}, we find that 
\begin{equation}
    \Psi_1(t)\mathrm{e}^{\Theta(\lambda)}=R_1(t)+S^{(\infty)}_1.
\end{equation}
Consequently, we have
\begin{equation}\label{eq:formula alpha1}
    {H}_{II}(\Vec{s})=\sqrt{-s}\left((R_1(t))_{22}+(S_1^{(\infty)})_{22}\right),
\end{equation}
\begin{equation}\label{eq:formula beta}
     \beta(\Vec{s})=-i\sqrt{-s}\left((R_1(t))_{21}+(S_1^{(\infty)})_{21}\right),
\end{equation}
with $t=(-s)^{\frac{3}{2}}$. A combination of \eqref{eq:asymptotic R at infty}, \eqref{eq:asyR}, \eqref{eq:expR1} and \eqref{eq:expR2} shows that 
\begin{equation}\label{eq:def R1(t)}
    R_1(t)=\frac{A_1-\hat{\sigma}_1A_1\hat{\sigma}_1}{\sqrt{t}}+\frac{A_2-\hat{\sigma}_1A_2\hat{\sigma}_1}{t}+\mathcal{O}(t^{-\frac{3}{2}}),
\end{equation}
with $A_1$ and $A_2$ given in \eqref{eq:expA1} and \eqref{eq:expA2}.

From \eqref{eq:asyofPVinfty}, \eqref{eq:solution of f}, \eqref{eq:solution of N1} and \eqref{eq:relation N and Pinfty}, we  obtain
\begin{equation}\label{eq:P1inf22}
   (S_1^{(\infty)})_{22}=2\sqrt{2}\mathrm{e}^{\sqrt{2}\tau i\sigma_3} 
         \left(-\Phi_{-1}(4\sqrt{2}\tau i)+\frac{1}{4\pi i}\ln[(1-\lambda_1^2)(1-\lambda_2^2)]I\right)\mathrm{e}^{-\sqrt{2}\tau i\sigma_3},
\end{equation}
\begin{equation}\label{eq:S21=0}
    (S_1^{(\infty)})_{21}=0,
\end{equation}
where $\Phi_{-1}(x)$ is defined in \eqref{eq:RH Ham for PV}.
From \eqref{eq:expA1}, \eqref{eq:result of A2}, \eqref{eq:def R1(t)} and the fact that $t=(-s)^{\frac{3}{2}}$, we have, as $s\to-\infty$, 
\begin{equation}\label{eq:R1(t)22}
    (R_1(t))_{22}=\mathcal{O}((-s)^{-\frac{3}{2}}),
\end{equation}
\begin{equation}\label{eq:R1(t)21}
    (R_1(t))_{21}=(-s)^{-\frac{3}{4}}\Big(\left(C_{-1}^{(1)}\right)_{21}-\left(C_{-1}^{(1)}\right)_{12}\Big)+\mathcal{O}((-s)^{-\frac{3}{2}}),
\end{equation}
Combining \eqref{eq:formula alpha1}, \eqref{eq:P1inf22} and \eqref{eq:R1(t)22}, we obtain the asymptotic expansion of $H_{II}(\Vec{s})$ \eqref{eq:asy Ham for NC PII} as $s\to-\infty$.

Similarly, from \eqref{eq:formula beta}, \eqref{eq:S21=0} and \eqref{eq:R1(t)21}, we have, as $s\to-\infty$,
\begin{equation}
    \beta(\Vec{s})=-i(-s)^{-\frac{1}{4}}\Big(\left(C_{-1}^{(1)}\right)_{21}-\left(C_{-1}^{(1)}\right)_{12}\Big)+\mathcal{O}((-s)^{-1}).
\end{equation}
This, together with \eqref{eq:C-1,12}, \eqref{eq:C-1,21}, \eqref{eq:def m1,m2}, \eqref{eq:formula m1,m2}-\eqref{eq:relation m and n} and \eqref{eq:analyticfac},
we finally obtain \eqref{eq:asy NC PII}. This completes the proof of Theorem \ref{thm5}. 

\subsection{Proof of Theorem \ref{thm:4}}\label{sec:proof FD}
From  \eqref{eq:RH for FD},  \eqref{eq:formula alpha1}, \eqref{eq:def R1(t)} and \eqref{eq:P1inf22}, we have as $s\to-\infty$
\begin{equation}\label{eq:asy alpha}
  \partial_s\ln\det(Id-Ai^2_{\Vec{s}})=-\frac{2\sqrt{2}}{\pi}\ln(1-\lambda_1^2)(1-\lambda_2^2)\sqrt{-s}+\frac{2i}{s}Tr(A_2-\hat{\sigma}_1A_2\hat{\sigma}_1)_{22}+\mathcal{O}((-s)^{-\frac{7}{4}}).
\end{equation}

We proceed to calculate the second term on the right hand-side of the above equation using \eqref{eq:traceA2}.
We start by calculate the first term in  \eqref{eq:traceA2}.
Combining \eqref{eq:analyticfac}, \eqref{eq:C-1,12},  \eqref{eq:def m1,m2},
\eqref{eq:formula m1,m2} and applying the symmetry relations \eqref{eq:symmetry 0} and \eqref{eq:symmetry 1}, we have 
\begin{equation}\label{eq:trA_12^2}
    Tr\left((C_{-1}^{(1)})_{12}^2\right)=\frac{1}{\zeta'(\sqrt{2})^2}\left(\left(m_1\overline{\hat{u}(4\sqrt{2}\tau i)}+m_2\hat{u}(4\sqrt{2}\tau i)\right)^2-2m_1m_2\right),
\end{equation}
By a similar calculation, we find that
\begin{equation}\label{eq:trA_21^2}
    Tr\left((C_{-1}^{(1)})_{21}^2\right)=\frac{1}{\zeta'(\sqrt{2})^2}\left(\left(n_1\hat{u}(4\sqrt{2}\tau i)+n_2\overline{\hat{u}(4\sqrt{2}\tau i)}\right)^2-2n_1n_2\right),
\end{equation}
where    $\zeta'(\sqrt{2})$ is given in \eqref{eq:expansion for comformal}, $\hat{u}(x)$ is defined in \eqref{eq:def hatu}, and $m_j,~n_j,~j=1,2$ are defined in \eqref{eq:formula m1,m2} and \eqref{eq:def n1,n2}, respectively. 
This, together with \eqref{eq:relation m and n}, implies the following expression for the first term in  \eqref{eq:traceA2}
\begin{equation}\label{eq:TrA12-A21}
\begin{split}
    &Tr\left((C_{-1}^{(1)})_{12}^2-(C_{-1}^{(1)})_{21}^2\right)=\frac{1}{\zeta'(\sqrt{2})^2}\Big((m_1\overline{\hat{u}(4\sqrt{2}\tau i)})^2+(\overline{m_1}\hat{u}(4\sqrt{2}\tau i))^2+((m_2\hat{u}(4\sqrt{2}\tau i))^2\\
    &+(\overline{m_2\hat{u}(4\sqrt{2}\tau i)})^2+2(m_1m_2+\overline{m_1m_2})(|\hat{u}(4
    \sqrt{2}\tau i)|^2-1)\Big).
    \end{split}
\end{equation}

Next, we  calculate the  expression for the remaining term in  \eqref{eq:traceA2}.
Using \eqref{eq:trace0}, \eqref{eq:trace1} and combining the definition of the Hamiltonian $H_V(x)$ \eqref{eq:def for HV}, we obtain 
\begin{equation}\label{eq:trace2}
\begin{split}
     Tr\left({\Phi}^{(0)}_1(x)\Lambda\right)&=Tr\bigg({\Phi}^{(1)}_1(x)\Lambda\bigg)\\
     &=-\frac{\nu_1-\nu_2}{2\alpha}\left((-\frac{x}{2}+v-\alpha)(2v-2\alpha)-v(v-2\alpha)(u+\frac{1}{u})\right) \\&=-\frac{\nu_1-\nu_2}{2\alpha}(xH_V(x)+2\alpha^2),
     \end{split}
\end{equation}
with $v(x)$ given in \eqref{eq:PV for v}, and $x=4\sqrt{2}\tau i$.

By inserting   \eqref{eq:traceA2}, \eqref{eq:TrA12-A21} and  \eqref{eq:trace2}  into  \eqref{eq:asy alpha}, we finally obtain \eqref{eq:asy of FD}. This completes the proof of Theorem \ref{thm:4}.

\section*{Acknowledgements} 
We  thank the anonymous referees for their helpful comments and constructive suggestions. 
The work of Shuai-Xia Xu was supported in part by the National Natural Science Foundation of China under grant numbers  12431008, 12371257 and 11971492, and by  Guangdong Basic and Applied Basic Research Foundation (Grant No. 2022B1515020063). Yu-Qiu Zhao was supported in part by the National Natural Science Foundation of China under grant numbers  11971489 and 12371077. 
\appendix
\section{Confluent hypergeometric parametrix}\label{sec:Appendix}
The confluent hypergeometric parametrix $\Phi^{(CHF)}(\zeta)$ is a solution of the following RH problem; see \cite{IK}.
\subsection*{RH problem for $\Phi^{(CHF)}$}
\begin{description}
    \item(1)
    $\Phi^{(CHF)}(\zeta)=\Phi^{(CHF)}(\zeta,a,b)$ is analytic for $\zeta\in\mathbb{C}\backslash\lbrace\cup_{k=1}^{8}\Gamma_k\rbrace$, where the contours are defined below
\begin{equation*}
    \Gamma_1=\mathrm{e}^{\frac{\pi i}{2}}\mathbb{R}^+,~ \Gamma_2=\mathrm{e}^{\frac{3\pi i}{4}}\mathbb{R}^+, ~\Gamma_3=(-\infty,0),~ \Gamma_4=-\mathrm{e}^{\frac{-3\pi i}{4}}\mathbb{R}^+, 
\end{equation*}
\begin{equation*}
    \Gamma_5=-\mathrm{e}^{\frac{-\pi i}{2}}\mathbb{R}^+,~ \Gamma_6=-\mathrm{e}^{\frac{-\pi i}{4}}\mathbb{R}^+,~ \Gamma_7=(0,\infty), ~\Gamma_8=\mathrm{e}^{\frac{\pi i}{4}}\mathbb{R}^+;
\end{equation*}
see Figure \ref{fig:fig9} for an illustration.
\item (2)
$\Phi^{(CHF)}(\zeta)$ satisfies the jump condition
\begin{equation}\label{eq:jump CHF}
    \Phi^{(CHF)}_+(\zeta)=\Phi^{(CHF)}_-(\zeta)J_k, ~z\in\Gamma_k,~ k=1,2,...,8,
\end{equation}
where
\begin{equation*}
    J_1=\begin{pmatrix}
        0 & \mathrm{e}^{-\pi ib}\\
        -\mathrm{e}^{\pi ib} & 0
    \end{pmatrix},
     J_2=\begin{pmatrix}
        1 & 0\\
        \mathrm{e}^{\pi i(b-2a)} & 1
    \end{pmatrix},
     J_3=\begin{pmatrix}
        \mathrm{e}^{\pi ia} & 0\\
        0 & \mathrm{e}^{-\pi ia}
    \end{pmatrix},
    J_4=\begin{pmatrix}
        1 & 0\\
        \mathrm{e}^{\pi i(2a-b)} & 1
    \end{pmatrix},
\end{equation*}
\begin{equation*}
    J_5=\begin{pmatrix}
        0 & \mathrm{e}^{\pi ib}\\
        -\mathrm{e}^{-\pi ib} & 0
    \end{pmatrix},
     J_6=\begin{pmatrix}
        1 & 0\\
        \mathrm{e}^{-\pi i(b+2a)} & 1
    \end{pmatrix},
    J_7=\begin{pmatrix}
        \mathrm{e}^{\pi ia} & 0\\
        0 & \mathrm{e}^{-\pi ia}
    \end{pmatrix},
     J_8=\begin{pmatrix}
        1 & 0\\
        \mathrm{e}^{\pi i(b+2a)} & 1
    \end{pmatrix}.
\end{equation*}
\item (3)
As $\zeta\to\infty$, we have 
\begin{equation}
   \Phi^{(CHF)}(\zeta)=\left(I+\frac{a^2-b^2}{\zeta}\begin{pmatrix}
       1 & \frac{\Gamma(a-b)}{\Gamma(a+b+1)}\\
       -\frac{\Gamma(a+b)}{\Gamma(a-b+1)} & -1
   \end{pmatrix}   
   +\mathcal{O}(\zeta^{-2})\right)\zeta^{-b\sigma_3}\mathrm{e}^{-\frac{1}{2}\zeta\sigma_3}C(\zeta),
\end{equation}
where the branch cut for $\zeta^{-b}$ is taken along the negative imaginary axis such that $\arg \zeta\in
\left(-\frac{\pi}{2},\frac{3\pi}{2}\right)$. And $C(\zeta)$ is the following piece-wise constant matrix 
\begin{equation}
    C(\zeta)=\begin{cases}
        \mathrm{e}^{-\frac{1}{2}\pi ia\sigma_3}\mathrm{e}^{\pi ib\sigma_3}, & \frac{\pi}{2} <\arg\zeta <\pi,\\
        \mathrm{e}^{\frac{1}{2}\pi ia\sigma_3}\mathrm{e}^{\pi ib\sigma_3}, & \pi <\arg\zeta <\frac{3\pi}{2},\\
        \begin{pmatrix}
            0 & -1\\
            1 & 0
        \end{pmatrix}\mathrm{e}^{-\frac{1}{2}\pi ia\sigma_3}, & -\frac{\pi}{2} <\arg\zeta <0,\\
        \begin{pmatrix}
            0 & -1\\
            1 & 0
        \end{pmatrix}\mathrm{e}^{\frac{1}{2}\pi ia\sigma_3}, & 0 < \arg\zeta <\frac{\pi}{2}. 
    \end{cases}
\end{equation}
\begin{figure}
    \centering
    \includegraphics[width = 0.45\textwidth]{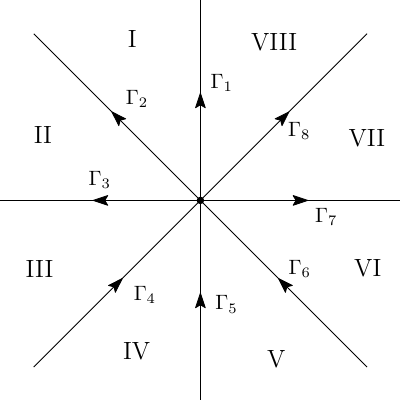}
    \caption{The contours of the RH problem for $\Phi^{(CHF)}$.}
    \label{fig:fig9}
\end{figure}
\item (4)
As $\zeta\to 0,~ \Phi^{(CHF)}(\zeta)$ satisfies the following asymptotic behavior
\begin{equation*}
    \Phi^{(CHF)}(\zeta)=\begin{cases}
        \mathcal{O}\begin{pmatrix}
            \zeta^a & \zeta^{-a}\\
            \zeta^a & \zeta^{-a}
        \end{pmatrix}, & \zeta\in II\cup\ III\cup VI\cup VII,\\
        \mathcal{O}\begin{pmatrix}
            \zeta^a & \zeta^{a}\\
            \zeta^a & \zeta^{a}
        \end{pmatrix}, & \zeta\in I\cup IV\cup V\cup VIII,
    \end{cases}
\end{equation*}
for $a > 0$, 
\begin{equation*}
     \Phi^{(CHF)}(\zeta)=\begin{cases}
        \mathcal{O}\begin{pmatrix}
            1 & \ln\zeta\\
            1 & \ln\zeta
        \end{pmatrix}, & \zeta\in II\cup III\cup VI\cup VII,\\
        \mathcal{O}\begin{pmatrix}
            \ln\zeta & \ln\zeta\\
            \ln\zeta & \ln\zeta
        \end{pmatrix}, & \zeta\in I\cup IV\cup V\cup VIII,
    \end{cases}
\end{equation*}
for $a = 0$, and for $a < 0$
\begin{equation*}
    \Phi^{(CHF)}(\zeta)=\mathcal{O}\begin{pmatrix}
        \zeta^a & \zeta^a\\
        \zeta^a & \zeta^a
    \end{pmatrix}, ~\zeta\in\mathbb{C}\backslash\lbrace\cup_{k=1}^{8}\Gamma_k\rbrace.
\end{equation*}
\end{description}
Let $G$ and $H$ be functions defined in terms of the standard Whittaker function $M(\zeta)$ and $W(\zeta)$ \cite{O} as follows:
\begin{equation}
    G(a,b,\zeta)=\zeta^{-\frac{1}{2}}M_{\frac{1}{2}+\frac{b}{2}-a,\frac{b}{2}}(\zeta),~~~~H(a,b,\zeta)=\zeta^{-\frac{1}{2}}W_{\frac{1}{2}+\frac{b}{2}-a,\frac{b}{2}}(\zeta).
\end{equation}
Then the solution for the above RH problem can be given explicitly, for $\zeta\in II$, we have
\begin{equation}
    \Phi^{(CHF)}_0(\zeta)=\begin{pmatrix}
        \frac{\Gamma(1+a-b)}{\Gamma(1+2a)}G(a+b,2a,\zeta)\mathrm{e}^{-\frac{3}{2}\pi ia} & -\frac{\Gamma(1+a-b)}{\Gamma(a+b)}H(1+a-b,2a,\mathrm{e}^{-\pi i}\zeta)\mathrm{e}^{\frac{1}{2}\pi ia}\\
            \frac{\Gamma(1+a+b)}{\Gamma(1+2a)}G(1+a+b,2a,\zeta)\mathrm{e}^{-\frac{3}{2}\pi ia} & H(a-b,2a,\mathrm{e}^{-\pi i}\zeta)\mathrm{e}^{\frac{1}{2}\pi ia}
    \end{pmatrix}.
\end{equation}
The solution in other sectors can be obtained from the above formula and the jump condition in \eqref{eq:jump CHF}.

\end{document}